\documentclass[11pt,parskip=half]{article}

\usepackage{geometry}                
\geometry{a4paper}                   

\usepackage[utf8]{inputenc}
\usepackage[parfill]{parskip}    
\usepackage{graphicx}

\usepackage{amsmath, amssymb, amsthm, amsfonts}
\usepackage{mathtools}
\usepackage{bbm} 

\usepackage[colorlinks,allcolors=blue]{hyperref}


\usepackage[svgnames]{xcolor}
\usepackage[breakable,theorems]{tcolorbox}
\usepackage[font=small,labelfont=bf]{caption}

\usepackage{libertine}
\usepackage[libertine]{newtxmath}
\usepackage{inconsolata} 
\usepackage{microtype}

\usepackage{booktabs}

\usepackage{csquotes}

\usepackage[backend=biber,giveninits=true, maxbibnames=7]{biblatex}
\bibliography{counting}

\makeatletter
\DeclareRobustCommand*\uell{\mathpalette\@uell\relax}
\newcommand*\@uell[2]{
  \setbox0=\hbox{$#1\ell$}
  \setbox1=\hbox{\rotatebox{10}{$#1\ell$}}
  \dimen0=\wd0 \advance\dimen0 by -\wd1 \divide\dimen0 by 2
  \mathord{\lower 0.1ex \hbox{\kern\dimen0\unhbox1\kern\dimen0}}
}

\newcommand{\photons}{Y}
\newcommand{\ccdvals}{\widetilde{Y}}
\newcommand{\thinned}{Y'}
\newcommand{\electrons}{E}
\newcommand{\outermodel}{M^{\uell}}
\newcommand{\innermodel}{M^\mathrm{s}}
\newcommand{\totalmodel}{M^{}}
\newcommand{\state}{X}
\newcommand{\states}{\mathcal{S}}
\newcommand{\prestate}{X'}

\newcommand{\helper}{\beta}

\newcommand{\singlemodels}{\mathcal{F}^\mathrm{s}_{}}
\newcommand{\multimodels}{\mathcal{F}}

\newcommand{\NN}{\mathbb{N}}
\newcommand{\RR}{\mathbb{R}}
\newcommand{\CC}{\mathbb{C}}

\newcommand{\EE}{\mathbb{E}}
\newcommand{\PP}{\mathbb{P}}
\newcommand{\Var}{\mathrm{Var}}

\newcommand{\geob}{q}
\newcommand{\geop}{p}
\newcommand{\poi}{\mu}

\DeclareMathOperator*{\argmax}{arg\,max}

\tcbset{
  defstyle/.style={fonttitle=\bfseries\upshape, fontupper=\slshape,
  arc=0mm, colback=blue!5!white,colframe=blue!75!black},
  theostyle/.style={fonttitle=\bfseries\upshape, fontupper=\slshape,
  colback=red!10!white,colframe=white},
}

\newtcbtheorem{remark}{Remark}{theorem style=plain, 
                               coltitle=red!40!black,
                               colframe=white,
                               colback=red!5!white,
                               breakable}{rem}

\newtcbtheorem{definition}{Definition}{theorem style=plain, 
                           coltitle=green!35!black,
                           colframe=white,
                           colback=green!5!white,
                           breakable}{def}

\newtcbtheorem{lemma}{Lemma}{theorem style=plain, 
                             coltitle=green!35!black,
                             colframe=white,
                             colback=green!5!white,
                             breakable}{lem}

\newtcbtheorem{theorem}{Theorem}{theorem style=plain, 
                             coltitle=green!35!black,
                             colframe=white,
                             colback=green!5!white,
                             breakable}{thm}

\newtcbtheorem{corollary}{Corollary}{theorem style=plain, 
                             coltitle=green!35!black,
                             colframe=white,
                             colback=green!5!white,
                             breakable}{cor}


\title{Statistical Molecule Counting in Super-Resolution Fluorescence
       Microscopy: Towards Quantitative Nanoscopy}

\author{T.\ Staudt\thanks{Institute for Mathematical Stochastics, 
                          Georg-August-University of Göttingen} 
        \and T.\ Aspelmeier\footnotemark[1] 
        \and O.\ Laitenberger\thanks{Laser-Laboratorium Göttingen e.V.}
        \and C.\ Geisler\footnotemark[2] 
        \and A.\ Egner\footnotemark[2] 
        \and A.\ Munk\footnotemark[1]}


\begin{document}
\maketitle
\begin{abstract}
  \noindent Super-resolution microscopy is rapidly gaining importance as an analytical
  tool in the life sciences. A compelling feature is the ability to label
  biological units of interest with fluorescent markers in (living) cells and to
  observe them with considerably higher resolution than conventional microscopy
  permits. The images obtained this way, however, lack an absolute intensity
  scale in terms of numbers of fluorophores observed.
  In this article we discuss state of the art methods to count such fluorophores
  and statistical challenges that come along with it. In particular, we suggest
  a modeling scheme for time series generated by
  single-marker-switching (SMS) microscopy that makes it possible to quantify
  the number of markers in a statistically meaningful manner from the raw data.
  To this end we model the entire process of photon generation in the
  fluorophore, their passage through the microscope, detection and
  photoelectron amplification in the camera, and extraction of time series from
  the microscopic images. At the heart of these modeling steps is a careful
  description of the fluorophore dynamics by a novel hidden Markov model that
  operates on two timescales (HTMM). Besides the fluorophore number, information
  about the kinetic transition rates of the fluorophore's internal states is
  also inferred during estimation. We comment on computational issues that arise
  when applying our model to simulated or measured fluorescence traces and
  illustrate our methodology on simulated data.
\end{abstract}
\thanks{\small Keywords: molecule counting, 
               super-resolution microscopy, quantitative nanoscopy, 
               biophysics and computational biology, inhomogeneous hidden Markov
               models, statistical thinning.}

\thanks{\small AMS 2010 Subject Classification: primary 62M05; secondary 60J10, 62P10, 62P35}

\section{Introduction}
\label{sec:introduction}

During the past decades cell biology has undergone a profound transition,
shifting its character from qualitative work about basic cell activity to
increasingly quantitative methods to study fine details like the role of
individual proteins for signaling and transport.
This trend was crucially supported by the advancement of super-resolution
microscopy (nanoscopy) techniques, highlighted by the 2014 Nobel prize in
chemistry, which have since become an indispensable tool for modern biomedical
research \cite{hell_1994, betzig_2006, hell_2008, sydor_2015}. While previous
imaging methods for cellular structures were either limited due to a lack of
resolution (like conventional light microscopy) or due to their invasiveness
(like X-ray or electron microscopy), fluorescence nanoscopy enables
high-resolution imaging of living cells to the nanometer scale without the
necessity to prepare samples in ways that prohibit natural biochemical activity.
The limits of super-resolution microscopy, both in principle and application,
are still being explored as progress unfolds at a remarkable pace
\cite{hell_2015,balzarotti2017}.

By now, many initial hurdles for the usage of fluorescence nanoscopy in various
disciplines, like physiology, biology, and medicine, have been overcome:
structures within living prokaryotic and eukaryotic cells are probed on
unprecedented spatial scales in experiments \cite{bakshi2012, laplante2016}, and
popular model organisms like fruit flies and mice are studied in vivo
\cite{schnorrenberg2016, berning2012}.
It is hard to overstate the practical implications of bringing improved imaging
resolution to these fields. From virology \cite{chojnacki2012,
muranyi2013, prescher2015}, immunology \cite{williamson2011, pageon2013} and
neurology \cite{maglione2013, deste2015} to cancer \cite{sharma2012, chen2016}
and plant biology \cite{komis2015}, new ground in fundamental research is
increasingly broken by means of nanoscopy. We exemplarily refer to
\cite{sahl2017} for an in-depth review about the unfolding role of
super-resolution microscopy in cell biology.

The advancement of nanoscopy does not only raise new opportunities for
experimentalists and lab scientists but also for statisticians. They are called
to address a series of challenges that are highly relevant for exploiting the
full potential of state-of-the-art fluorescence microscopy schemes (see
\cite{aspelmeier2015}, where further background on the underlying optics and
the physical modeling of nanoscopy is given from a statistical perspective). 
Indeed, all current implementations of super-resolution microscopy are affected
by the inherently stochastic behavior of fluorescent molecules, or fluorophores,
which (randomly) emit photons if struck by incident light.
In modes of nanoscopy that operate in a coordinate-targeted way
(scanning observation points), like STED \cite{hell_1994, hell_2008} and RESOLFT
\cite{hofmann2005, brakemann2011, grotjohann2011}, this stochasticity often
plays a secondary role.
Still, for quantitative analysis of the images (i.e., counting the actual number
of fluorophores), the photon emission statistics turns out to be central. For
example, individual fluorophores can be identified in STED nanoscopy by
measuring the simultaneous arrival of emitted photons \cite{ta2015, koenig2019}.
In case of RESOLFT, an on-off Markov model for fluorophores has recently been
demonstrated to be capable of extracting the contribution of single fluorophores
in the total signal \cite{frahm2019}.
Other methodologies, like MINFLUX \cite{balzarotti2017, eilers2018}, rely on
a statistical treatment by design. In MINFLUX -- fluorescence nanoscopy via
minimal photon fluxes -- a doughnut shaped laser intensity profile targeted to
different spots on a biological sample is used to excite a fluorophore with
unknown position.
Based on the (approximately Poisson distributed) number of photons measured as
response for each position of the excitation spot, the location of the
fluorophore is inferred statistically, e.g., via maximum likelihood estimation.
Questions regarding the optimal measurement design -- where to place the spots
and which laser profile to use -- naturally fit a Bayesian perspective and are
still open for investigation.

\begin{figure}
  \centering
  \includegraphics[width=0.95\textwidth]{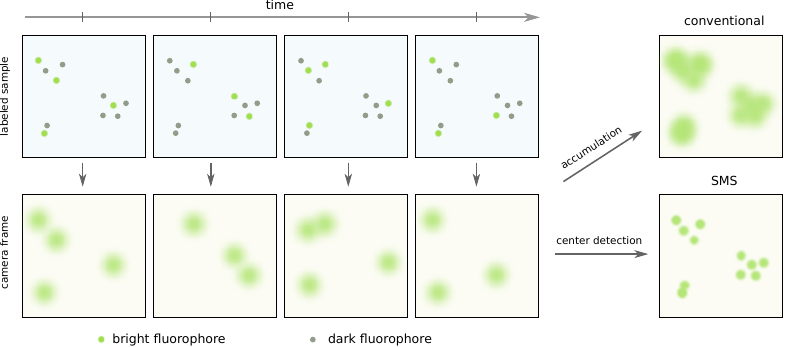}
  \caption{Principle of single-marker-switching microscopy. By exciting
  a biological sample that is labeled with fluorophores (top row) via a suitable
  laser, a temporal series of frames capturing fluorescent activity is recorded
  (bottom row). In each frame, only a sparse selection of fluorophores emits
  photons (green circles). The recorded images are blurry because of inevitable
  diffraction effects. Still, the center positions of the individual diffraction
  limited spots can be determined with higher precision due to spatial sparsity.
  This can be used to create a pointillistic nanoscopy image with superior
  resolution as compared to conventional fluorescence microscopy, where the
  photons emitted by all fluorophores would be recorded at the same time.
  }
  \label{fig:sms}
\end{figure}

The major focus of this article, however, will be another family of nanoscopy
schemes, which exploit the fact that fluorophores have a tendency to blink over time,
meaning that they (randomly) switch between active and inactive states. Under
suitable conditions, fluorophores can thus be observed and localized
individually even when clustered together. Methods that make use of this
switching property are collected under the umbrella term single-marker-switching
(SMS) nanoscopy, and include PALM \cite{betzig_2006}, STORM
\cite{rust_2006}, GSDIM \cite{folling_2008}, or variations thereof
\cite{hess_2006, egner_2007, van_2011}.
SMS nanoscopy works by recording a series of diffraction limited fluorescence
images (or frames) in which only a small number of fluorophores is active and
emits photons during the respective exposure. Spatially close molecules are
therefore likely to be separated in time. As illustrated in Figure
\ref{fig:sms}, the detected fluorophore positions from all frames can be used to
create a pointillistic image with superior resolution. For a video that compares
conventional fluorescence microscopy and SMS nanoscoy of a Rhodamine labeled
microtubular network on the basis of $30\,000$ frames of experimental data, see
\url{{http://stochastik.math.uni-goettingen.de/SMSData}}.

The stochastic nature of the frames recorded during SMS microscopy opens up
a rich and fruitful field for statistical investigation. Indeed, major issues
like the correction of spatial drifts in the image sequence, which
originally required experimental intervention via so-called fiducial markers,
have recently been tackled by fully statistical means -- see \cite{hartmann2016}
and the references therein.
Another emergent topic is the temporal statistical modeling of the fluorophore
dynamics, which also plays a crucial role for the present article. The most
prominent approaches in this context are (hidden) Markov models, see
\cite{messina_2006, rollins_stochastic_2015, tsekouras_2016, hummer2016}.
Recently, in \cite{patel2019}, the photo switching behavior
of fluorophores was characterized by a specifically tailored time-homogeneous
hidden Markov model that reliably improves the estimation of kinetic transition
rates from SMS data when compared to more basic methods, like exponential
fitting of the dwell-times of the fluorophore in active/inactive states
\cite{lin2015}.

Moreover, Markov models allow for the refined extraction of quantitative
information from SMS images, like counting the number of individual fluorophores
in given image regions. This task of \enquote{quantitative nanoscopy} turns out
to be much more involved than it appears at a first glance.
The difficulty is that each fluorophore leaves a sophisticated intensity trace
on the recorded image series, as it only causes a visible spot during frames
in which the fluorophore is active -- else it is invisible.
Consequently, when a spot on the microscopic frames is lit up several times
consecutively, it is not evident how many (close-by) fluorophores are
responsible for the observed intensity pattern. This problem of mapping
fluorescence intensity traces to the number of contributing molecules 
is of high practical relevance and poses a fundamental challenge for the
application of SMS microscopy in quantitative biology.

In recent years, several methods to obtain such fluorophore numbers from
fluorescence images have been proposed \cite{lee_counting_2012,
rollins_stochastic_2015, tsekouras_2016, hummer2016}. They usually rely on the
detection of switching events or on counting the number of steps during
photobleaching (i.e., a fluorophore becoming irreversibly inactive). While these
methods have been successfully applied to count 50 fluorophores and more in
specific circumstances \cite{lee_counting_2012, tsekouras_2016}, they can be
prone to errors when misidentifying switching events or bleaching steps. These
issues are particularly detrimental in the presence of many fluorophores within
a diffraction limited region or when the fluorophore kinetics of bleaching and
switching are fast in comparison to the image acquisition rate.

In this article, we lay the statistical foundations for a new method to estimate
the number of fluorophores on SMS nanoscopy images introduced in
\cite{laitenberger_2018}.
Contrary to established methods, no step identification -- which usually
involves the choice of fluorescence levels or rate thresholds and depends on
bleaching or switching -- is necessary. This becomes possible by the careful
statistical modeling and analysis of the whole imaging process: from photon
generation in the fluorophore to signal amplification in the CCD camera. Our
approach makes use of the full history of the recorded intensity information and
exploits temporal correlations in the signal.
The core component of the model is an accurate description of the fluorophore
behavior in terms of a novel hidden Markov model that operates on two distinct
timescales. It separates the fast dynamics that govern the emission of single
photons during the exposure from the slow dynamics that describe fluorophore
kinetics for states with dwell times longer than the exposure for a single
frame. Although our Markov model is time-inhomogeneous, estimation of the
fluorophore number and other kinetic parameters can be performed by applying the
maximum likelihood principle to a simplified expression of the model's total
likelihood. This simplification is based on a second-order approximation to the
true likelihood and is derived by exploiting spectral properties of the model.
Intriguingly, the inference takes place in an unusual setting: the quantity we
want to estimate -- the fluorophore number -- is a feature of the initial state
of the model and is lost in the asymptotic behavior for long times due to
bleaching. In \cite{laitenberger_2018}, the method has been experimentally
verified on super-resolution images of DNA origami structures. This will be
complemented by simulation results in the present work.

The article is organized as follows. In Section \ref{sec:modelingsteps}, we
provide an overview of the single modeling steps that contribute to our total
model for the fluorescent time traces, and we briefly describe how we estimate
the fluorophore number with it.
Section \ref{sec:fluorophore} contains a detailed treatment of the
fluorophore dynamics. We formulate the hidden two-timescale Markov model
(abbreviated by HTMM) that is based on the description of fluorescent molecules
as Markov chains, acting on different timescales with different
transition rates. 
In particular, we derive expressions for the expectation and (co-)variance of
the number of emitted photons in each frame, and provide results about spectral
properties of the transition matrix which are useful for computational purposes
(Appendix \ref{sec:diag} and \ref{sec:eig}).
In Section \ref{sec:thinning}, we investigate how the number of emitted photons
is transformed through (i) statistical thinning in the microscope and (ii)
processing and amplification in the detector. To assist readability, the
central notation that is introduced in sections \ref{sec:fluorophore} and
\ref{sec:thinning} is surveyed in Table \ref{tab:notation} on page
\pageref{tab:notation}.
In Section \ref{sec:estimation}, we then introduce the simplified
\enquote{pseudo log-likelihood}, comment on numerical issues for maximum
likelihood estimation based on it, and present estimation results for simulated
fluorescence intensity traces. Finally, in Section \ref{sec:alexa}, we
specialize our general model to the commonly used fluorophore Alexa 647. 
Section \ref{sec:outlook} contains a brief outlook that emphasizes open questions
related to our work.

\section{Modeling and Estimation}
\label{sec:modelingsteps}
Super-resolution microscopy with single marker switching (SMS) relies on
a series of fluorescence microscopy images, or frames, with only a small
fraction of active fluorophores per image. This way, spatially close
fluorophores are separated in time since they are unlikely to emit photons
simultaneously. The resulting frames are used to localize the marker molecules
with a superior precision on the nanometer scale \cite{sydor_2015}.
The imaging is affected by the quantum physical behavior of the fluorophore,
which leads to \emph{switching} and \emph{bleaching}, and by a series of
subsequent manipulations of the emitted photons until they are detected by the
camera and transformed to digital values
\cite{aspelmeier2015}. Each step in this chain (depicted in Figure
\ref{fig:steps}) modifies the original signal -- photons emitted by the
fluorophore -- in a characteristic way and has to be taken into account. In the
following, we present an outline of our approach to estimate the fluorophore
number based on time traces extracted from a series of $T$ camera images. More
detailed considerations follow in subsequent sections.

\begin{figure}[bt!]
  \centering
  \includegraphics[width=0.9\textwidth]{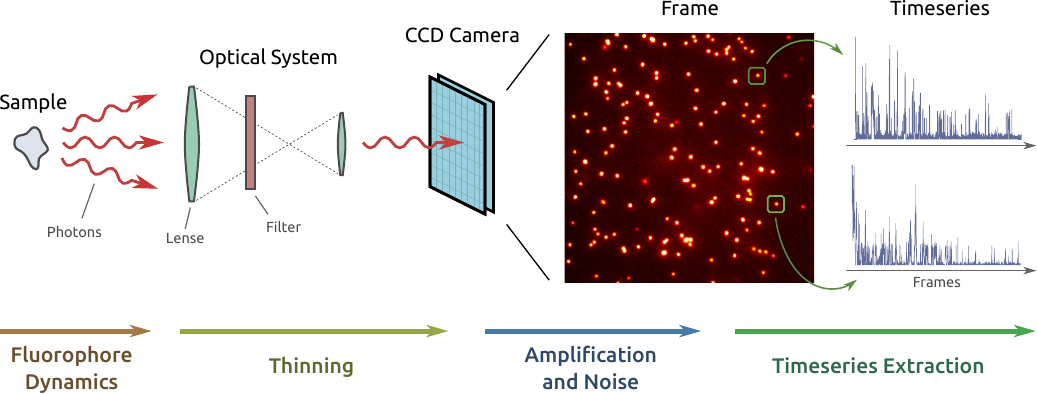}
  \caption{Overview of the modeling steps. During exposure, fluorophores within
    a labeled biological sample emit photons due to laser excitation.
    With a certain probability, these photons pass
    through the microscope (optical system) and are registered by a CCD camera.
    Over the course of the experiment, a series of camera frames is recorded.
    Summing up the intensities over a region of interest (ROI, green
    boxes) for each image yields a time series that captures the fluorescent
    activity in the respective ROI. 
  }
  \label{fig:steps}
\end{figure}

\paragraph{Single fluorophore.} 
Fluorophore dynamics is successfully modeled by Markov
chains \cite{messina_2006, patel2019}. The states of these chains roughly
correspond to quantum physical states of the molecule (see Remark
\ref{rem:compoundstates} below), which can exhibit very diverse lifetimes.
The respective transition rates are governed by quantum mechanical kinetics
that sensitively depend on the biochemical properties of the fluorophore's
neighbourhood in the sample. Two of the states have a distinguished role in our model:
the \emph{bright} state, in which absorption and emission of photons is
possible, and the \emph{bleached} state, in which dyes have
irreversibly lost their fluorescence functionality.  Additionally, a number
of temporary \emph{dark} states, which, e.g., correspond to triplet or redox
states of the fluorophore \cite{vogelsang2010}, are usually necessary for
a faithful description.
\begin{remark}{}{compoundstates}
  In Markov chains, states with the same transition rates can be combined into
  a single state without losing the Markov property (see Appendix
  \ref{sec:lumped} for details).
  A reasonable fluorophore model does therefore not have to include every
  possible quantum physical state explicitly (like fine-structured rotational
  and vibrational substates). Rather, it only has to capture classes of states
  with similar dwell times and transition behavior. The number of such classes
  can be estimated from the data.
\end{remark}

The phenomenon of fluorophores jumping between the bright and temporary dark
states is denoted as \emph{blinking} or \emph{switching}.  In our generic model
for fluorescence, we finely resolve the fast dynamics inherent to the bright
state, like single photon emissions, and model it as a Markov chain in its own
right.  This gives rise to a description that operates on two different time
scales:
a fast \emph{inner model} that runs during the exposure time, and a slow
\emph{outer model} that captures states that are expected to persist over
several frames. Figure \ref{fig:model-alexa} depicts our choice of states for
the fluorophore Alexa 647, which we investigate more detailed in Section
\ref{sec:alexa}. 

\begin{figure}[t!b]
  \centering
  \includegraphics[width=0.635\linewidth]{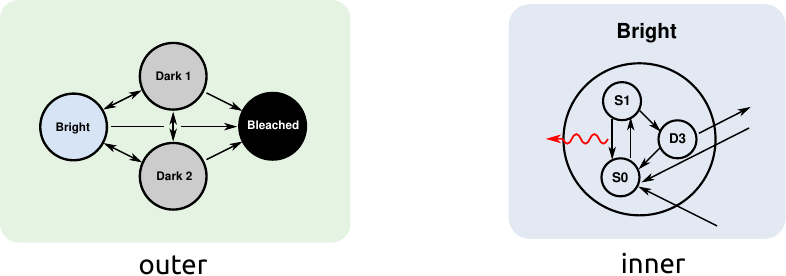}
  \vspace{0.175cm}
  \caption{Exemplary inner and outer models that are used to describe the
    dynamics of the fluorophore Alexa 647 (see \cite{laitenberger_2018}). The
    wiggling red arrow indicates that transitions from the singlet state
    $\mathrm{S}_1$ to the ground state $\mathrm{S}_0$ cause the emission of
    a photon. The state $\mathrm{D}_3$ is a short-lived dark state.}
  \label{fig:model-alexa}
\end{figure}

\begin{figure}
  \centering
  \vspace{0.6cm}
  \includegraphics[width=0.735\linewidth]{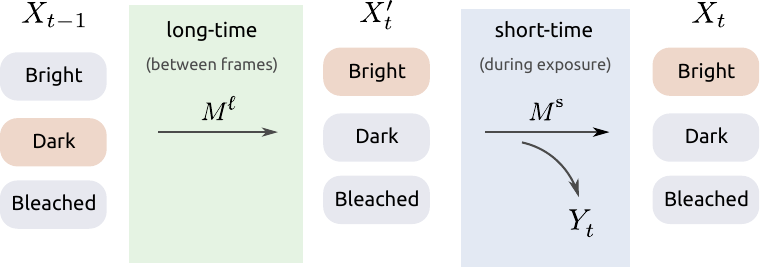}
  \vspace{0.15cm}
  \caption{Single fluorophore model. At time $t-1$, the
    fluorophore can be in one of several distinct outer states $\state_{t-1}$.
    When transitioning from $\state_{t-1}$ to $\state_t$, we apply the
    long-time matrix $\outermodel$ to describe the dynamics that takes
    place between separate frames, and then apply the short-time matrix
    $\innermodel$ for the effects of the fast dynamics on the outer state
    during exposure. In the latter step, the fluorophore emits
    a number $Y_t$ of photons in frame $t$.}
  \label{fig:model-timeline}
\end{figure}

Let $\prestate_t$ denote the
outer state of the fluorophore directly before the $t$-th exposure, and let
$\state_t$ denote its state after the exposure. The transition from
$\state_{t-1}$ to $\state_t$ is depicted in Figure \ref{fig:model-timeline}.
It is modeled through applying one step of the outer dynamics on
$\state_{t-1}$, which yields $\prestate_t$, and then running the inner
model, which changes $\prestate_t$ to $\state_t$, and which also yields
a number $\photons_t$ of emitted photons during frame $t$. The distribution
of $\photons_t$ depends on both $\prestate_t$ and $\state_t$. A complete
description of our model for fluorophore dynamics is therefore given by
a transition matrix $\outermodel$ for the long-time step, a matrix
$\innermodel$ for the short-time step, and the distribution $p_{xx'}$ of
$\photons_t$ conditioned on $\state_t = x$ and $\prestate_t = x'$, which we
assume to be stationary. 
The combined chain 
\begin{equation*}
  \big(\state_0, \prestate_1, \state_1, \prestate_2, \ldots\big)
\end{equation*}
of outer states is an inhomogeneous Markov chain with alternating transition
matrices $\outermodel$ and $\innermodel$, while the individual chains
$(\state_t)_t$ and $(\prestate_t)_t$ are homogeneous with transition matrices
$\innermodel\outermodel$ and $\outermodel\innermodel$, respectively.

In experiment, the states $\state_t$ and $\prestate_t$ cannot be observed
directly. We only obtain outputs of the measurement device (e.g., a CCD camera)
generated through the $Y_t$ emitted photons. This makes our ansatz a hidden
Markov model. 
In Section \ref{sec:fluorophore}, we derive the generating function of the
process $Y = (\photons_t)_{t=1}^T$ and obtain its expectation $\mu$ and the
covariance $\Sigma$, which are eventually used to estimate the number of
fluorophores in Section \ref{sec:estimation}.

\begin{remark}{notation}{notation}
  We refer to the model outlined above as hidden two-timescale Markov model, or
  HTMM. The observable part of this model, $Y_t$, denotes the number of photons
  that are emitted in the time interval between $\prestate_t$ and $\state_t$.
  One can therefore think of $\prestate_t$ as the state $\state_{t^-}$ directly
  prior to $\state_t$, and $Y_t$ as an observation that accumulates from $t^-$
  to $t$. 
\end{remark}

\paragraph{Microscope and camera.}
Photons emitted by fluorescent dyes are directed randomly and may fail to enter
the microscope, such that they are lost for the experiment. In addition,
a photon may be absorbed by lenses, filters, or mirrors within the
optical path.
Consequently, each emitted photon has a probability $p_\mathrm{c} < 1$ to reach
the camera. When it reaches the camera, the position of the photon on the CCD
sensor is randomized due to diffraction: light originating from
a point source is spread to a blurred spot on the detector interface. 
From the viewpoint of classical physics, where light is modeled as a wave of
electromagnetic radiation, this blurring is described by a convolution of the
light intensity distribution with a nonnegative point spread function $h$ (see
\cite{goodman_introduction_1996, born_principles_1999} for the underlying
physics and \cite{aspelmeier2015} for a treatment in the context of
statistics). 
In the quantum mechanical interpretation of light as photons,
$h(z)$ denotes the probability that a photon emitted at the origin of the sample
incides at pixel $z$ on the detector, which leads to a multinomial distribution
of incident photons to pixel locations. When the photon arrives at a pixel $z$,
it is absorbed with a certain probability $p_\mathrm{a}$ and a so-called photo
electron, i.e., an electron ejected from the detector material due to energy
transfer from the photon, is released. The total chance for a photon to reach
the detector at any pixel and be absorbed is denoted by $p_\mathrm{d}
= p_\mathrm{c} \, p_\mathrm{a}$.

We call a region $R$ on the image that captures the blurred spot created by
one (or several close-by) fluorophores a region of interest (ROI).
The total number $\thinned_t$ of detected photons in $R$ is given by
\begin{equation*}
  \thinned_t 
    = \sum_{z\in R}^{} \thinned_{t,z} 
\end{equation*}
where $\thinned_{t,z}$ is the number of photons detected at pixel $z\in R$.
Since we assume that the electrical circuits underlying individual pixels are
identical in their properties, we can ignore the spatial distribution of
photons within one ROI and work with $\thinned_t ~\sim~ \mathrm{Bin}(\photons_t,
p_\mathrm{d})$ directly. This amounts to a binomial thinning of $\photons_t$
\cite{harremoes2010}.

Since the electrical charge of a photoelectron is too small to be
detected reliably, cameras employ an electron multiplying system that
operates stochastically \cite{robbins_noise_2003,hirsch_stochastic_2013}.
Let $\mathcal{D}$ denote the distribution for the number of electrons after
amplification of the incoming electron in the CCD. Then the final camera
output value $\ccdvals_t$, when summed over $R$, is given by 
\begin{equation} \label{eq:camera}
  \ccdvals_{t} = c\sum_{k=1}^{\thinned_t} U_{t,k} + \epsilon_t + o, 
\end{equation}
with $U_{t,k}\sim\mathcal{D}$ i.i.d.\ for
all $t$ and $k$. The constant factor $c > 0$ results from the analog-to-digital
conversion of the accumulated electron charge in the pixels, and the random
variables $\epsilon_t$ collect different contributions of inevitable additional
randomness -- like background photons, thermal electrons in the electronics, or
readout noise. Additionally, a constant positive offset $o$ is added to the
camera output to avoid noise induced fluctuations into the negative domain.

\paragraph{Multiple fluorophores.} 
Each fluorophore in the bright state produces a diffraction-limited spot during
exposure, as seen in the frame shown in Figure \ref{fig:steps}.
The major difficulty for estimating the number of fluorophores reliably results
from the fact that several fluorophores can contribute to the same spot if their
mutual distance is small and if they are bright simultaneously. 
The core contribution of this article is to use the information from a temporal series
of frames to estimate the total number $m$ of fluorophores that are present in
a given region of interest $R$.
\begin{remark}{}{quantity_of_interest}
  The unknown number $m$ of fluorophores is the major quantity of interest to be
  estimated in a single ROI. By combining estimates for $m$ from different ROIs,
  one can obtain quantitative information on the spatial fluorophore density in
  the entire image. 
\end{remark}
A crucial assumption that we use to model multi-fluorophore systems is
\emph{statistical independence}, i.e., that no (relevant) physical interactions
between the single fluorophores take place. We also assume that all $m$
fluorophores are \emph{identical} in their physical behavior, meaning that they
can be described by a common fluorophore model with a common set of parameters.
Then, the total number of emitted photons is given by the sum of $m$ independent
copies $\photons^1_t, \ldots, \photons^m_t$ of the process $\photons_t$,
\begin{equation}\label{eq:sum_over_R}
  \photons^{(m)}_t = \sum_{k=1}^{m} \photons^k_t.
\end{equation}
Similarly, the time series obtained by summing the CCD values over the region
$R$ is composed of $m$ independent versions $\ccdvals_{t}^1, \dots,
\ccdvals_{t}^m$ of
$\ccdvals_{t}$. Therefore, the total signal we observe is encoded in the
process
\begin{equation*}
  \ccdvals^{(m)}_t = \sum_{k=1}^{m} \ccdvals_{t}^k.
\end{equation*}

\begin{remark}{}{iid_assumptions}
The assumptions of independence and identical distribution are approximations
that are justified for many typical experimental situations.  Still, they can be
violated, e.g., if the spatial distance of neighbouring fluorophores is very
small ($<10\,\mathrm{nm}$). Then, interactions like FRET (Förster Resonance
Energy Transmission) become likely. The experimental study
\cite{laitenberger_2018} highlights that our model indeed produces inconsistent
results in this case. 
\end{remark}

\paragraph{Estimation.} 
Our objective is to estimate $m$ from a realization $y$ of the process
$\ccdvals^{(m)} = \big(\ccdvals^{(m)}_t\big)_t$. Besides $m$, there are several
other parameters that may have to be estimated, like the transition rates in the
HTMM, or the initial distribution of the outer state. These
parameters depend sensitively on details of the experimental setting, like the
fluorophore type, the biochemical conditions in the sample, or the applied laser
wavelengths and intensities. Some of these properties may vary from ROI to ROI.
Furthermore, different types of fluorophores may even require different inner
or outer models. This poses an interesting problem in model selection -- which
we will, however, not address in this article.

Since the number of unknown parameters, which we call $\gamma$ for
the moment, is typically small (e.g., at most 16 for our Alexa 647 model with
three dark states, see Section \ref{sec:alexa}), it is near at hand to employ
maximum likelihood estimation.
However, the computation of the MLE requires
that we can evaluate the log-likelihood $l_y(\gamma)$ of the full model, which
is unfeasible for two reasons: first, the number of terms in $l_y(\gamma)$
turns out to be overwhelming even for a moderate number $T$ of frames; and 
secondly, we lack information about how the signal is transformed in the camera,
since manufacturers usually only provide information about the first two moments
of the camera-statistics $\mathcal{D}$.
We therefore choose an approach that is based on approximating $\ccdvals^{(m)}$ by
a Gaussian process with the same expectation $\mu = \mu(\gamma)$ and covariance
$\Sigma = \Sigma(\gamma)$ as
$\ccdvals^{(m)}$. This leads to the \emph{pseudo log-likelihood}
\begin{equation*}
  \tilde{l}_y(\gamma) = -\frac{1}{2}\big[(y - \mu) \,
  \Sigma^{-1}\, (y - \mu) + \log\det\Sigma \big],
\end{equation*}
and parameter estimation reduces to maximizing $\tilde{l}_y(\gamma)$,
which is still challenging but becomes numerically feasible. In particular, we
have fewer degrees of freedom: for Alexa 647, only 11 (compared to 16)
independent parameters are necessary to fully describe $\mu$ and $\Sigma$, since
the first two moments do not rely on all transition probabilities of the
HTMM individually.
In Section \ref{sec:estimation}, we address a number of subtleties that come
along with this approach, like nonlinear constraints on the parameter space
and practical complications with the numerical optimization. 
As a proof-of-concept of our approach, we also include exemplary
estimation results on simulated data.

We stress that other methods of estimation are certainly of interest, too. In
particular, a Bayesian approach becomes feasible when prior knowledge on
parameters is available. We do, however, not pursue this issue further in this
article and contain ourselves to (pseudo) maximum likelihood based statistical
analysis.

\section{Fluorophore Dynamics}
\label{sec:fluorophore}
The dynamics of fluorophores is at the heart of our model for the imaging
process in Figure \ref{fig:steps}.
Due to the i.i.d.\ assumption when modeling multiple fluorophores (see Remark
\ref{rem:iid_assumptions}), this effectively amounts to modeling a single
fluorophore.
In this section, we will treat the short-time dynamics as a ``black box'' with
as few assumptions as is necessary for our intentions. Only later, when we
specialize the model to the fluorophore Alexa 647 in Section \ref{sec:alexa}, we
elaborate in detail on a concrete inner model. 
As a guidance for the derivations that follow, consulting Table
\ref{tab:notation} on page \pageref{tab:notation} might prove helpful, as it
summarizes the relevant notation that is introduced in this and the next
section.

\paragraph{Fluorophore model.}
The outer state space of a fluorophore is described by one bright and $r \in\NN$
dark states, including the bleached one. See Figure \ref{fig:model-alexa} for an
example.
We name the state space $\states$ and denote its elements by $x\in\states
= \{0, \dots, r\}$, with $x=0$ being the bright and $x=r$ being the bleached
state. On $\states$, we consider two coupled time-discrete Markov chains
$(\state_t)_{t=0}^T$\label{state} and $(\prestate_t)_{t=1}^T$\label{prestate},
where $T\in\NN$ is the number of frames. 
The evolution of $\state_t$ and $\prestate_t$ is given by the (stationary)
transition matrices $\outermodel$ and $\innermodel$, where
\begin{subequations}
\begin{equation}\label{eq:deftransition}
  \outermodel_{x'x} = \PP\big(\prestate_t = x'\,\big|\,\state_{t-1} = x\big)
  \qquad\text{and}\qquad
  \innermodel_{xx'} = \PP\big(\state_t = x\,\big|\,\prestate_t = x'\big)
\end{equation}
for $x,x'\in\states$.
We interpret $\prestate_t$ as state of the fluorophore directly before exposure
in frame $t$, and $\state_t$ as state directly after exposure (see Remark
\ref{rem:notation} and Figure \ref{fig:model-timeline}).
The full transition matrix for $\state_t$ is given by
\begin{equation}\label{eq:deftotaltransition}
  \totalmodel = \innermodel\outermodel.
\end{equation}%
\label{eq:defalltransition}%
\end{subequations}%
The transition of the fluorophore from $X'_t$ to $X_t$ during exposure is
governed by the inner model, which also determines the number
$\photons_t$\label{photons} of photons that are emitted in the corresponding
frame. We characterize the photon statistics of the inner model by the
conditional distributions
\begin{equation}\label{eq:photonstatistics}
  p_{xx'}(y) 
    = \PP\big(\photons_t = y\,\big|\,\state_t 
    = x, \prestate_t = x'\big)
\end{equation}
for $y\in\mathbb{N}_0$, which we assume to be time-stationary. The probabilities
in \eqref{eq:photonstatistics} are collected in the matrix $P(y)
= (p_{xx'})_{x,x'\in\states}$. We furthermore use the symbol $\nu
= (\nu_x)_{x\in\states}$\label{initdist} to denote the initial distribution, i.e., the
distribution of $X_0$.  In total, specification of $\innermodel$, $\outermodel$,
$P$, and $\nu$ completely defines a single-fluorophore model. 

\begin{definition}{}{htmm}
  Any observable process $Y = (Y_t)_t$ as constructed above, with conditional
  distributions $P$ in \eqref{eq:photonstatistics}, transition matrices $\innermodel$
  and $\outermodel$ defined in \eqref{eq:defalltransition}, and initial (hidden)
  distribution $\nu$, is denoted as hidden two-timescale
  Markov model, or HTMM.
\end{definition}

\begin{lemma}{}{likelihood}
  The likelihood of an HTMM under observation of a time series $y
  = (y_t)_{t=1}^T$ is 
  \begin{equation}
  l_y(\innermodel, \outermodel, P, \nu) 
  = \sum_{x\in\states^{T+1}}\left(\prod_{t=1}^T\, 
    \sum_{x'\in\states} p_{x_{t}x'}(y_t)\innermodel_{x_{t}x'}
    \outermodel_{x'x_{t-1}}\right)\nu_{x_0},
  \label{eq:likelihood}
  \end{equation}
where the outer sum covers all tuples $x = (x_0, x_1, \dots, x_T)
\in\states^{T+1}$. 
\end{lemma}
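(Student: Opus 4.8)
The plan is to derive the likelihood formula by conditioning on the full trajectory of hidden outer states $(X_0, X_1, \dots, X_T)$ and then marginalizing. First I would write the likelihood as the total probability of observing $y = (y_1, \dots, y_T)$, expanding over all hidden paths: $l_y = \sum_{x\in\states^{T+1}} \PP(Y_1 = y_1, \dots, Y_T = y_T, X_0 = x_0, \dots, X_T = x_T)$. The joint probability of a path together with the observations should factor according to the structure depicted in Figure \ref{fig:model-timeline}: the chain evolves as $X_{t-1} \to X'_t \to X_t$ with $\outermodel$ governing the first transition and $\innermodel$ the second, and $Y_t$ is emitted conditionally on the pair $(X_t, X'_t)$ with law $p_{x_t x'}(y_t)$. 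The key observation is that the intermediate states $X'_t$ are not part of the indexing path, so they must themselves be summed out.

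The main technical point is to justify, via the (inhomogeneous) Markov property of the chain $(X_0, X'_1, X_1, X'_2, \dots)$, that the joint density of a path and the observations telescopes into a product over frames. Concretely, I would argue that for a fixed outer path $x = (x_0, \dots, x_T)$,
\begin{equation*}
  \PP\big(X_0 = x_0\big)\prod_{t=1}^T \PP\big(Y_t = y_t, X_t = x_t \,\big|\, X_{t-1} = x_{t-1}\big)
  = \nu_{x_0}\prod_{t=1}^T \sum_{x'\in\states} p_{x_t x'}(y_t)\,\innermodel_{x_t x'}\,\outermodel_{x' x_{t-1}},
\end{equation*}
where the inner sum over $x'\in\states$ arises precisely from marginalizing over the unobserved pre-state $X'_t$, using the definitions \eqref{eq:deftransition} of $\outermodel_{x'x_{t-1}} = \PP(X'_t = x' \mid X_{t-1} = x_{t-1})$ and $\innermodel_{x_t x'} = \PP(X_t = x_t \mid X'_t = x')$, together with the stationary conditional photon law \eqref{eq:photonstatistics}. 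The factorization across $t$ follows because, conditional on $X_{t-1}$, the triple $(X'_t, X_t, Y_t)$ is independent of everything before time $t-1$; summing the resulting product over all $x\in\states^{T+1}$ yields \eqref{eq:likelihood}.

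I expect the only real obstacle to be bookkeeping: making the conditional-independence statement precise enough that the per-frame factors genuinely decouple, in particular checking that $Y_t$ depends on the hidden chain only through $(X'_t, X_t)$ and not, say, on $Y_{t-1}$ or earlier states given those two — this is exactly the hidden-Markov structure asserted in Definition \ref{def:htmm}. Once that is in hand, the identity is a routine expansion of nested sums, and no spectral or analytic input is needed. It may be worth remarking that the inner sum over $x'$ could equivalently be written as a matrix entry $\big(P(y_t)\circ(\innermodel\,\text{-weighted})\big)$, but I would keep the explicit scalar form to match \eqref{eq:likelihood} and to make the derivation self-contained.
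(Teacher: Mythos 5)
Your proposal is correct and follows essentially the same route as the paper: write the one-step probability $\PP(Y_t=y_t, X_t=x_t\mid X_{t-1}=x_{t-1})$ by marginalizing over the intermediate pre-state $X'_t$, then use the Markov factorization over $t$ and sum over all hidden paths. The conditional-independence point you flag is exactly the structure the paper's proof uses implicitly, so no further work is needed.
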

\begin{proof}
A single transition step under observation of $\photons_t = y$ photons is
described by
\begin{equation*}
  \PP\big(\photons_t = y, \state_t = x \,\big|\,\state_{t-1} = z\big)
    = \sum_{x'\in\states} p_{xx'}(y)\innermodel_{xx'}\outermodel_{x'z},
\end{equation*}
and the probability to observe the full time series $(y_t)_{t=1}^T$ can be
written as
\begin{equation*}
  \PP\big(\photons_t = y_t ~\text{for all}~t\big) 
  = \sum_{x\in\states^{T+1}}\left(\prod_{t=1}^T\, 
    \PP\big(\photons_t = y_t, X_t = x_t\,\big|\,X_{t-1} 
   = x_{t-1}\big)\right)\nu_{x_0} \nonumber.
\end{equation*}
Combining these two equations yields the stated result.
\end{proof}

\begin{remark}{}{complexity}
  The outer sum in equation \eqref{eq:likelihood} contains $(r+1)^{T+1}$ terms,
  and we are not aware of a way to significantly simplify the expression
  under general circumstances. Since there is at least one dark outer state
  (meaning $r \ge 1$) and the number $T$ of frames for SMS microscopy is often
  well above 1000, it is impossible to directly evaluate the likelihood $l_y$
  numerically ($2^{1000} > 10^{300}$).
\end{remark}

\begin{remark}{}{}
We will eventually extend the HTMM in Definition \ref{def:htmm} by (i) the
generalization to multiple i.i.d.\ fluorophores, and (ii) the additional
statistical modeling of the imaging process (see Section \ref{sec:thinning}).
These enhanced models will for convenience also be referred to as HTMMs, since
the context usually clarifies which specific fluorophore model is meant.
\end{remark}

\paragraph{Model restrictions.} 
We introduce several restrictions on our model in order to reflect physical
fluorophore properties and to make the analysis of \eqref{eq:likelihood} viable.
The most evident constraint is that the bleached state
$x = r$ acts as an absorbing state for both $\innermodel$ and
$\outermodel$. We also assume that the fluorophore can leave the bright
state $x = 0$ only during application of the inner model.
Conversely, we suppose that a fluorophore that is not in its bright state is
unaffected by the inner model. 
With these restrictions in place, $\outermodel$ and $\innermodel$ can be
brought in the respective parametric forms
\begin{equation}\label{eq:modelmatrices}
  \outermodel = \begin{pmatrix}
    1 & q_{01} & \cdots & q_{0(r-1)} & 0 \\
    0 & q_{11} & \cdots & q_{1(r-1)} & 0 \\
    \vdots & \vdots & \ddots & \vdots & \vdots\\
    0 & q_{(r-1)1} & \cdots & q_{(r-1)(r-1)} & 0 \\
    0 & q_{r1} & \cdots & q_{r(r-1)} & 1 \\
  \end{pmatrix}
  \quad~~\text{and}\quad~~
  \innermodel = \begin{pmatrix}
    q_{00} & 0 & \cdots & 0 & 0 \\
    q_{10} & 1 & \cdots & 0 & 0 \\
    \vdots & \vdots & \ddots & \vdots & \vdots\\
    q_{(r-1)0} & 0& \cdots &  1 & 0 \\
    q_{r0} & 0& \cdots &  0 & 1 \\
  \end{pmatrix},
\end{equation}
where we defined transition probabilities $q_{xz}\in[0, 1]$ for $x\in\states$
and $z\in\states\smallsetminus \{r\}$. The restrictions also imply that no
photons are emitted if the exposure starts in a non-bright state, meaning that
$p_{xx'}(0) = 1$ for $x'\neq 0$. In contrast, if the exposure begins in the
bright state $x' = 0$, the fluorophore will produce photons and may switch
to any other state $x\in\states$ until the end of the exposure. 

An additional assumption that is required to make the model analytically
tractable is that the distribution $p_{x0}$ does not depend on the final state
$x$ if the fluorophore exits the bright state. This means that
\begin{equation}\label{eq:commonexit}
  p_{x0} = p_{10} \quad\text{if}\quad x > 0.
\end{equation}

\begin{remark}{}{commonexit}
Condition \eqref{eq:commonexit} can be understood as assuming a common
\enquote{exit state} in the inner model that is the only possibility for the
fluorophore to become dark during exposure. From this state, it can then jump to
all dark states of the outer model as soon as the frame ends. Note that this
exit state does not have to correspond to a (single) physical state: it could
cover several states with similar exit conditions (see Remark
\ref{rem:compoundstates} in this context).
\end{remark}

Under these constraints, the conditional distributions of the photon statistics
defined in \eqref{eq:photonstatistics} read
\begin{equation}
  \label{eq:great_p}
  P(y) = \begin{pmatrix}
    p_{00}(y) & \delta(y) & \cdots & \delta(y) \\
    p_{10}(y) & \delta(y) & \cdots & \delta(y) \\
    \vdots    & \vdots    & \ddots & \vdots    \\ 
    p_{10}(y) & \delta(y) & \cdots & \delta(y) \\
  \end{pmatrix},
\end{equation}
where $\delta$ denotes the Dirac measure with point mass $1$ on $y = 0$.
Fluorophore models that satisfy conditions \eqref{eq:modelmatrices} and
\eqref{eq:great_p} are collected in the set $\singlemodels$
of (physical) single-fluorophore models.

\paragraph{Generating function.} 
Even when exploiting the additional constraints for $\singlemodels$, the process
$\photons = (\photons_t)_t$ of emitted photons remains very complex. In
particular, it is hard to use the HTMM for straightforward inference. Maximum
likelihood estimation, e.g., is impossible for real world datasets due to
the prohibitive expense of calculating the likelihood, see Remark
\ref{rem:complexity}.  We can, however, use the moment generating function of
$\photons$ and the specific structure of $\singlemodels$ to calculate
expressions for the expectations $\mu_t = \EE[ \photons_t]$ and the covariance
$\Sigma$,
\begin{equation*}
  \Sigma_{tt'} = \EE[(\photons_t - \mu_t) (\photons_{t'} - \mu_{t'})],
\end{equation*}
where $t,t' = 1, \dots, T$. These moments carry relevant information and allow
recovering the number of fluorophores in case of multiple molecules (see Section
\ref{sec:estimation}). For preparation, we first look at the moment generating
matrix $G(s)$ associated to $P(y)$ and find
\begin{equation}\label{eq:great_g}
  G(s) = \begin{pmatrix}
    G_{00}(s) & 1 & \cdots & 1 \\
    G_{10}(s) & 1 & \cdots & 1 \\
    \vdots & \vdots & \ddots & \vdots \\
    G_{10}(s) & 1 & \cdots & 1 \\
  \end{pmatrix},
\end{equation}
where $G_{00}(s) = \EE\big[e^{s\photons_t} \,|\,\state_t = \prestate_t = 0\big]$
and $G_{10}(s) = \EE\big[e^{s\photons_t} \,|\,\state_t = 1, \prestate_t = 0\big]$.
Then, we define the auxiliary matrix 
\begin{equation}\label{eq:matrix_H}
  H(s) = \big(G(s) \circ \innermodel\big) \outermodel,
\end{equation}
where $\circ$ denotes the entry-wise (Hadamard) product. In the following, we
will only consider inner models for which the expectations $G_{00}$ and
$G_{10}$ exist and are finite in some vicinity $(-\epsilon, \epsilon)$ of zero
for $\epsilon > 0$. In particular, this implies that all derivatives of $H$
exist at $s = 0$ \cite{feller2008}.
\begin{lemma}{}{momentgenerating}
  The moment generating function $G_Y$ of the process $Y$ is
  \begin{equation} \label{eq:generating_function}
    G_\photons(\tau) = (1, \dots, 1) \,H(\tau_T)\cdots H(\tau_1)\,\nu
  \end{equation}
  for $\tau = (\tau_1, \dots, \tau_T) \in(-\epsilon,\epsilon)^T$.
\end{lemma}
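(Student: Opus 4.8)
The plan is to compute the joint moment generating function $G_Y(\tau) = \EE\bigl[\exp(\sum_t \tau_t Y_t)\bigr]$ by conditioning on the hidden path and recognizing the resulting sum as a product of transfer matrices. First I would write out $G_Y(\tau)$ as a sum over all state tuples $(x_0,\dots,x_T)\in\states^{T+1}$ and all intermediate states $(x'_1,\dots,x'_T)\in\states^T$, weighting each term by $\nu_{x_0}$, by the transition probabilities $\outermodel_{x'_t x_{t-1}}$ and $\innermodel_{x_t x'_t}$, and by the conditional MGF of $Y_t$ given $(\state_t,\prestate_t)=(x_t,x'_t)$, which is exactly the $(x_t,x'_t)$-entry $G_{x_t x'_t}(\tau_t)$ of the matrix $G(\tau_t)$ in \eqref{eq:great_g}. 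This is just the MGF analogue of the likelihood formula \eqref{eq:likelihood}, replacing $p_{x_t x'}(y_t)$ by $G_{x_t x'}(\tau_t)$ and summing over $y_t$; the exchange of sum and expectation is justified since for $\tau_t\in(-\epsilon,\epsilon)$ the relevant expectations are finite by the standing assumption on $G_{00}$ and $G_{10}$.

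Next I would collapse the inner sum over $x'_t$ for each fixed $t$. Fixing $x_{t-1}$ and $x_t$, the contribution is $\sum_{x'_t\in\states} G_{x_t x'_t}(\tau_t)\,\innermodel_{x_t x'_t}\,\outermodel_{x'_t x_{t-1}}$, which is precisely the $(x_t,x_{t-1})$-entry of the matrix $H(\tau_t) = \bigl(G(\tau_t)\circ\innermodel\bigr)\outermodel$ from \eqref{eq:matrix_H}: the Hadamard product handles the entrywise multiplication $G_{x_t x'_t}\innermodel_{x_t x'_t}$, and the subsequent ordinary matrix product with $\outermodel$ performs the sum over $x'_t$. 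Having done this for every $t$, the whole expression becomes $\sum_{x_0,\dots,x_T} \bigl(\prod_{t=1}^T H(\tau_t)_{x_t x_{t-1}}\bigr)\nu_{x_0}$.

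Finally I would recognize this iterated sum as a matrix–vector computation: summing over $x_0$ applies $H(\tau_1)$ to the vector $\nu$, then summing over $x_1$ applies $H(\tau_2)$, and so on, so that after the sum over $x_{T-1}$ we have the vector $H(\tau_T)\cdots H(\tau_1)\,\nu$; the remaining sum over $x_T$ is the contraction with the all-ones row vector $(1,\dots,1)$. This yields $G_Y(\tau) = (1,\dots,1)\,H(\tau_T)\cdots H(\tau_1)\,\nu$, as claimed. The only genuinely delicate point is the first step — carefully justifying that the MGF of the hidden-Markov process factorizes exactly as the likelihood does, i.e.\ that conditioning on $(\prestate_t,\state_t)$ makes the $Y_t$ independent with the stated conditional MGFs, and that all interchanges of (finite) sums and expectations are legitimate on $(-\epsilon,\epsilon)^T$; the algebraic repackaging into $H$ and the telescoping matrix product are then routine bookkeeping.
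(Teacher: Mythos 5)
Your proposal is correct and follows essentially the same route as the paper: the paper's proof expands $\EE[\exp(\sum_t \tau_t Y_t)]$ as the likelihood sum \eqref{eq:likelihood} weighted by $e^{\tau_t y_t}$, sums over $y$ to replace $p_{x_t x'}(y_t)$ by $G_{x_t x'}(\tau_t)$, and then reads off the product $H(\tau_T)\cdots H(\tau_1)\,\nu$ exactly as you describe. Your explicit flagging of the conditional-independence structure of the $Y_t$ given the hidden path and of the sum--expectation interchange is a point the paper leaves implicit, but it is the same argument.
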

\begin{proof}
  First, we note that the matrix $H(s) = \big(G(s)\circ \innermodel\big)
  \outermodel$ has components
  \begin{equation}\label{eq:helper_H}
    H_{xz}(s) 
      = \sum_{x'\in\states} G_{xx'}(s) \innermodel_{xx'}\outermodel_{x'z}
      = \sum_{x'\in\states} \sum_{y\in\mathbb{N}_0} p_{xx'}(y) e^{s y} \innermodel_{xx'}\outermodel_{x'z}
  \end{equation}
  for $x,z\in\states$ and $-\epsilon < s < \epsilon$. Next, consulting
  result \eqref{eq:likelihood} of Lemma \ref{lem:likelihood} and using the
  notation $y = (y_1,\ldots, y_T) \in \mathbb{N}_0^T$ we can write 
\begin{align*}
  G_\photons(\tau) 
    &= \EE \left[\exp\left(\sum_{t=1}^T \tau_t \photons_t\right)\right] \\
    &= \sum_{y\in\NN_0^T} \PP\big(Y_t = y_t~\text{for all}~t\big) \prod_{t=1}^Te^{\tau_t y_t}\\
    &= \sum_{y\in\NN_0^T} \sum_{x\in\states^{T+1}}\left(\prod_{t=1}^T\, \sum_{x'\in\states} p_{x_{t}x'}(y_t)e^{\tau_t y_t}\innermodel_{x_{t}x'}\outermodel_{x'x_{t-1}}\right)\nu_{x_0} \\
    &= \sum_{x\in\states^{T+1}}\left(\prod_{t=1}^T\sum_{x'\in\states}\sum_{y_t\in\NN_0} p_{x_{t}x'}(y_t)e^{\tau_t y_t}\innermodel_{x_{t}x'}\outermodel_{x'x_{t-1}}\right)\nu_{x_0}.
\intertext{Note that reordering the sums in this expression is unproblematic, since all summands are non-negative. Applying \eqref{eq:helper_H}, we thus find}
  G_\photons(\tau) 
    &= \sum_{x\in\states^{T+1}}\left(\prod_{t=1}^T H_{x_tx_{t-1}}(\tau_t)\right)\nu_{x_0} \\
    &= (1, \dots, 1) \,H(\tau_T)\cdots H(\tau_1)\,\nu
\end{align*}
for the moment generating function. 
\end{proof}

\paragraph{Moments of the inner model.} 
Before we derive the expectation and covariance of $\photons$ by differentiating
equation \eqref{eq:generating_function}, we introduce three parameters
$\theta_1$, $\theta_2$, and $\theta_3$ that describe the photon emission
statistics $p_{00}$ and $p_{10}$ up to second order. The first parameter
$\theta_1$ describes the
expected number of photons emitted during the frame if the fluorophore was in
the bright state initially,
\begin{subequations}
\begin{equation}\label{eq:theta_1}
  \theta_1 
    = \EE\big[\photons_t \,\big|\, \state'_t = 0\big] 
    = \sum_{x\in\states} q_{x0} \,\EE\big[\photons_t\,\big|\,\state'_t = 0, \state_t = x],
\end{equation}
where we used that $q_{x0} = \innermodel_{x0} = \PP(\state_t
= x\,|\,\prestate_t = 0)$
by definition. The second parameter $\theta_2$ quantifies the contribution
to the expectation $\theta_1$ if the fluorophore not only starts the frame in the
bright state but also stays there,
\begin{equation}\label{eq:theta_2}
    \theta_2 = \frac{q_{00}\,\EE\big[\photons_t\,\big|\,\state'_t = \state_t = 0\big]}{\theta_1}.
\end{equation}
Finally, we capture the conditioned variance of $\photons_t$ given $\prestate_t
  = 0$ via the parameter $\theta_3$,
\begin{equation} \label{eq:theta_3}
\theta_3 = \frac{\Var\big[\photons_t \,\big|\, \state'_t = 0\big]}{\theta_1^2} - \frac{1}{\theta_1}.
\end{equation}
\label{eq:theta}%
\end{subequations}
This parameter can be viewed as the excess relative variance with respect to
a Poisson distribution: if $\photons_t \,|\, \state_t = 0$ was distributed
Poissonian, then $\theta_3 = 0$. 
A Poissonian statistics is often assumed as an approximation for the photon
emission of fluorophores, but corrections may be necessary for accurate results
\cite{aspelmeier2015}.

\begin{remark}{}{}
The parameters $\theta = (\theta_1, \theta_2, \theta_3)$ clearly depend on the
inner model choice and are usually related to the entries of the short-time
transition matrix $\innermodel$. For example, we show in Section \ref{sec:alexa}
that $\theta_2$ is fully determined by $q_{00}$ in the inner model that we use
for the fluorophore Alexa 647.
\end{remark}

\paragraph{Expectation.}
In order to derive analytical expressions for the expectation and covariance, we
will assume that the transition matrix $\totalmodel = \innermodel\outermodel$ is
diagonalizable and has eigenvalues $\lambda_0, \ldots, \lambda_r \in \CC$\label{eigenvalues}.
We argue that this assumption is no significant restriction, since the
stochastic matrices that are not diagonalizable form a null set in a reasonable
sense -- see Lemma \ref{lem:diag} in Appendix \ref{sec:diag} for details.
We thus write
\begin{equation}\label{eq:diag}
  M = V \Lambda V^{-1},
\end{equation} 
where $\Lambda = \mathrm{diag}(\lambda_0, \dots, \lambda_r)$ and where
$V$ is a matrix containing the eigenvectors of $M$ as columns. 
Due to the absorbing nature of the bleached state $x = r$, we can assume that
$\lambda_r = 1$ with eigenvector $(0, 0, \dots, 1)^\mathrm{T}$, which hence
constitutes the last column of $V$.

\begin{theorem}{}{singleexpectation}
  Assume that the Markov matrix $M$ for a single-fluorophore model in
  $\singlemodels$ is diagonalizable like in \eqref{eq:diag}.
  Then the expectation value $\mu_t$ of the number $Y_t$ of photons emitted by
  the fluorophore at time $t = 1, \ldots, T$ is 
  \begin{equation}\label{eq:expectation}
    \mu_t = \theta_1\sum_{x\in\states} \alpha_x \lambda_x^{t-1},
  \end{equation}
  where the coefficients $\alpha_x$ for $x\in\states$ are defined by
  \begin{equation}\label{eq:alpha}
    \alpha_x = V^{}_{0x} \frac{\lambda^{}_x}{q_{00}} 
      \sum_{z\in\states}V^{-1}_{xz} \,\nu_z.
  \end{equation}
\end{theorem}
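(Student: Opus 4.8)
The plan is to read off $\mu_t$ from the moment generating function in Lemma~\ref{lem:momentgenerating} by differentiation. Since $G_\photons(\tau) = \EE\big[\exp(\sum_s \tau_s \photons_s)\big]$, we have $\mu_t = \partial_{\tau_t} G_\photons(\tau)\big|_{\tau=0}$, and as only the $t$-th factor of the product $H(\tau_T)\cdots H(\tau_1)$ depends on $\tau_t$, this gives
\[
  \mu_t = (1,\dots,1)\, H(0)^{T-t}\, H'(0)\, H(0)^{t-1}\, \nu .
\]
At $s=0$ every entry of $G(s)$ equals $1$ (each entry is either a conditional moment generating function evaluated at $0$ or the constant $1$), so $G(0)$ is the all-ones matrix and $H(0) = \innermodel\outermodel = M$. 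The first real task is therefore to understand $H'(0)$.

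Differentiating \eqref{eq:great_g} entry-wise, $G'(0)$ has only its first column nonzero, namely $(G_{00}'(0), G_{10}'(0), \dots, G_{10}'(0))^{\mathrm{T}}$ with $G_{00}'(0) = \EE[\photons_t \mid \state_t = \prestate_t = 0]$ and $G_{10}'(0) = \EE[\photons_t \mid \state_t = 1, \prestate_t = 0]$. Taking the Hadamard product with $\innermodel$ from \eqref{eq:modelmatrices} again keeps only the first column and turns it into the vector $g$ with $g_0 = q_{00}G_{00}'(0)$ and $g_x = q_{x0}G_{10}'(0)$ for $x \ge 1$; that is, $G'(0)\circ\innermodel = g\,e_0^{\mathrm{T}}$ with $e_0 = (1,0,\dots,0)^{\mathrm{T}}$. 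Hence $H'(0) = g\,w^{\mathrm{T}}$ is rank one, where $w^{\mathrm{T}} = e_0^{\mathrm{T}}\outermodel$ is the first row of $\outermodel$. This rank-one structure is the crux of the computation: substituting it back,
\[
  \mu_t = \big((1,\dots,1)\,M^{T-t}\,g\big)\,\big(w^{\mathrm{T}}\,M^{t-1}\,\nu\big).
\]

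It then remains to simplify the two scalar factors. Because $\outermodel$ and $\innermodel$, hence $M$, are column-stochastic, $(1,\dots,1)\,M^{T-t} = (1,\dots,1)$, so the left factor is $\sum_{x}g_x = q_{00}G_{00}'(0) + \sum_{x\ge 1}q_{x0}G_{10}'(0)$; invoking the common-exit condition \eqref{eq:commonexit}, which forces $\EE[\photons_t \mid \prestate_t = 0, \state_t = x] = G_{10}'(0)$ for every $x \ge 1$, together with $q_{x0} = \innermodel_{x0}$ and the tower rule, this equals exactly $\theta_1$ from \eqref{eq:theta_1}. For the right factor, the form of $\innermodel$ in \eqref{eq:modelmatrices} gives $e_0^{\mathrm{T}}\innermodel = q_{00}\,e_0^{\mathrm{T}}$, so $e_0^{\mathrm{T}}M = q_{00}\,e_0^{\mathrm{T}}\outermodel = q_{00}\,w^{\mathrm{T}}$ and therefore $w^{\mathrm{T}}M^{t-1}\nu = q_{00}^{-1}\,e_0^{\mathrm{T}}M^{t}\nu$. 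Putting the pieces together yields $\mu_t = \theta_1\, q_{00}^{-1}\, e_0^{\mathrm{T}}\,M^{t}\,\nu$; inserting $M = V\Lambda V^{-1}$ from \eqref{eq:diag}, the $x$-th summand of $e_0^{\mathrm{T}}V\Lambda^{t}V^{-1}\nu$ equals $V_{0x}\lambda_x^{t}\sum_z V^{-1}_{xz}\nu_z$, and pulling out one factor $\lambda_x$ to match the exponent $\lambda_x^{t-1}$ reproduces precisely the coefficients $\alpha_x$ of \eqref{eq:alpha}.

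The only genuinely delicate point is bookkeeping: getting the rank-one factorization $H'(0) = g\,w^{\mathrm{T}}$ right and, most of all, recognizing $\sum_x g_x = \theta_1$ — which is exactly where the common-exit assumption \eqref{eq:commonexit} (and the resulting special shapes \eqref{eq:great_p}, \eqref{eq:great_g}) enters; once the column-stochasticity of $M$ and the identity $e_0^{\mathrm{T}}M = q_{00}w^{\mathrm{T}}$ are noticed, the rest is routine linear algebra. One should also assume $q_{00} > 0$, without which $\alpha_x$ would not even be well defined.
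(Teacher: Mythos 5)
Your proof is correct and follows essentially the same route as the paper: differentiate the moment generating function, exploit column-stochasticity of $M$ and the special forms of $\innermodel$, $\outermodel$, $G$ to reduce to $\mu_t = \tfrac{\theta_1}{q_{00}} e_0^{\mathrm{T}} M^{t}\nu$, then diagonalize. Making the rank-one factorization $H'(0)=g\,w^{\mathrm{T}}$ explicit is a nice clarification of the paper's computation of $(1,\dots,1)\,H'(0)$, but it is the same argument, and your remark that $q_{00}>0$ is needed is a valid minor point left implicit in the paper.
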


\begin{proof}
Upon differentiating the moment generating function $G_\photons(\tau)$ in equation
\eqref{eq:generating_function} with respect to $\tau_t$, one obtains the
expectation value of $\photons_t$,
\begin{align}
  \mu_t = \EE[\photons_{t}]
&= \frac{\partial G_\photons(\tau)}{\partial \tau_t}\,\Big|_{\tau=0} \nonumber \\
&= (1,\dots,1) \,H(\tau_T)\cdots H(\tau_{t+1})H'(\tau_t)H(\tau_{t-1})\cdots H(\tau_1) \,\nu_0\,\big|_{\tau=0} \nonumber \\
&= (1,\dots,1)\,H'(0)M^{t-1} \,\nu_0 \label{eq:mean_derivation}.
\end{align}
The last line follows from $(1,\dots,1)\,H(0)=(1,\dots,1)$, which holds since
$H(0) = \innermodel \outermodel = \totalmodel$ is a probability matrix.
The derivative of $H$ is given by
\begin{equation*}
  H'(s) = (G'(s)\circ \innermodel)\outermodel,
\end{equation*}
where $G'$ is the derivative of the generating matrix $G$ defined in equation
\eqref{eq:great_g}.
Due to the particular form of $\outermodel$, $\innermodel$, and $G(s)$, see
equations \eqref{eq:modelmatrices} and \eqref{eq:great_g}, it follows with 
definition \eqref{eq:theta_1} of $\theta_1$ that 
\begin{align}
  (1,\dots,1)\,H'(0) &= \theta_1 (1, 0, \dots, 0)\,\outermodel \nonumber \\
  &= \frac{\theta_1}{q_{00}}(1, 0, \dots, 0)\,\totalmodel. \label{eq:aux_mean}
\end{align}
Here, we used that the respective first rows of $\outermodel$ and
$\totalmodel$ only differ by the factor $q_{00}$. Combining equations
\eqref{eq:mean_derivation} and \eqref{eq:aux_mean}, we can express the
expectation by
\begin{equation}\label{eq:mean_general}
  \mu_{t} = \frac{\theta_1}{q_{00}} \sum_{z\in\states} M^t_{0z} \,\nu_z.
\end{equation}
If we now use representation \eqref{eq:diag} of $M$, we find
\begin{align}
\mu_{t} &= \frac{\theta_1}{q_{00}} \sum_{z\in\states} 
     \big(V\Lambda^tV^{-1}\big)_{0z} \,\nu_z \nonumber \\
  &= \frac{\theta_1}{q_{00}} \sum_{x\in\states} \left(V^{}_{0x}
     \sum_{z\in\states}V^{-1}_{xz} \nu_z\right) \lambda^{t}_x \nonumber \\
  &= \theta_1 \sum_{x\in\states} \alpha_x \lambda^{t-1}_x, \label{eq:mean}
\end{align}
which proves the theorem.
\end{proof}

When deriving the covariance of $\photons$ later on, we will need the
expectation value $\mu^0_t$ of $Y_t$ on the condition that the fluorophore was
in the bright state at the beginning of the experiment. This corresponds to the
case $\nu = (1,0,\ldots,0)$. According to equation \eqref{eq:mean_general}, we
thus find
\begin{equation}\label{eq:mu0}
  \mu_{t}^0 = \EE[Y_t \,|\, X_0 = 0] = \frac{\theta_1}{q_{00}}\big(M^{t}\big)_{00}.
\end{equation}
Under assumption \eqref{eq:diag} of diagonalizability for $M$, Theorem
\ref{thm:singleexpectation} lets us write
\begin{equation*}
  \mu^0_t = \theta_1 \sum_{x\in\states} \alpha^0_x \lambda_x^{t-1},
\end{equation*}
where the coefficients $\alpha^0_x$ are given by equation \eqref{eq:alpha} with
$\nu = (1, 0, \dots, 0)$,
\begin{equation}\label{eq:alpha0}
  \alpha^0_x = \frac{\lambda_x}{q_{00}} V^{}_{0x}V^{-1}_{x0}.
\end{equation}
These auxiliary coefficients $\alpha^0_x$ can be related
to $\alpha_x$.  If $\nu_0 = 1$, then $\alpha^0_x = \alpha_x$ by definition. If
$\nu_0 < 1$, we can plug $\nu' = \frac{1}{1-\nu_0}(0, \nu_1, \dots, \nu_r)$ in equation
\eqref{eq:alpha} and define
\begin{equation}\label{eq:alpha1}
  \alpha^1_x = \frac{\lambda_x}{q_{00}} V_{0x}^{}\sum_{z\in\states}V^{-1}_{xz}\nu'_z,
\end{equation}
which allows us to decompose $\alpha_x$ as
\begin{equation*}
  \alpha_x = \nu_0\,\alpha^0_x + (1-\nu_0)\,\alpha^1_x.
\end{equation*}
This way of splitting up the model parameters has the advantage that
a simple set of constraints for $\alpha^0_x$ and $\alpha^1_x$ arises (see Lemma
\ref{lem:constraints} below). 

\paragraph{Spectral properties and parameter constraints.}
The eigenvalues $\lambda_x$ and coefficients $\alpha_x$ in Theorem
\ref{thm:singleexpectation} can in general be complex-valued.
When estimating these parameters numerically, however, it is beneficial to
assume real eigenvalues and eigenvectors of $M$. In Appendix \ref{sec:eig}, we
provide some criteria that guarantee $\lambda_x\in[0, 1]$ for $r \le 3$. In
summary, (i) real and (ii) positive eigenvalues are ensured if the diagonal
values of the transition matrix $M$ are (i) diverse and (ii) large enough.
Usually, both of these assumptions are physically reasonable: 
the diagonal values are diverse if the outer states of the fluorophore exhibit
diverse live times, and they are large if the outer states are on average stable
over several frames.
Under the restriction $\lambda_x\in[0, 1]$ on the spectrum of $M$, equation
\eqref{eq:expectation} states that the expected number $\mu_t$ of emitted
photons is the superposition of exponential decays with timescales determined by
$\lambda_x$.

Furthermore, note that the coefficients $\alpha_x$ are implicitly constrained by
their definition in equation \eqref{eq:alpha}. First, 
\begin{equation*}
  \alpha_r = 0
\end{equation*}
is enforced due to $V_{0r} = 0$. This is physically expected as fluorophores in
the bleached state do not emit photons. Secondly, summing over $x\in\states$ in
\eqref{eq:alpha} shows
\begin{equation}\label{eq:alpha_constraint_1}
  \sum_{x\in\states} \alpha_x 
    = \frac{1}{q_{00}} \sum_{z\in\states} M_{0z} \nu_z 
    = (1, q_{01}, \dots, q_{0(r-1)}, 0)\,\nu.
\end{equation}
Similarly, we find the relation
\begin{equation}\label{eq:alpha_constraint_2}
  \sum_{x\in\states} \frac{\alpha_x}{\lambda_x} = \frac{\nu_0}{q_{00}}
\end{equation}
by dividing equation \eqref{eq:alpha} by $\lambda_x$ and again summing over
$x\in\states$. Applying the last three equations to the coefficients
$\alpha^0_x$ and $\alpha^1_0$, defined in \eqref{eq:alpha0} and
\eqref{eq:alpha1}, yields a set of simple constraints.

\begin{lemma}{}{constraints}
  We have $\alpha_r^0 = \alpha_r^1 = 0$. Furthermore, it holds that
  \begin{subequations}
  \begin{align}
    \sum_{x\in\states} \alpha^0_x = 1 
      \qquad&\text{and}\qquad 
    \sum_{x\in\states}\frac{\alpha^0_x}{\lambda_x} = \frac{1}{q_{00}},
    \label{eq:alpha_constraint_general_1} \\[0.25cm]
    0 \le q_{00}\sum_{x\in\states}\alpha^1_x \le 1
      \qquad&\text{and}\qquad
    \sum_{x\in\states}\frac{\alpha^1_x}{\lambda_x} = 0.
    \label{eq:alpha_constraint_general_2}
  \end{align}
  \label{eq:alpha_constraints}%
  \end{subequations}
\end{lemma}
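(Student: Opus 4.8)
The plan is to observe that the coefficients $\alpha^0_x$ and $\alpha^1_x$ are just the general coefficient $\alpha_x$ of \eqref{eq:alpha} evaluated at two particular choices of the vector playing the role of $\nu$: namely $(1,0,\dots,0)$ for $\alpha^0_x$ (this is exactly \eqref{eq:alpha0}) and $\nu' := \tfrac{1}{1-\nu_0}(0,\nu_1,\dots,\nu_r)$ for $\alpha^1_x$ (this is exactly \eqref{eq:alpha1}). The two summation identities \eqref{eq:alpha_constraint_1} and \eqref{eq:alpha_constraint_2} were derived by pure linear algebra — resumming the definition \eqref{eq:alpha} against the rows of $V$, respectively using $VV^{-1}=I$ — and so they hold verbatim with \emph{any} vector substituted for $\nu$, not only the true initial distribution. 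Thus all of Lemma \ref{lem:constraints} reduces to plugging in these two vectors and reading off the right-hand sides.

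First I would dispatch the vanishing statements. Since every $\alpha_x$ carries the factor $V_{0x}$ and $V_{0r}=0$ (the last column of $V$ is the eigenvector $(0,\dots,0,1)^{\mathrm{T}}$), we get $\alpha^0_r=\alpha^1_r=0$ regardless of which vector is inserted, exactly as for $\alpha_r$. Next, for $\alpha^0$ I substitute $(1,0,\dots,0)$: the right-hand side of \eqref{eq:alpha_constraint_1} becomes the first entry of the row vector $(1,q_{01},\dots,q_{0(r-1)},0)$, i.e.\ $1$, giving $\sum_{x}\alpha^0_x=1$; and the right-hand side $\nu_0/q_{00}$ of \eqref{eq:alpha_constraint_2} becomes $1/q_{00}$. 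This is \eqref{eq:alpha_constraint_general_1}.

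For $\alpha^1$ I substitute $\nu'$, under the standing assumption $\nu_0<1$ — the only case in which $\alpha^1$ actually enters the decomposition $\alpha_x=\nu_0\alpha^0_x+(1-\nu_0)\alpha^1_x$; when $\nu_0=1$ the claim about $\alpha^1$ is vacuous. Because $\nu'_0=0$, \eqref{eq:alpha_constraint_2} immediately gives $\sum_x \alpha^1_x/\lambda_x=\nu'_0/q_{00}=0$. And \eqref{eq:alpha_constraint_1} gives $\sum_x\alpha^1_x=(1,q_{01},\dots,q_{0(r-1)},0)\cdot\nu'=\sum_{z=1}^{r-1}q_{0z}\,\nu'_z$, the $z=0$ term dropping out; multiplying by $q_{00}$ and using $q_{00},q_{0z}\in[0,1]$ together with $\sum_z\nu'_z=1$ bounds $q_{00}\sum_x\alpha^1_x$ between $0$ and $q_{00}\le 1$. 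This is \eqref{eq:alpha_constraint_general_2}.

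The whole argument is essentially bookkeeping against identities already in hand; the only step that is not pure substitution is the two-sided estimate $0\le q_{00}\sum_x\alpha^1_x\le 1$, which I would write out with care, and the only subtlety is remembering that $\nu'$ is defined only for $\nu_0<1$, so the $\nu_0=1$ case must be mentioned separately. I do not anticipate a genuine obstacle.
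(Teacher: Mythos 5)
Your proposal is correct and follows the paper's own proof exactly: the paper likewise obtains the $\alpha^0$ constraints by setting $\nu=(1,0,\dots,0)$ in \eqref{eq:alpha_constraint_1}--\eqref{eq:alpha_constraint_2}, obtains the $\alpha^1$ constraints by using that $\nu'$ is a probability vector with $\nu'_0=0$, and derives $\alpha^0_r=\alpha^1_r=0$ from $V_{0r}=0$. Your explicit write-out of the two-sided bound and the remark about the $\nu_0=1$ case are sensible elaborations of steps the paper leaves implicit.
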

\begin{proof}
  The first statement holds due to $V_{0r} = 0$.
  The relations in \eqref{eq:alpha_constraint_general_1} follow from equation
  \eqref{eq:alpha_constraint_1} and \eqref{eq:alpha_constraint_2} for
  $\nu = (1, 0, \ldots, 0)$. Constraint \eqref{eq:alpha_constraint_general_2}
  follows similarly if we take into account that $\nu'$ is a probability vector
  with $\nu'_0 = 0$.
\end{proof}

\paragraph{Covariance.} We next look at the covariance matrix $\Sigma$ of the
process $\photons$, which can be obtained from the second derivatives of the
moment generating function $G_\photons$.

\begin{theorem}{}{singlecovariance}
  Under assumption \eqref{eq:diag}, the covariance matrix $\Sigma$ of the
  process $Y = (Y_t)_{t=1}^T$ is
  \begin{subequations}
  \begin{align}
    \Sigma_{tt}  &= 
      \big(\theta_1(\theta_3 + 1) + 1 - \mu_t\big)\,\mu_t,
    \label{eq:cov_diag} \\[0.25cm]
    \Sigma_{tt'} &=
      \left[ \left(\theta_2 - q_{00}
      \frac{1-\theta_2}{1-q_{00}}\right) \mu^0_{t-t'} + 
      \frac{1-\theta_2}{1-q_{00}}\mu^0_{t-t'+1} - \mu_t\right] \mu_{t'},
    \label{eq:cov_offdiag}
  \end{align}
  \label{eq:cov}
  \end{subequations}
  on the diagonal and off-diagonal with $t > t'$, respectively.
\end{theorem}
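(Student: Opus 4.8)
The plan is to obtain the second moments of $\photons$ by differentiating the moment generating function $G_\photons$ of Lemma~\ref{lem:momentgenerating} twice, in close analogy with the proof of Theorem~\ref{thm:singleexpectation}. For the diagonal I would differentiate \eqref{eq:generating_function} twice with respect to $\tau_t$, and for the off-diagonal case $t>t'$ once with respect to $\tau_t$ and once with respect to $\tau_{t'}$. Since $H(0)=\innermodel\outermodel$ is a probability matrix with $(1,\dots,1)\,H(0)=(1,\dots,1)$, the outer factors collapse and one is left with
\begin{equation*}
  \EE[\photons_t^2] = (1,\dots,1)\,H''(0)\,M^{t-1}\,\nu
  \qquad\text{and}\qquad
  \EE[\photons_t\photons_{t'}] = (1,\dots,1)\,H'(0)\,M^{t-t'-1}\,H'(0)\,M^{t'-1}\,\nu ,
\end{equation*}
after which $\Sigma_{tt}=\EE[\photons_t^2]-\mu_t^2$ and $\Sigma_{tt'}=\EE[\photons_t\photons_{t'}]-\mu_t\mu_{t'}$.

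The next step is to evaluate $H'(0)$ and $H''(0)$ from the sparse forms of $\innermodel,\outermodel$ in \eqref{eq:modelmatrices}, the shape of $G$ in \eqref{eq:great_g}, and the common-exit assumption \eqref{eq:commonexit}. From the proof of Theorem~\ref{thm:singleexpectation} we already have $(1,\dots,1)\,H'(0)=\tfrac{\theta_1}{q_{00}}(1,0,\dots,0)\,M$ (equation~\eqref{eq:aux_mean}); running the same computation with $G''(0)$ in place of $G'(0)$ gives $(1,\dots,1)\,H''(0)=\tfrac{1}{q_{00}}\,\EE[\photons_t^2\mid\prestate_t=0]\,(1,0,\dots,0)\,M$, and a variance decomposition together with definitions \eqref{eq:theta_1} and \eqref{eq:theta_3} yields $\EE[\photons_t^2\mid\prestate_t=0]=\Var[\photons_t\mid\prestate_t=0]+\theta_1^2=\theta_1\big(\theta_1(\theta_3+1)+1\big)$. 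For the cross term I additionally need $H'(0)=(G'(0)\circ\innermodel)\outermodel$ itself: the matrix $G'(0)\circ\innermodel$ has nonzero entries only in its first column, namely $q_{00}G'_{00}(0)=\theta_1\theta_2$ in row $0$ (by \eqref{eq:theta_2}) and $q_{x0}G'_{10}(0)$ in rows $x\ge 1$, where $G'_{10}(0)=\EE[\photons_t\mid\state_t=1,\prestate_t=0]=\tfrac{\theta_1(1-\theta_2)}{1-q_{00}}$ follows from \eqref{eq:theta_1}, the common-exit identity \eqref{eq:commonexit}, and $\sum_x q_{x0}=1$.

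It then remains to collapse the powers of $M$. Two structural facts about $\outermodel$ do the work: its first column equals $(1,0,\dots,0)^{\mathrm T}$, whence $\sum_x (M^k)_{0x}\,q_{x0}=(M^{k+1})_{00}$ for all $k\ge 0$; and its first row is $q_{00}^{-1}$ times the first row of $M$, so that $(1,0,\dots,0)\,\outermodel\,M^{j}\nu = q_{00}^{-1}\sum_z (M^{j+1})_{0z}\nu_z$. Substituting these into the two identities above, and using \eqref{eq:mean_general} in the form $\sum_z (M^t)_{0z}\nu_z=\tfrac{q_{00}}{\theta_1}\mu_t$ and \eqref{eq:mu0} in the form $(M^k)_{00}=\tfrac{q_{00}}{\theta_1}\mu^0_k$, one obtains $\EE[\photons_t^2]=\big(\theta_1(\theta_3+1)+1\big)\mu_t$ at once and, after a short rearrangement,
\begin{equation*}
  \EE[\photons_t\photons_{t'}]=\left[\Big(\theta_2-q_{00}\tfrac{1-\theta_2}{1-q_{00}}\Big)\mu^0_{t-t'}+\tfrac{1-\theta_2}{1-q_{00}}\,\mu^0_{t-t'+1}\right]\mu_{t'} .
\end{equation*}
Subtracting $\mu_t^2$, respectively $\mu_t\mu_{t'}$, then gives \eqref{eq:cov_diag} and \eqref{eq:cov_offdiag}. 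Note that only the general relations \eqref{eq:mean_general} and \eqref{eq:mu0} enter, so diagonalizability of $M$ is not actually needed in this argument and the hypothesis is carried along only for consistency with the statement.

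I expect the bookkeeping in the off-diagonal case to be the main obstacle: one has to track how the two copies of $H'(0)$ act through the intervening factor $M^{t-t'-1}$, using that each $G'(0)\circ\innermodel$ is effectively rank one (supported in column $0$) so that the whole matrix product reduces to scalars multiplying $(M^k)_{00}$ and $\sum_z (M^t)_{0z}\nu_z$, and one has to invoke \eqref{eq:commonexit} at precisely the right place to resolve $G'_{10}(0)$ — without it the term $\sum_{x\ge 1}(M^{t-t'})_{0x}\,q_{x0}\,G'_{10}(0)$ would not reduce to a function of $\mu^0$ alone. As a cross-check, and an alternative derivation, one can argue probabilistically: by the Markov property $\EE[\photons_t\photons_{t'}]=\EE\big[\photons_{t'}\,\EE[\photons_t\mid\state_{t'}]\big]$ with $\EE[\photons_t\mid\state_{t'}=x]=\tfrac{\theta_1}{q_{00}}(M^{t-t'})_{0x}$, while $\EE\big[\photons_{t'}\,\mathbbm{1}\{\state_{t'}=x\}\big]$ is computed by conditioning on $\prestate_{t'}$; this yields the same three ingredients. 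I would present the generating-function computation as the primary one, to match the style of the rest of the section.
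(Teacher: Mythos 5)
Your proposal is correct and follows essentially the same route as the paper: differentiate the moment generating function of Lemma \ref{lem:momentgenerating}, collapse the outer factors using $(1,\dots,1)\,H(0)=(1,\dots,1)$, and split the first column of $G'(0)\circ \innermodel$ into the $\theta_1\theta_2\,e_1$ part and the $\theta_1\tfrac{1-\theta_2}{1-q_{00}}(0,q_{10},\dots,q_{r0})^{\mathrm T}$ part, which produces exactly the paper's decomposition $\helper=\theta_2\helper_1+\tfrac{1-\theta_2}{1-q_{00}}\helper_2$. The one genuine (minor) difference is your observation that diagonalizability is not needed: you obtain $\sum_x (M^k)_{0x}q_{x0}=(M^{k+1})_{00}$ directly from the first column of $\outermodel$ being $(1,0,\dots,0)^{\mathrm T}$, whereas the paper derives the same identity for $\helper_2$ by passing through the eigendecomposition $M=V\Lambda V^{-1}$ — your shortcut is cleaner and correct.
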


\begin{proof}
One can derive the entries of the covariance matrix for times $t, t' = 1,
\ldots, T$ by
\begin{equation}\label{eq:cov_derivation}
  \Sigma_{tt'} = \frac{\partial^2G_\photons(\tau)}{\partial\tau_t
    \partial\tau_{t'}}\,\Big|_{\tau=0} - \mu_t \, \mu_{t'}.
\end{equation}
We first address the diagonal with $t = t'$. In this case, we can
proceed similarly to equation \eqref{eq:mean_derivation} and find
\begin{equation}\label{eq:cov_diag_derivation}
  \Sigma_{tt} = (1, \dots, 1)\,H''(0)M^{t-1}\,\nu - \mu_t^2.
\end{equation}
Again, one can exploit the special forms of $\outermodel$, $\innermodel$,
and $G(\tau)$ to obtain
\begin{align*}
  (1,\dots,1)\,H''(0)
    &= \theta_1^2(\theta_3 + 1/\theta_1 + 1)\,(1, 0, \dots, 0)\,\outermodel \\
    &= \frac{\theta_1^2(\theta_3 + 1/\theta_1 + 1)}{q_{00}}\,(1, 0, \dots, 0)\,
    \totalmodel,
\end{align*}
where we used the relation between $\theta_1$, $\theta_3$, and the second
moments $G''_{00}(0)$ and $G''_{10}(0)$. Consulting equations
\eqref{eq:mean_derivation} and \eqref{eq:aux_mean} now reveals
\begin{equation*}
 (1, \dots, 1)\,H''(0)M^{t-1}\,\nu = \theta_1(\theta_3 + 1/\theta_1 + 1)\,\mu_t.
\end{equation*}
Plugging this expression in equation \eqref{eq:cov_diag_derivation} shows result
\eqref{eq:cov_diag}.

We next consider the off-diagonal entries with $t > t'$. Applying equation
\eqref{eq:cov_derivation} yields
\begin{equation*}
  \Sigma_{tt'} = (1,\dots, 1)\, H'(0)M^{t-t'-1}H'(0)M^{t'-1}\nu - \mu_t \, \mu_{t'}.
\end{equation*}
Since $H'(0) = \big(G'(0) \circ \innermodel\big)\outermodel$, and since only the
first column of $G'(0)$ is unequal to $0$, we find that
\begin{align*}
  (1,\dots, 1)\, H'(0)M^{t-t'-1}H'(0) 
  &= \helper\,\theta_1 \,(1, 0, \dots, 0)\,\outermodel \\
  &=  \helper\frac{\theta_1}{q_{00}} (1, 0, \dots, 0)\,\totalmodel,
\end{align*}
where $\helper$ is given by
\begin{align}
  \helper 
  &= \frac{1}{\theta_1}(1,\dots, 1)\,H'(0)M^{t-t'-1}\big(G'(0)\circ \innermodel\big)\, (1, 0, \dots, 0)^\mathrm{T} \nonumber \\
  &= \frac{1}{q_{00}}(1,0,\dots, 0)\,M^{t-t'}\big(G'(0)\circ \innermodel\big)\, (1, 0, \dots, 0)^\mathrm{T}. \label{eq:helper}
\end{align}
We employed equation \eqref{eq:aux_mean} for the second equality.
Remarkably, we can now use the same reasoning as for the expectation and
the diagonal entries before, and we find that
\begin{equation}\label{eq:sigma_helper}
  \Sigma_{tt'} = \helper\,\mu_{t'} - \mu_t\, \mu_{t'}.
\end{equation}
This reduces the problem to resolving $\helper$. We begin by looking at the
first column of $G'(0) \circ \innermodel$, which is given by
\begin{equation}\label{eq:split}
  \big(G'(0)\circ \innermodel\big)\, 
  \begin{pmatrix}
    1 \\
    0 \\
    \vdots \\
    0
  \end{pmatrix}
  =
  \begin{pmatrix}
    G_{00}'(0)\,q_{00} \\
    G_{10}'(0)\,q_{10} \\
    \vdots \\
    G_{10}'(0)\,q_{r0} \\
  \end{pmatrix}
  =
  \theta_1\theta_2\,
  \begin{pmatrix}
    1 \\
    0 \\
    \vdots \\
    0 \\
  \end{pmatrix}
  +
  \theta_1\frac{1 - \theta_2}{1 - q_{00}}
  \begin{pmatrix}
    0 \\
    q_{10} \\
    \vdots \\
    q_{r0} \\
  \end{pmatrix}.
\end{equation}
Here, we used that 
$G_{xx'}'(0) = \EE\big[\photons_t\,\big|\,\state_t = x, \prestate_t = x'\big]$
and applied definitions (\ref{eq:theta}a-b) of $\theta_1$ and
$\theta_2$. 
The assumption of a common exit state in the fast model (see Remark
\ref{rem:commonexit}), which ensures that $G_{10} = G_{x0}$ for all $x \ge 1$,
is crucial for this step. Equation \eqref{eq:split} decomposes $\helper$ into
two parts, and we accordingly write
\begin{equation}\label{eq:decompose-helper}
  \helper = \theta_2 \,\helper_1 + 
  \frac{1 - \theta_2}{1 - q_{00}}\,\helper_2.
\end{equation}
We address $\helper_1$ first. By inserting the first term of \eqref{eq:split} in
\eqref{eq:helper}, we find that 
\begin{equation*}
  \helper_1 
    = \frac{\theta_1}{q_{00}}(1,0,\dots,0)\,M^{t-t'}(1,0,\dots,0)^\mathrm{T} 
    = \frac{\theta_1}{q_{00}}\big(M^{t-t'}\big)_{00}
    = \mu^0_{t-t'},
\end{equation*}
where definition \eqref{eq:mu0} of $\mu^0_t$ was applied.
In order to express $\helper_2$, we first note that the respective first columns
of the two matrices $\innermodel$ and $\totalmodel$ are equal, namely $(q_{00},
\dots, q_{r0})^\mathrm{T}$. Thus, we can express $q_{x0}$ in terms of the
diagonal decomposition \eqref{eq:diag} of $M$,
\begin{equation*}
  q_{x0} = \totalmodel_{x0} = \big(V \Lambda V^{-1}\big)_{x0}.
\end{equation*}
We then harness the auxiliary calculation
\begin{align*}
  \sum_{z=1}^r V_{xz}^{-1} q_{z0} = \sum_{z=1}^r V^{-1}_{xz} \big(V\Lambda V^{-1})_{z0} = \lambda_x V^{-1}_{x0} - q_{00}\,V^{-1}_{x0}, 
\end{align*}
which can be verified by straightforward computation, and arrive at
\begin{align*}
  \helper_2 
  &= \frac{\theta_1}{q_{00}}(1, 0, \dots, 0)\,M^{t-t'} ( 0, q_{10}, \dots, q_{r0} )^\mathrm{T} \\
  &= \frac{\theta_1}{q_{00}}(1, 0, \dots, 0)\,V\Lambda^{t-t'}V^{-1} ( 0, q_{10}, \dots, q_{r0} )^\mathrm{T} \\
  &= \frac{\theta_1}{q_{00}}(1, 0, \dots, 0)\,\Big(M^{t-t'+1} - q_{00}\, M^{t-t'}\Big) (1, 0, \dots, 0)^\mathrm{T}.
\end{align*}
Making use of the definition \eqref{eq:mu0} of $\mu_t^0$, we conclude
$\helper_2 = \mu_{t-t'+1}^0 - q_{00}\,\mu_{t-t'}^0$.
Together with $\helper_1 = \mu^0_{t-t'}$, the decomposition
\eqref{eq:decompose-helper} of $\helper$ can now be resolved to read
\begin{equation}\label{eq:helper_resolved}
  \helper = \left( \theta_2 - q_{00} \frac{1-\theta_2}{1-q_{00}}\right) \mu_{t-t'}^0 +
  \frac{1 - \theta_2}{1 - q_{00}} \,\mu_{t-t'+1}^0,
\end{equation}
which completes expression \eqref{eq:sigma_helper} and proves
\eqref{eq:cov_offdiag} for the off-diagonal elements of the covariance matrix.
\end{proof}

\begin{figure}[t]
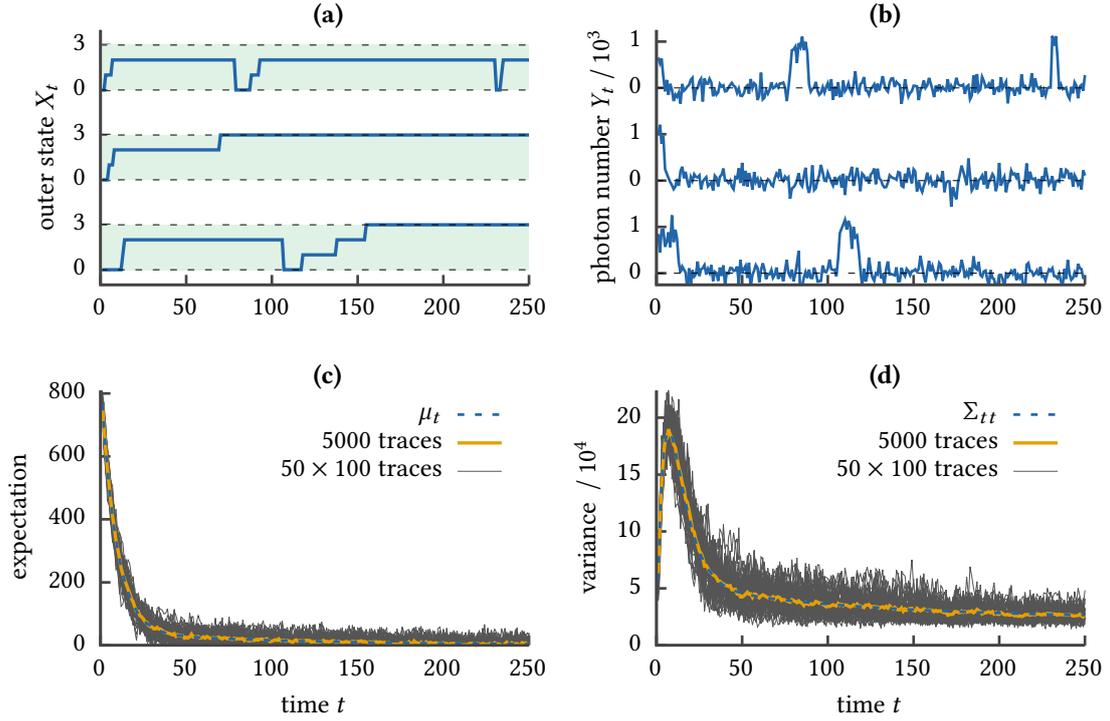

  \centering\small
  \input{traces.tex}

  \vspace{0.3cm}

  \input{meanvar.tex}
  \caption{Simulation results of the single-fluorophore HTMM. \textbf{(a)} shows
    three exemplary paths of the fluorophore in the outer state space $\states
    = \{0,\ldots,r\}$ for $r = 3$.
    One can see that two of the three fluorophores have already bleached in the
    first 250 time steps.
    \textbf{(b)} shows three fluorescence traces $y = (y_t)_{t=1}^{250}$
    corresponding to the paths in (a).
    To obtain more realistic traces, Gaussian white noise with mean $0$
    and standard deviation $\theta_1/5$ was added to each observation $y_t$.
    \textbf{(c-d)} 
    show the theoretical expectation $\mu_t$ and variance $\Sigma_{tt}$ of $Y_t$
    compared to their empirical estimates for $5000$ and $100$ simulated traces.
    In all simulations, we use the inner model described in Section
    \ref{sec:alexa}. The parameters for the inner and outer model are chosen
    such that the resulting traces roughly resemble the experimental data in
    \cite{laitenberger_2018}.
  }
  \label{fig:simulations}
\end{figure}

\begin{figure}
  \centering\small

  \vspace{-0.5cm}

  \input{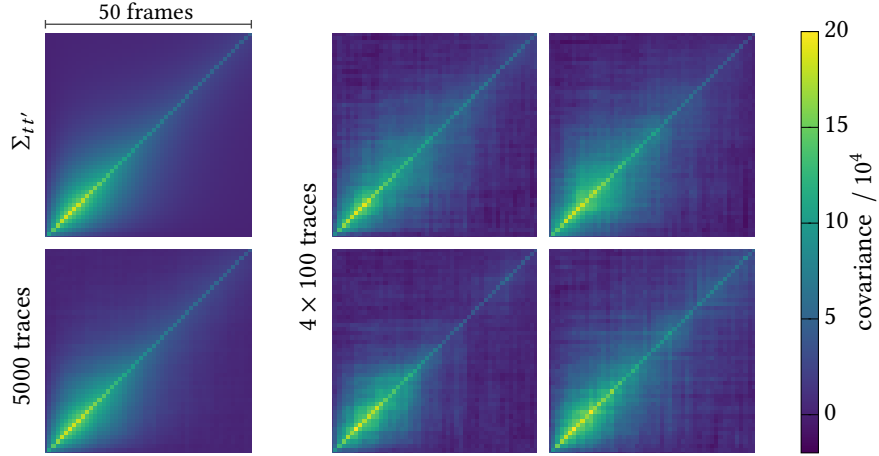}
  \caption{Real and empirical covariance of the HTMM. Shown are the covariances
    $\Sigma_{tt'}$ for $t,t' = 1, \ldots, 50$ (upper left image) as well as their
    empirical counterparts for 5000 traces (lower left image) and 100 traces
    (four images on the right). The same simulated traces as in Figure
    \ref{fig:simulations}c-d are used.
  }
  \label{fig:covariances}
\end{figure}

\begin{remark}{}{expcov}
  The expectation and covariance in equations \eqref{eq:expectation} and
  \eqref{eq:cov} depend on the eigenvalues and eigenvectors of the matrix
  $\totalmodel = \innermodel\outermodel$, but not on the single transition
  probabilities in $\innermodel$ and $\outermodel$ directly. From joint
  knowledge of $\mu$ and $\Sigma$, the parameters $\nu_0, q_{00},
  \alpha^0_x, \alpha^1_x, \lambda_x, \theta_1, \theta_2$, and $\theta_3$ are
  identifiable.
  Not all of them, however, are independent (see Lemma \ref{lem:constraints}),
  and knowing these parameters is in general not sufficient to
  reconstruct the matrices $\innermodel$ and $\outermodel$. Plots of $\mu$ and
  $\Sigma$ as well as simulation results for the processes $X$ and $Y$ are
  depicted in Figure \ref{fig:simulations} and \ref{fig:covariances}.
\end{remark}

\begin{remark}{variance \enquote{dip}}{dip}
  Figure \ref{fig:simulations}d illustrates a characteristic property of the
  variance $\Sigma_{tt}$ in dependence of the frame number $t$. Initially, the
  variance increases for some frames before it subsequently relaxes towards the
  background noise exponentially. This salient \enquote{dip} in the variance
  curve is also observed in experimental data \cite{laitenberger_2018} for large
  values of $\nu_0$, i.e., if most fluorophores are bright at the beginning of
  the experiment.
  It is caused by bright fluorophores getting dark during the first few frames,
  such that the observable distribution of photons $Y_t$ is effectively an
  additive model composed of two parts: dark fluorophores with $Y_t = 0$
  and bright ones with $Y_t$ around $\theta_1$. This split in the distribution
  of $Y_t$ temporarily causes a high variance until the number of dark
  fluorophores eventually dominates in the long run.
\end{remark}

\paragraph{Multiple fluorophores.} The signal we observe in experiments is based
on the fluorescent activity of an unknown number $m$\label{firstm} of fluorophores.
As we will typically not be able to distinguish between the contributions
resulting from different fluorophores, we can only rely on the total number
$\photons^{(m)}_t$ of photons emitted in frame $t$. It is given by the sum of
$m$ single-fluorophore processes $Y^k$ modeled via $\singlemodels$,
\begin{equation}
  Y^{(m)}_t = \sum_{k = 1}^{m} Y^k_t.
\end{equation} 
We make the assumption that the contributions $Y^k$ are independent and
identically distributed (see Remark \ref{rem:iid_assumptions}). 
Even though these assumptions are approximations -- conditions like the
biochemical properties of the fluorophore's neighbourhood or its spatial
orientation have a certain impact -- 
they appear to lead to a decent
description for the multi-fluorophore dynamics in practical situations
\cite{laitenberger_2018}. 
The set of all multi-fluorophore models that obey the i.i.d.\ assumptions is
henceforth denoted by $\multimodels$.

The expectation and covariance of $Y^{(m)}$ as sum of $m$ i.i.d.\ random
processes simply acquire the prefactor $m$ with respect to the
single-fluorophore expressions.
Note that we will use the same symbols $\mu$, $\mu^0$, and $\Sigma$ as for the
single-fluorophore process, see \eqref{eq:expectation}, \eqref{eq:mu0}, and
\eqref{eq:cov}, to denote the respective generalizations to $m \ge 1$
fluorophores.
\begin{theorem}{}{multimodel}
  The expectation $\mu$ and covariance $\Sigma$ of a multi-fluorophore process
  $Y^{(m)}$ in $\multimodels$ are
  \begin{subequations}
  \begin{equation}
    \mu_t = m\,\theta_1 \sum_{x\in\states} \alpha_x\lambda_x^{t-1} =  m\,\theta_1 \sum_{x=0}^{r-1} 
    \big(\nu_0\,\alpha_x^0 + (1 - \nu_0)\,\alpha^1_x\big)\lambda_x^{t-1},
    \label{eq:multi_mean}
  \end{equation}
  and
  \begin{align}
    \Sigma_{tt} &= 
    \frac{1}{m} \big(m\,\theta_1(\theta_3 + 1) + m - \mu_t\big) \mu_t,
    \label{eq:multi_cov_diag} \\[0.25cm]
    \Sigma_{tt'} &=
    \frac{1}{m} \left[ \left(\theta_2 - q_{00} \frac{1-\theta_2}{1-q_{00}}\right)\mu^0_{t-t'} + \frac{1-\theta_2}{1-q_{00}}\mu^0_{t-t'+1} - \mu_t\right] \mu_{t'},
    \label{eq:multi_cov_offdiag}
  \end{align}
  where
  \begin{equation}
    \mu_t^0 = m\,\theta_1 \sum_{x\in\states} \alpha^0_x\lambda_x^{t-1}, \hspace{5.50cm}
    \label{eq:multi_mean0}
  \end{equation}
  for $t,t' = 1, \ldots, T$ with $t > t'$. The coefficients $\alpha_x$, $\alpha^0_x$, 
  and $\alpha^1_x$ are given by equations \eqref{eq:alpha}, \eqref{eq:alpha0}, and
  \eqref{eq:alpha1}, respectively.
  \label{eq:multi}
  \end{subequations}
\end{theorem}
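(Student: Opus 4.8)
The plan is to use the only structural fact available: $Y^{(m)}$ is, by the i.i.d.\ assumption defining $\multimodels$, a sum of $m$ independent copies $Y^1,\dots,Y^m$ of a single-fluorophore process in $\singlemodels$. Since the claimed formulas involve only first and second moments, everything reduces to additivity of the mean and — thanks to independence — additivity of the covariance, combined with the single-fluorophore expressions already derived in Theorem \ref{thm:singleexpectation} and Theorem \ref{thm:singlecovariance}. Concretely, first I would write $\mu_t = \EE[Y^{(m)}_t] = \sum_{k=1}^m \EE[Y^k_t]$ by linearity of expectation, and since each $Y^k$ is distributed like $Y$, Theorem \ref{thm:singleexpectation} gives $\EE[Y^k_t] = \theta_1\sum_{x\in\states}\alpha_x\lambda_x^{t-1}$; summing over $k$ produces the prefactor $m$, which is the first equality in \eqref{eq:multi_mean}. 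The second equality then follows by inserting the decomposition $\alpha_x = \nu_0\alpha^0_x + (1-\nu_0)\alpha^1_x$ recorded just before the ``Spectral properties'' paragraph and dropping the $x=r$ term, using $\alpha_r=\alpha^0_r=\alpha^1_r=0$ (Lemma \ref{lem:constraints}, equivalently $V_{0r}=0$), so the sum runs only over $x=0,\dots,r-1$. Equation \eqref{eq:multi_mean0} is the same statement applied with $\nu=(1,0,\dots,0)$, i.e.\ Theorem \ref{thm:singleexpectation} specialized as in \eqref{eq:mu0}, again times $m$.

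For the covariance I would invoke independence across fluorophores: $\Sigma_{tt'} = \operatorname{Cov}(Y^{(m)}_t, Y^{(m)}_{t'}) = \sum_{k,l=1}^m \operatorname{Cov}(Y^k_t, Y^l_{t'}) = \sum_{k=1}^m \operatorname{Cov}(Y^k_t, Y^k_{t'})$, since the cross terms with $k\ne l$ vanish. Each remaining summand equals the single-fluorophore covariance from Theorem \ref{thm:singlecovariance}, so $\Sigma_{tt'}^{(m)} = m\,\Sigma_{tt'}^{(1)}$, and it only remains to rewrite the single-fluorophore formula in terms of the $m$-fluorophore means.

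The one point that needs care — and the only place a reader could go wrong — is that the statement recycles the symbols $\mu_t$, $\mu_{t'}$, $\mu^0_{t-t'}$ for the \emph{$m$-fluorophore} quantities, which already carry a factor $m$. So I would keep the single-fluorophore means distinct, writing $\bar\mu_t = \mu_t/m$ and $\bar\mu^0_s = \mu^0_s/m$, so that Theorem \ref{thm:singlecovariance} reads $\Sigma^{(1)}_{tt} = \big(\theta_1(\theta_3+1) + 1 - \bar\mu_t\big)\bar\mu_t$ and $\Sigma^{(1)}_{tt'} = \big[(\theta_2 - q_{00}\tfrac{1-\theta_2}{1-q_{00}})\bar\mu^0_{t-t'} + \tfrac{1-\theta_2}{1-q_{00}}\bar\mu^0_{t-t'+1} - \bar\mu_t\big]\bar\mu_{t'}$. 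Multiplying by $m$ and substituting $\bar\mu = \mu/m$, $\bar\mu^0 = \mu^0/m$: on the diagonal $m(\theta_1(\theta_3+1)+1-\mu_t/m)(\mu_t/m) = \tfrac1m(m\theta_1(\theta_3+1)+m-\mu_t)\mu_t$, which is \eqref{eq:multi_cov_diag}; off the diagonal the bracket scales by $1/m$ and the outer factor $\bar\mu_{t'}$ by another $1/m$, while one of these cancels against the $m$ in front, leaving the overall $1/m$ and the $m$-fluorophore $\mu^0,\mu$ inside, i.e.\ \eqref{eq:multi_cov_offdiag}. The ``main obstacle'' is thus purely this bookkeeping of the $m$-factors under the overloaded notation; there is no genuine analytic content beyond additivity of mean and variance and the single-fluorophore theorems.
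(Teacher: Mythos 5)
Your proposal is correct and matches the paper's (implicit) argument exactly: the paper simply notes that the mean and covariance of a sum of $m$ i.i.d.\ processes acquire a prefactor $m$ relative to the single-fluorophore expressions of Theorems \ref{thm:singleexpectation} and \ref{thm:singlecovariance}, and the $1/m$ factors in \eqref{eq:multi_cov_diag}--\eqref{eq:multi_cov_offdiag} arise precisely from the notational overloading of $\mu$ and $\mu^0$ that you identify and handle correctly. Your treatment of the second equality in \eqref{eq:multi_mean} via $\alpha_x=\nu_0\alpha^0_x+(1-\nu_0)\alpha^1_x$ and $\alpha_r=0$ is likewise exactly what the paper intends.
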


\paragraph{Parameterization.} 
At this point, it is instructive to think about the parameterization of the
multi-fluorophore model class $\multimodels$ (see also Table \ref{tab:notation}
on page \pageref{tab:notation}). The full model for the photon
emission process $Y^{(m)}_t$ depends on the fluorophore number $m$, on all
transition probabilities $Q = (q_{xz})$, on the initial state $\nu$,
and on an unspecified number of parameters that come with a concrete choice of
the inner model. If we only want to describe the first two moments $\mu$ and
$\Sigma$, however, several parameters become hidden and are not
required to be estimated for our purposes.

According to equations (\ref{eq:multi}a-d), we only need the fluorophore number
$m$, the fraction of initially bright fluorophores $\nu_0$, the probability
$q_{00}$ for a fluorophore to stay bright during the exposure, and the
eigenvalues $\lambda = (\lambda_x)_{x\in\states}$ as well as
coefficients $\alpha^0 = (\alpha^0_x)_{x\in\states}$ and $\alpha^1
= (\alpha^1_x)_{x\in\states}$ for the multi-exponential decay in order to
calculate the expectation and covariance. The parameters $m$ and $\nu_0$
contribute one degree of freedom each, while $q_{00}$ is fixed by $\lambda$ and
$\alpha^0$ due to equation \eqref{eq:alpha_constraint_general_1}. To specify
$\lambda$, assuming it is real, we need $r$ free components since $\lambda_r = 1$
is determined through the bleached state. Similarly, $\alpha^0_r = \alpha^1_r
= 0$. Due to the constraints \eqref{eq:alpha_constraints} in Lemma
\ref{lem:constraints}, the parameters $\alpha^0$ and $\alpha^1$ hence contribute
$r-1$ free components each.  This makes a total number of $3r$ independent
parameters, compared to $r^2 + (r-1)$ degrees of freedom needed to specify all
components of $Q$ and $\nu$. 

The three parameters $\theta = (\theta_1, \theta_2, \theta_3)$
are sufficient to specify the effect of the inner model in the second-order
description. Still, specific knowledge of the inner model is necessary,
since the relation of $\theta$ to other parameters is unclear otherwise.
For example, an inner model with a Poissonian photon statistics 
$\photons_t\,|\,\prestate_t = 0$ enforces $\theta_3 = 0$, which evidently
reduces the number of free parameters. Similarly, $\theta_2$ is not a free
parameter for the inner model that we employ to describe the fluorophore Alexa
647 in Section \ref{sec:alexa}; it is completely determined by $q_{00}$.

A setting we want to emphasize is the one where $\nu_0 = 1$, i.e., where each
fluorophore is bright at the beginning of the experiment. This can be enforced
by the experimental setup, like in the super-resolution scheme applied in
\cite{laitenberger_2018}. Then, the $r - 1$ parameters that are needed to
specify the coefficients $\alpha^1$ drop out of the formulae for $\mu$ and
$\Sigma$, which makes this choice particularly beneficial.

\section{Image Acquisition}
\label{sec:thinning}

In the previous section, we introduced an elaborate statistical model
$\multimodels$ for the number of photons that are generated by a set of $m$
fluorophores during a series of exposures in super-resolution microscopy.
We next look at the image acquisition procedure and discuss the relation
between the photon emission process $\photons^{(m)}$ and the final time trace
$\ccdvals^{(m)}$ captured by the CCD camera.
Fortunately, most processing steps subsequent to the emission of photons -- like
thinning in the microscope or amplification through the CCD camera -- can be
included into the model by modifying the photon statistics $p_{00}$ and
$p_{10}$, see \eqref{eq:great_p}.
Consequently, merely the parameters $\theta = (\theta_1, \theta_2, \theta_3)$
will be affected in our second-order description, and equations
(\ref{eq:multi_mean}--\ref{eq:multi_mean0}) for the expectation $\mu$ and
the covariance $\Sigma$ will remain intact: we just need to substitute
$\theta$ by suitable transformed parameters $\theta'$.\footnote{
  This is not entirely accurate. Camera noise contributions that do not
  depend on the fluorophore and its state of activity, which we called
  $\epsilon_t$ in equation \eqref{eq:camera} of Section \ref{sec:modelingsteps},
  cannot be modeled that way and have to be considered separately.  Their
  inclusion in the inner model would require a signal $Y_t > 0$ even for
  fluorophores in dark states $X'_t > 0$, which we explicitly prohibited during our
  derivations of $\mu$ and $\Sigma$ in the previous section.
}

It might thus seem superfluous to explicitly model
any further steps in the microscope and camera, since we will typically
estimate $\theta$ from the data anyway. However, there are several reasons
why it is important to understand how the original parameter $\theta$ is
transformed to $\theta'$.
First, these transformations could alter the constraints placed on
parameters by the inner model (like $\theta_3 = 0$ if $\photons_t
\,|\, \prestate_t = 0$ is Poissonian) by possibly introducing new parameters
(such that $\theta'_3$ could be a free parameter again, e.g., due to an
unknown variance of the amplification for the specific camera model).
Second, the relation between $\theta$ and $\theta'$ could be interesting in
its own right, because $\theta$ contains immediate information about
the actual physics of the fluorophore, while $\theta'$ merges this
information with further details of the experimental setup. This additional
degree of insight could also be helpful for Bayesian inference approaches,
where prior knowledge about the parameters is taken into account.

\paragraph{Thinning.}
In Section \ref{sec:modelingsteps}, we mentioned that only a certain fraction
of emitted photons hits the detector interface and is registered at
some CCD pixel. Many photons will fail to reach the optical pathway or will be
absorbed by the equipment (like lenses or mirrors). 
The probability that an emitted photon triggers a photoelectron in
a specified region of interest $R$ of the camera was denoted by
$p_\mathrm{d}$, such that 
\begin{equation}\label{eq:thinned-general}
\thinned_t \sim \mathrm{Bin}(\photons_t, p_\mathrm{d})
\end{equation}
models the thinned photon number for a single fluorophore. 
The parameter transformation from $\theta$ to $\theta'$ that accompanies
this thinning process can be established by plugging $\thinned_t$ in
equations (\ref{eq:theta}a-c) defining the inner parameters.
\begin{lemma}{}{thinning}
  The moment parameters $\theta'$ of the thinned photon counts $\thinned_t$
  are given by
  \begin{equation}\label{eq:theta_3_thinning}
    \theta_1' = p_\mathrm{d} \theta_1, 
    \quad \theta_2' = \theta_2, 
    \quad \theta_3' = \theta_3.
  \end{equation}
  Therefore, only $\theta_1$ is affected by binomial thinning while
  $\theta_2$ and $\theta_3$ are left unaffected.
\end{lemma}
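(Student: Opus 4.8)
The plan is to evaluate the three parameters $\theta_1',\theta_2',\theta_3'$ directly from their defining relations (\ref{eq:theta}a--c), with the emitted count $\photons_t$ replaced by the thinned count $\thinned_t$. The only inputs needed are the conditional--independence structure of binomial thinning --- given $\photons_t$, the variable $\thinned_t\sim\mathrm{Bin}(\photons_t,p_\mathrm{d})$ is independent of the hidden states $\state_t,\prestate_t$ --- together with the elementary identities $\EE[\thinned_t\mid\photons_t]=p_\mathrm{d}\photons_t$ and $\Var[\thinned_t\mid\photons_t]=p_\mathrm{d}(1-p_\mathrm{d})\photons_t$. I would also note that thinning leaves the inner transition matrix $\innermodel$ untouched, so $q_{00}$ carries over unchanged.

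First I would treat $\theta_1'$. Conditioning on $\prestate_t=0$ and applying the tower rule gives $\EE[\thinned_t\mid\prestate_t=0]=\EE\big[\,\EE[\thinned_t\mid\photons_t]\,\big|\,\prestate_t=0\,\big]=p_\mathrm{d}\,\EE[\photons_t\mid\prestate_t=0]=p_\mathrm{d}\theta_1$, hence $\theta_1'=p_\mathrm{d}\theta_1$. The same computation conditioned on $\{\state_t=\prestate_t=0\}$ yields $\EE[\thinned_t\mid\state_t=\prestate_t=0]=p_\mathrm{d}\,\EE[\photons_t\mid\state_t=\prestate_t=0]$, so in the ratio \eqref{eq:theta_2} defining $\theta_2'$ the factor $p_\mathrm{d}$ in the numerator cancels against the one inherited from $\theta_1'$ in the denominator, leaving $\theta_2'=\theta_2$.

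For $\theta_3'$ I would use the law of total variance, conditioning on $\photons_t$ (all quantities conditioned additionally on $\prestate_t=0$): $\Var[\thinned_t\mid\prestate_t=0]=\EE\big[\,p_\mathrm{d}(1-p_\mathrm{d})\photons_t\mid\prestate_t=0\,\big]+\Var\big[\,p_\mathrm{d}\photons_t\mid\prestate_t=0\,\big]=p_\mathrm{d}(1-p_\mathrm{d})\theta_1+p_\mathrm{d}^2\Var[\photons_t\mid\prestate_t=0]$. Substituting $\Var[\photons_t\mid\prestate_t=0]=\theta_1^2\theta_3+\theta_1$, which is just a rearrangement of \eqref{eq:theta_3}, the $p_\mathrm{d}^2\theta_1$ contributions cancel and one is left with $\Var[\thinned_t\mid\prestate_t=0]=p_\mathrm{d}\theta_1+p_\mathrm{d}^2\theta_1^2\theta_3$. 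Inserting $\theta_1'=p_\mathrm{d}\theta_1$ into \eqref{eq:theta_3} then gives $\theta_3'=\frac{p_\mathrm{d}\theta_1+p_\mathrm{d}^2\theta_1^2\theta_3}{(p_\mathrm{d}\theta_1)^2}-\frac{1}{p_\mathrm{d}\theta_1}=\theta_3$.

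The whole argument is bookkeeping; the only delicate point is the modeling assumption that the thinning acts independently of the hidden state given the photon count $\photons_t$ --- this is exactly what makes the tower rule and the variance decomposition factor the $p_\mathrm{d}$-dependence out cleanly, and without it the three identities would not hold so simply. I would also remark that the thinned process remains in the class $\singlemodels$, since $\thinned_t=0$ whenever $\photons_t=0$, so the structural form \eqref{eq:great_p} is inherited; as a result the expressions for $\mu$ and $\Sigma$ derived in Section \ref{sec:fluorophore} apply verbatim after the substitution $\theta\mapsto\theta'$.
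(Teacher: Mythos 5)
Your proposal is correct and follows essentially the same route as the paper: the tower rule for $\theta_1'$ and $\theta_2'$, and the law of total variance conditioned on $\photons_t$ for $\theta_3'$, with identical cancellations. Your closing remark that the thinned process stays in $\singlemodels$ (since $\thinned_t=0$ whenever $\photons_t=0$) is a sensible addition that the paper states in the surrounding text rather than in the proof itself.
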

\begin{proof}
Employing definitions (\ref{eq:theta}a-c) yields 
\begin{align*}
  \theta_1' 
    &= \EE[\thinned_t\,|\, \prestate_t = 0] 
    = \EE\big[\EE[\thinned_t\,|\photons_t]\,\big|\,\prestate_t = 0\big] \\
    &= p_\mathrm{d}\, \EE[\photons_t\,|\,\prestate_t = 0] = p_\mathrm{d}\,\theta_1,
\intertext{as well as}
  \theta_2' &= q_{00}\,\frac{\EE[\thinned_t\,|\,\prestate = \state = 0]}{\theta_1'} \\
  &= q_{00}\,\frac{p_\mathrm{d}\,\EE[\photons_t\,|\,\prestate = \state = 0]}{p_\mathrm{d}\,\theta_1} = \theta_2,
\intertext{and, if we use the law of total variance,}
  \theta_3' 
    &= \frac{\Var[\thinned_t\,|\,\prestate_t = 0]}{(\theta'_1)^2} - \frac{1}{\theta'_1} \\
    &= \frac{\EE\big[\Var[\thinned_t\,|\,\photons_t] \,\big|\,\prestate_t = 0\big] + \Var\big[\EE[\thinned_t\,|\,\photons_t] \,\big|\,\prestate_t = 0\big] }{(\theta_1')^2} - \frac{1}{\theta_1'} \\
    &= \frac{p_\mathrm{d} (1 - p_\mathrm{d})\,\theta_1 + p_\mathrm{d}^2\Var[\photons_t\,|\,\prestate_t = 0]}{p_\mathrm{d}^2\,\theta_1^2} - \frac{1}{p_\mathrm{d}\,\theta_1} \\
    &= \frac{\Var[\photons_t\,|\,\prestate_t = 0]}{\theta_1^2} - \frac{1}{\theta_1} = \theta_3.
\end{align*}
\end{proof}

\paragraph{Signal amplification.}
When photoelectrons are read out in an EMCCD camera, an electron amplifying
system that consists of a cascade of electron multipliers (EM) is triggered.  Each
stage of this cascade has a certain probability of generating extra electrons,
and the succession of many stages results in a stochastic signal amplification
of the incident photons. This introduces additional noise, which we consider in
the following. 

We denote the distribution that results from the signal amplification of
a single photo electron by $\mathcal{D}$, like in Section
\ref{sec:modelingsteps}. For convenience, we use the symbol $\thinned$ to
denote a random variable with the stationary distribution of
$\thinned_t\,|\,\prestate_t=0$, i.e., we condition our considerations on
bright fluorophores.
Then, the number $\electrons$ of electrons generated by $\thinned$ detected
photons is
\begin{equation*}
  \electrons = \sum_{k=1}^{\thinned} U_k,
\end{equation*}
where $U_k \sim \mathcal{D}$. By the law of total variation we obtain
\begin{equation}
  \frac{\Var[\electrons]}{\EE[\electrons]^2} = \frac{\Var[U_1]}{\EE[U_1]^2} \frac{1}{\EE[\thinned]} + \frac{\Var[\thinned]}{\EE[\thinned]^2}.
\label{eq:excess_noise}
\end{equation}
If $\thinned$ was Poisson distributed with
parameter $\lambda > 0$, it would follow that 
\begin{equation*}
  \frac{\Var[\electrons]}{\EE[\electrons]^2} = \left(\frac{\Var[U_1]}{\EE[U_1]^2}  +  1\right)\frac{1}{\lambda}.
\end{equation*}
For this reason, the term 
\begin{equation}\label{eq:excessnoise}
  f^2 = \frac{\Var[U_1]}{\EE[U_1]^2}  +  1
\end{equation}
is called the ``excess noise factor''. For the amplification models
considered in \cite{robbins_noise_2003,hirsch_stochastic_2013}, we have
$1\le f^2 \le 2$.  The factor $f^2$ is usually known for a given camera. 

For each frame, the camera accumulates the photo electrons over a certain length
of time, the exposure time, before multiplying them. Afterwards, the accumulated
and multiplied electrons $E$ pass through the A/D converter,
which introduces a factor $c>0$ between the actual mean
number of amplified electrons and the output signal. The (ideal) output signal
$\ccdvals$ produced by a bright fluorophore is thus given by
\begin{equation}
  \ccdvals = c\electrons.
  \label{eq:camera_output}
\end{equation}
Similar to the case of thinning, this transformation of the photon
statistics corresponds to a transformation of the inner parameters of the
model from $\theta$ to $\theta'$. This time, we find the transformation
rules
\begin{equation} \label{eq:theta_3_camera}
  \theta_1' = a\,\theta_1,\qquad 
  \theta_2' = \theta_2,\qquad
  \theta_3' = \theta_3 + \frac{f^2 - 1}{\theta_1},
\end{equation}
where 
\begin{equation}\label{eq:ampfactor}
  a = c\, \frac{\EE[\electrons]}{\EE[\thinned]}
\end{equation}
is the overall amplification factor that translates from detected photons
$\thinned$ to the CCD output $\ccdvals$. Result \eqref{eq:theta_3_camera}
follows from utilizing relation \eqref{eq:excess_noise} in expression
\eqref{eq:theta_3}.

\paragraph{Offset and background.} 
Equation \eqref{eq:camera_output} is an idealization of the true camera output.
It neglects background photons in the setup as well as additional noise from the
analog circuits and the A/D converter. Furthermore, a positive offset is usually
applied to the pixel values in order to avoid noise induced fluctuations into
the negative domain. In contrast to our previous considerations, all of these
effects cannot be integrated into the parameters $\theta$ because they do not
exclusively affect the photon statistics $p_{00}$ and $p_{10}$. Instead, they
are independent of the state $\prestate_t$ of the fluorophore. The true
multi-fluorophore output signal observed in the region $R$ in frame $t$ is given
by
\begin{equation}\label{eq:full_model}
  \ccdvals_t^{(m)} = c\electrons_t + o + \epsilon_t,
\end{equation}
where $\electrons_t$ is the respective amplified number of electrons in
frame $t$, $o$ is an offset value, and $\epsilon_t$ is a centered random
variable that subsumes all additional noise sources and is considered to be
independent of $\electrons_t$.
The background noise $\epsilon_t$, whose standard deviation we denote by
$\sigma_t$\label{bgnoise},
can depend on time since the camera electronics may adapt during the experiment.
Together with the offset $o$, $\sigma_t$ can be estimated from the image series
directly.

We want to remark that the parameter $a$ in \eqref{eq:ampfactor} can also be
estimated directly \cite{hirsch_stochastic_2013}.  This can be done by
illuminating the camera with a temporally constant but spatially inhomogeneous
light intensity, which leads to Poisson statistics in each pixel with an
inhomogeneous parameter. One can then estimate the mean and variance of the
camera outputs $\ccdvals$ at each pixel from a time series of such images, and
determine $af^2$ as the slope in a plot of $\Var[\ccdvals]$ against
$\EE[\ccdvals]$ with known $f^2$, since
\begin{equation*}
  \Var[\ccdvals] = c^2\, \Var[E] + \Var[\epsilon] = c^2 f^2 \frac{\EE[\electrons]^2}{\EE [\thinned]} + \Var[\epsilon]
  = af^2\, \EE[\ccdvals] + \mathrm{const}.
\end{equation*}

\paragraph{Remarks and full model.} 
The preceding considerations show that the modeling steps of thinning and signal
amplification transform the inner model parameters $\theta$. Other
contributions that stem from the experimental setup, like noise due to
background photons, cannot be included in the description by merely modifying
$\theta$ and require an approach along the lines of model $\ccdvals^{(m)}$ in
equation \eqref{eq:full_model}. 
Some of the parameters that emerge are known (like the excess noise $f^2$) or
can be estimated independently from the fluorophore model (like the
amplification factor $a$, the offset $o$, and the variance $\sigma_t^2$ of the
background noise $\epsilon_t$).
The detection probability $p_\mathrm{d}$, however, cannot be
separated statistically from the expected number of photons $\theta_1$ during
inference, which is why we will drop $p_\mathrm{d}$ from the final model
formulation, effectively working with $p_\mathrm{d}\, \theta$ when we
write $\theta$ in the following. We also assume preprocessed image data, where
the offset $o$ has been subtracted and where the signal was divided by the total
amplification $a$, i.e., we actually consider normalized data that is modeled by
\begin{equation}\label{eq:normalized}
  \frac{\ccdvals^{(m)}_t - o}{a}.
\end{equation}
Our full second order description for fluorescence time traces is then captured
by the following theorem, the notation of which is summarized by Table
\ref{tab:notation} on page \pageref{tab:notation}.

\begin{theorem}{}{full}
The mean $\mu$ and the covariance $\Sigma$ of the normalized process
\eqref{eq:normalized}, which models the fluorophore activity observed by
a camera, are given by
\begin{subequations}
\begin{align}
  \mu_t &=  m\,\theta_1 \sum_{x=0}^{r-1} 
  \big(\nu_0\,\alpha_x^0 + (1 - \nu_0)\,\alpha^1_x\big)\lambda_x^{t-1},
  \label{eq:multi_mean_full} \\
  \Sigma_{tt} &= 
  \frac{1}{m} \big(m\,\theta_1(\theta_3 + 1) + mf^2 - \mu_t\big) \mu_t + \sigma_t^2/a^2,
  \label{eq:multi_cov_diag_full} \\[0.1cm]
  \Sigma_{tt'} &=
  \frac{1}{m} \left[ \left(\theta_2 - q_{00} \frac{1-\theta_2}{1-q_{00}}\right)
    \mu^0_{t-t'} + \frac{1-\theta_2}{1-q_{00}}\mu^0_{t-t'+1} - \mu_t\right] \mu_{t'},
  \label{eq:multi_cov_offdiag_full}
  \shortintertext{where}
  \mu_t^0 &= m\,\theta_1 \sum_{x=0}^{r-1} \alpha^0_x\lambda_x^{t-1},
  \label{eq:multi_mean0_full}
\end{align}
\label{eq:final-moments}%
\end{subequations}%
for $t,t' = 1, \ldots, T$ with $t > t'$.
\end{theorem}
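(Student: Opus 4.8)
The plan is to recognize the normalized process \eqref{eq:normalized} as, up to an additive fluorophore-independent noise term, again a multi-fluorophore HTMM of exactly the type already analyzed, but with the inner-model triple $\theta = (\theta_1,\theta_2,\theta_3)$ replaced by the ``image'' parameters obtained by composing the thinning transformation of Lemma~\ref{lem:thinning} with the amplification transformation \eqref{eq:theta_3_camera}; the offset and background contribution is then added by hand as a final correction to the diagonal of $\Sigma$. Concretely, I would deduce \eqref{eq:final-moments} from the multi-fluorophore moment formulas \eqref{eq:multi_mean}--\eqref{eq:multi_mean0} together with Lemma~\ref{lem:thinning}, equation~\eqref{eq:theta_3_camera}, and the decomposition \eqref{eq:full_model}.

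The one genuine point to establish — and the step I expect to be the main obstacle — is that thinning and amplification keep us inside the model class $\singlemodels$ (and hence $\multimodels$), so that Theorems~\ref{thm:singleexpectation} and~\ref{thm:singlecovariance}, and therefore \eqref{eq:multi_mean}--\eqref{eq:multi_mean0}, apply verbatim after the substitution. This holds because both operations act on a frame's signal only through the emitted photon count $Y_t$ and in a way that is completely insensitive to which outer state the exposure ends in: every detected photon is thinned with the same probability $p_\mathrm{d}$, and every photo electron is amplified by an independent draw from $\mathcal{D}$, regardless of the inner-model trajectory. Consequently the transformed conditional photon distributions still satisfy $p'_{xx'}(0)=1$ for $x'\neq 0$ and still obey the common-exit condition \eqref{eq:commonexit}, so the matrix $P'(y)$ retains the structural form \eqref{eq:great_p}; the matrices $\innermodel$, $\outermodel$ and the i.i.d.\ coupling of the $m$ fluorophores are untouched. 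Hence the transformed model lies in $\multimodels$ and the second-order formulas carry over with $\theta\to\theta'$.

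Next I would compute the composed parameter map. Lemma~\ref{lem:thinning} gives $(\theta_1,\theta_2,\theta_3)\mapsto(p_\mathrm{d}\theta_1,\theta_2,\theta_3)$, and applying \eqref{eq:theta_3_camera} followed by the division by $a$ in \eqref{eq:normalized} yields effective parameters that, after absorbing $p_\mathrm{d}$ into $\theta_1$ as discussed, read $\theta_1$, $\theta_2$, and $\theta_3+(f^2-1)/\theta_1$. Substituting into \eqref{eq:multi_cov_diag}, the bracket $m\theta_1(\theta_3+1)+m$ becomes $m\theta_1(\theta_3+1)+m(f^2-1)+m = m\theta_1(\theta_3+1)+mf^2$, which is precisely \eqref{eq:multi_cov_diag_full} before the noise term; the mean \eqref{eq:multi_mean} and the off-diagonal expression \eqref{eq:multi_cov_offdiag} are unchanged in form because only $\theta_2$, $q_{00}$, $\mu$ and $\mu^0$ enter there and $\theta_2$ is invariant, giving \eqref{eq:multi_mean_full}, \eqref{eq:multi_cov_offdiag_full} and \eqref{eq:multi_mean0_full} (using $\alpha_x=\nu_0\alpha^0_x+(1-\nu_0)\alpha^1_x$ and $\alpha_r=\alpha^0_r=0$).

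Finally I would restore the offset and background. Writing the raw signal as in \eqref{eq:full_model}, the normalized process equals the transformed HTMM process plus $\epsilon_t/a$. Under the standing assumption that $\epsilon_t$ is centered, independent of the fluorophore-generated electrons, and uncorrelated across frames, it contributes nothing to $\mu_t$, adds $\Var[\epsilon_t]/a^2 = \sigma_t^2/a^2$ to $\Sigma_{tt}$, and contributes nothing to $\Sigma_{tt'}$ for $t\neq t'$; this produces the extra $\sigma_t^2/a^2$ in \eqref{eq:multi_cov_diag_full} and completes the argument. I would close by noting explicitly — as the footnote preceding the theorem already flags — that this last step is unavoidably separate: folding the background into $\theta$ would require $Y_t>0$ for dark fluorophores, contradicting the restriction $p_{xx'}(0)=1$ for $x'\neq 0$ used throughout the derivation.
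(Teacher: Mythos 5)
Your proposal is correct and follows essentially the same route as the paper, whose proof is a one-line combination of the multi-fluorophore moment formulas \eqref{eq:multi_mean}--\eqref{eq:multi_mean0} with the parameter transformation \eqref{eq:theta_3_camera} and the decomposition \eqref{eq:full_model}. You additionally spell out the verification (left implicit in the paper) that thinning and amplification preserve the structural form \eqref{eq:great_p} so that the second-order formulas apply after the substitution $\theta\to\theta'$, which is a worthwhile clarification but not a different argument.
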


\begin{proof}
  The respective expressions follow from combining equations (\ref{eq:multi}a-d)
  for the multi-fluorophore model $Y^{(m)}_t$ with result
  \eqref{eq:theta_3_camera} and definition \eqref{eq:full_model} established in
  this section. 
\end{proof}

\begin{table}
  \small\centering
  \begin{tabular}{@{}c@{~~~}ll|l|p{0.26\textwidth}}
    & \textbf{Symbol} & \multicolumn{1}{c}{\textbf{Meaning}} & \multicolumn{1}{c}{\textbf{Reference}} & \multicolumn{1}{c}{\textbf{Comment}} \\
    \toprule \\[-0.2cm]
    \multicolumn{5}{l}{\!\!\!\color{darkgray}\textit{stochastic fluorophore dynamics}} \\[0.2cm]
    & $\state_t$      & state after $t$-th exposure & p.\ \pageref{state} & \\
    & $\prestate_t$   & state before $t$-th exposure & p.\ \pageref{prestate} & \\[0.2cm]
    & $\photons_t$    & photons emitted in $t$-th exposure & p.\ \pageref{photons}& \\
    & $\ccdvals_t$    & camera output values in frame $t$ & eq.\ \eqref{eq:full_model}, p.\ \pageref{eq:full_model} & \\
    \\[0.1cm]
    \multicolumn{5}{l}{\!\!\!\color{darkgray}\textit{general model specification}} \\[0.2cm]
    & $\outermodel,\innermodel$   & long/short-time transition matrix & eq.\ \eqref{eq:deftransition}, p.\ \pageref{eq:deftransition} & 
      constrained in eq.\ \eqref{eq:modelmatrices} \\
    & $\nu = (\nu_x)$ & dist.\ of the initial state $\state_0$ & p.\ \pageref{initdist} & \\[0.2cm]
    & $q_{xz}$  & constrained transition probabilities & eq.\ \eqref{eq:modelmatrices}, p.\ \pageref{eq:modelmatrices} & entries of $\outermodel$ and $\innermodel$ \\
    & $p_{xz}$  & photon statistics & eq.\ \eqref{eq:photonstatistics}, p.\ \pageref{eq:photonstatistics}& 
      dist.\ of $\photons_t \,|\, \state_t = x, \prestate_t = z$, constrained in eq.\ \eqref{eq:great_p} \\
    \color{darkgray}\scriptsize$(*)$ & $m$ & number of i.i.d.\ fluorophores & p.\ \pageref{firstm}& central quantity of interest \\
    \\[0.1cm]
    \multicolumn{4}{l}{\!\!\!\color{darkgray}\textit{second order specification}} \\[0.2cm]
    \color{darkgray}\scriptsize$(*)$ & $\theta_1,\theta_2,\theta_3$ & inner model parameters & eq.\ \eqref{eq:theta}, p.\ \pageref{eq:theta} & 
    describe first two moments of $\photons_t \,|\, X_t \!\!=\!\! 0$, can be constrained (e.g., for Alexa 647) \\
    \color{darkgray}\scriptsize$(*)$ & $\nu_0$ & fraction of bright molecules at $t = 0$ & p.\ \pageref{initdist} & $\alpha^1$ drops out if $\nu_0 = 1$ \\
    \color{darkgray}\scriptsize$(*)$ & $q_{00}$ & prob.\ to stay bright in one exposure & eq.\ \eqref{eq:modelmatrices}, p.\ \pageref{eq:modelmatrices} & 
      usually connected to inner parameters $\theta$ \\
    \color{darkgray}\scriptsize$(*)$ & $\lambda = (\lambda_x)$ & eigenvalues of $\totalmodel = \innermodel\outermodel$ & p.\ \pageref{eigenvalues} &
      $\lambda_x\in[0,1]$ under suitable conditions, see appendix \ref{sec:eig} \\
                                      & $\alpha = (\alpha_x)$, & multi-exponential sum coefficients & eq.\ \eqref{eq:alpha}, p.\ \pageref{eq:alpha} & \\
     \color{darkgray}\scriptsize$(*)$ & $\alpha^0 = (\alpha^0_x)$, & & eq.\ \eqref{eq:alpha0}, p.\ \pageref{eq:alpha0} & \\
     \color{darkgray}\scriptsize$(*)$ & $\alpha^1 = (\alpha^1_x)$  & & eq.\ \eqref{eq:alpha1}, p.\ \pageref{eq:alpha1} & \\
    \\[0.1cm]
    \multicolumn{4}{l}{\!\!\!\color{darkgray}\textit{camera and background}} \\[0.2cm]
    & $f^2$ & excess relative variance of camera & eq.\ \eqref{eq:excessnoise}, p.\ \pageref{eq:excessnoise} & \\
    & $\sigma_t^2$ & background noise in frame $t$ & p.\ \pageref{bgnoise} & variance of $\epsilon_t$ in eq.\ \eqref{eq:full_model} \\
    & $a$ & overall amplification factor & eq.\ \eqref{eq:ampfactor}, p.\ \pageref{eq:ampfactor} & 

  \end{tabular}
  \caption{Overview of the notation and symbols used to describe the HTMM. An
    asterisk $(*)$ indicates that the corresponding parameter is usually unknown
    and needs to be estimated (jointly) from the time traces by the methods
    described in section \ref{sec:estimation}. The three parameters describing
    the influence of camera and noise can be obtained (or estimated)
    independently. The references refer to the first mention of the respective
    quantity in sections \ref{sec:fluorophore} and \ref{sec:thinning}.
  }
  \label{tab:notation}
\end{table}

\section{Estimation}
\label{sec:estimation}

In the previous sections, we have developed a statistical model for the time
series of the observable fluorescence generated by $m$ fluorophores. We now
address the central goal of this article: estimating $m$ with our
model. To this end, let
\begin{equation}
 y = (y_t)_{t=1}^T
\end{equation}
be a realization of the process $\ccdvals^{(m)}$ in \eqref{eq:full_model} that
models the observable fluorescence during the measurement process. In practice,
$y$ is obtained from a series of microscopy images (frames) by summing the
camera output values over some fixed region of interest. See
\cite{laitenberger_2018} for details on necessary or beneficial preprocessing
steps.

The fluorophore number $m$ will not be the only unknown parameter of
$\ccdvals^{(m)}$. Indeed, several (or even all) of the parameters $\gamma = (m,
q_{00}, \nu_0, \alpha^0, \alpha^1, \lambda, \theta)$ that describe the
first two moments of $\ccdvals^{(m)}$ (see Table \ref{tab:notation}) are usually
not known precisely, since the properties of the fluorophore heavily depend on
the fluorophore type itself and on details of the experimental setting.  The
preferable choice is therefore to jointly estimate all values in $\gamma$,
respecting the constraints that are inherent to the model.\footnote{
  See the end of Section \ref{sec:fluorophore} for a general discussion of
  these constraints, and Section \ref{sec:alexa} for a discussion in context
  of the fluorophore Alexa 647.}

\paragraph{Pseudo log-likelihood.}
The process $\ccdvals^{(m)}$ has a complicated non-Gaussian and
non-stationary structure with long term correlations. 
Furthermore, it is essentially impossible to evaluate the likelihood function
numerically for a given set of parameters as it consists of too many terms
(see Remark \ref{rem:complexity}).
This makes direct likelihood-based methods to estimate the model parameters
$\gamma$ unsuitable. 
To overcome this difficulty, we approximate $\ccdvals^{(m)}$ by a Gaussian
process with known parametric form of the expectation $\mu = \mu(\gamma)$ and
covariance $\Sigma = \Sigma(\gamma)$, see equations
(\ref{eq:final-moments}a-d). This leads to the \emph{pseudo log-likelihood}
\begin{equation}\label{eq:pseudologlikelihood}
  \tilde{l}_y(\gamma) = -\frac{1}{2}\big[(y - \mu) \,
  \Sigma^{-1}\, (y - \mu) + \log\det\Sigma \big],
\end{equation}
where we neglect an additive constant that would belong to the full
log-likelihood of the Gaussian process.
We estimate the model parameters $\gamma$ by finding a set of values
$\hat{\gamma}$ that maximize this expression,
\begin{equation}\label{eq:gammahat}
  \hat{\gamma} = \argmax_{\gamma\in\,\Gamma}\, \tilde{l}_y(\gamma).
\end{equation}
While this approach significantly simplifies the estimation compared to
direct treatment of $\ccdvals^{(m)}$, equation \eqref{eq:gammahat} still
represents a non-convex optimization problem over a parameter space
$\Gamma$ that obeys several (non-linear) constraints. As such, there is
neither a closed theory nor a canonical method for numerical treatment
available.

In the general case, where all parameters in $\gamma$ need to be estimated and
no additional constraints can be posed, $\Gamma$ will be a manifold of dimension
$3r+3$ (see the discussion at the end of Section \ref{sec:fluorophore}).
For specific choices of the inner model, there could be fewer free parameters in
$\theta$, reducing the dimension of $\Gamma$. 
If $\nu_0$ is known and not equal to 1 or 0, there is 1 parameter less. If
it is known and equal to 1 or 0, then there are even $r$ free parameters
less, because $\alpha^1$ respectively $\alpha^0$ drop out of the
expressions in \eqref{eq:final-moments}.
In case of Alexa 647, with a model of $r=3$ dark states and an additional
constraint on $\theta_2$, see Section \ref{sec:alexa}, we are thus confronted
with an $8$, $10$, or $11$-dimensional parameter space $\Gamma$.

\paragraph{Numerical procedure.}
Finding a numerical solution of the optimization problem \eqref{eq:gammahat}
poses several challenges. First, the high dimensionality of $\Gamma$ in
combination with both equality and inequality constraints forces one to apply
very general optimization schemes (like (quasi) Newton methods,
primal-dual-splitting, or nonlinear conjugate gradient methods). Some
schemes rely on gradient information about $\tilde{l}_y(\gamma)$
while others are gradient-free. All of them, however, work in a local fashion
and thus crucially rely on the choice of suitable initial parameters
$\gamma_\mathrm{init}$.
Indeed, optimizing over all parameters of $\gamma$ simultaneously was
empirically found to depend sensitively on the initial values and did not always
converge to the global maximum. Instead, approaches where different components
of $\gamma$ were held fixed at times -- and partial optimizations with
methods like the simplex-search algorithm by Nelder and Mead \cite{nelder1965}
were conducted sequentially -- turned out to be more successful in practice.

To find suitable initial parameters, different methods can be applied. One
option is to first employ a multi exponential fit of the expectation value
$\mu_t$. This will yield first guesses for $\lambda$ and for the product
$m\theta_1\alpha$. However, this fit may be of poor quality if the number $r$ of
dark states is large. The value of $\theta_1$ may furthermore be guessed from
late segments in the time traces, where with high probability at most one
fluorophore is active due to bleaching.
In case of the experimental data analyzed in \cite{laitenberger_2018}, we
eventually found a set of initial parameters that worked well on a range of
different image series in experimentally similar conditions.

\paragraph{Estimation results.}
In order to demonstrate that the proposed pseudo log-likelihood approach works
in principle, we apply it to estimate $\gamma$ for simulated traces. We use the
same model choices like in Figure \ref{fig:simulations} and
\ref{fig:covariances} and consider the case $\nu_0 = 1$. The true parameter
values in this setting are given by
\begin{equation*}
  \theta_1 \approx 767, \quad \theta_2 \approx 0.95, \quad \theta_3 \approx
  0.056, \quad q_{00} \approx 0.9, \\
\end{equation*}
as well as
\begin{equation*}
  \lambda \approx (0.99, 0.89, 0.86),\quad \alpha^0 \approx (0.05, 1.23, -0.28).
\end{equation*}
Including the molecule number $m$, there are 8 degrees of freedom in total. 
We contrast two different choices of initial parameters:
the true parameters $\gamma^1_\mathrm{init} = \gamma$, and an arbitrary
selection $\gamma^2_\mathrm{init}$ determined by
\begin{equation}\label{eq:badinitials}
  \theta_1 = 1, \quad \theta_3 = 0.1, \quad \lambda = (0.99, 0.9, 0.8),\quad
  \text{and}\quad m\,\theta_1\alpha^0 = (1, 1, 1),
\end{equation}
which yields parameters very different from the true ones (with the
exception of $\lambda$).

\begin{figure}
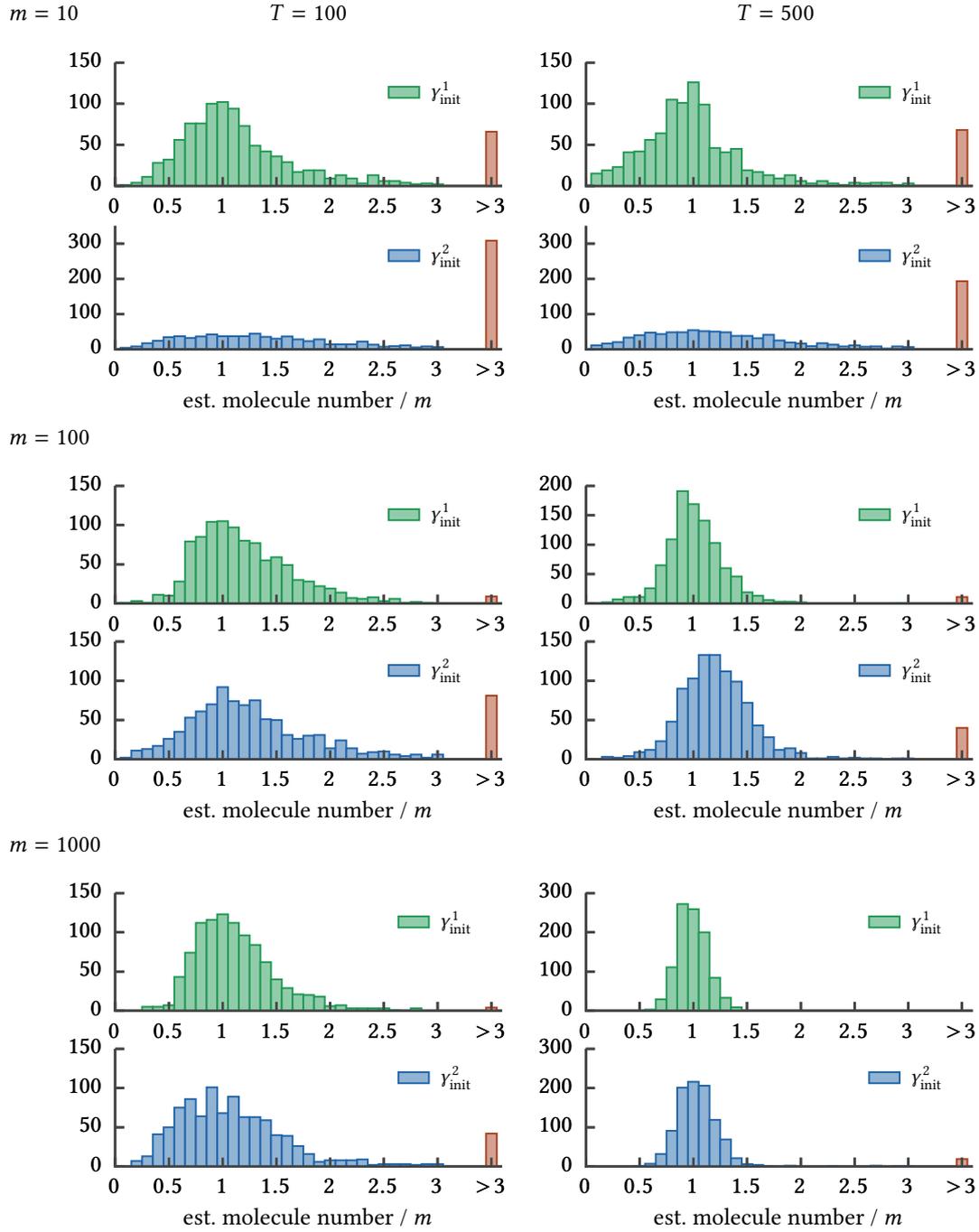

  \centering\small
  \begin{minipage}{\linewidth}$m = 10$\hspace{2.7cm} $T = 100$\hspace{5.6cm} $T = 500$\end{minipage}

  \vspace{0.3cm}
  \input{counting-10.tex}
  \vspace{-0.3cm}

  \begin{minipage}{\linewidth}$m = 100$\end{minipage}

  \vspace{0.3cm}
  \input{counting-100.tex}
  \vspace{-0.3cm}

  \begin{minipage}{\linewidth}$m = 1000$\end{minipage}

  \vspace{0.3cm}
  \input{counting-1000.tex}
  \caption{Estimated molecule number for simulated fluorescence traces of length
    $T = 100$ (left column) as well as $T = 500$ (right column). For each
    $m\in\{10, 100, 1000\}$ and initial parameters $\gamma_\mathrm{init}^{1}$
    (true values, green) or $\gamma_\mathrm{init}^{2}$ (\enquote{wrong} values
    defined by \eqref{eq:badinitials}, blue), we show histograms obtained from
    the empirical distribution of the pseudo log-likelihood estimator for 1000
    independent repetitions. Since the estimator sometimes failed to converge
    to reasonable solutions (usually resulting in exploding values for $m$),
    we collect estimates $> 3m$ in the red bar on the right of each graph.}
  \label{fig:counting}
\end{figure}

The estimation results for $m \in \{10, 100, 1000\}$ under the observation of $T
\in \{100, 500\}$ time points of the simulated process $\photons^{(m)}$ are
depicted in Figure \ref{fig:counting}. One can see that the estimates are
largely reasonable, especially when $m$ is large or when the initial parameters
are set to the oracle choice $\gamma^1_\mathrm{init}$. On the other hand, the
estimator evidently struggles for small values of $m$ and poor initial
parameters $\gamma^2_\mathrm{init}$, where it even fails to provide the right
magnitude of $m$ in about one fourth of the repetitions. Generally, providing
longer fluorescence traces ($T = 500$ instead of $100$) improves the estimator
notably -- even though it also results in a more pronounced bias under
$\gamma^2_\mathrm{init}$ for $m = 100$.

The estimation of other parameters, like the average photon number $\theta_1$,
the excess relative variance $\theta_3$, or the probability $q_{00}$ that the
fluorophore stays in the bright state during exposure, performed similarly to
the estimation of $m$.
However, we often observed that the estimator severely struggles to
guess the values of $\alpha$ and $\lambda$ correctly, particularly for small
values of $m$. This is not surprising, since the coefficients and exponents for
the multi-exponential decay $\sum_{x\in\states} \alpha_x\lambda_x^{t-1}$ are
very hard to identify even under generous conditions; let alone from few noisy
molecules. 

We want to emphasize that the estimation scheme introduced here is preliminary
and can be improved in several ways. 
First, there are various (local) optimization algorithms that typically perform
superior to the basic Nelder-Mead method we applied. For example, making use of
gradients of the pseudo log-likelihood, which can be calculated explicitly, could
greatly improve the performance and runtime of the estimation process.
Furthermore, our results indicate that prior knowledge -- in form of suitable
initial parameters or prior distributions on the parameters -- have 
beneficial effects on the estimation results. Indeed, many of the failed
estimations arise due to convergence to minima far from the true parameters.
The adoption of Bayesian methods therefore appears promising to yield more
stable and reliable estimators.

\section{Case Study: Alexa 647}
\label{sec:alexa}
In the following, we specialize our general model for the fluorophore
Alexa 647, which was used in the experimental work of
\cite{laitenberger_2018}. Fluorophores of the Alexa series are popular in
diverse areas of biomedical research. Due to their properties, like high
photostability and brightness, they are amongst the most common choices for super
resolution microscopy markers for in vitro cell experiments.
Alexa 647 dyes can be used to label a wide variety of molecules, e.g., DNA
\cite{laitenberger_2018} or proteins like IgG antibodies, streptavidin, or
transferrin \cite{berlier_2003}.
They have their absorption maximum at $650\,\mathrm{nm}$ and their emission
maximum at $671\,\mathrm{nm}$.

\paragraph{Markov model.} 
We focus on the inner and outer models that are illustrated in Figure
\ref{fig:model-alexa}. 
According to the statistical fluorophore model $\singlemodels$ established in
Section \ref{sec:fluorophore}, we need several components in order to describe
the full fluorophore behavior: the inner and outer transition matrices
$\innermodel$ and $\outermodel$, represented by the values $Q = (q_{xz})$,
and the photon statistics $p_{00}(y)$ and $p_{10}(y)$. 
The outer model depicted in Figure \ref{fig:model-alexa} has $r = 3$ dark
states, and the respective transition matrix is given by
\begin{equation*}
  \outermodel = \begin{pmatrix}
    1 & q_{01} & q_{02} & 0 \\
    0 & q_{11} & q_{12} & 0 \\
    0 & q_{21} & q_{22} & 0 \\
    0 & q_{31} & q_{32} & 1 \\
  \end{pmatrix}.
\end{equation*}
The inner model, in contrast, is given through a Markov chain with inner states
$\mathrm{S}_0$, $\mathrm{S}_1$, and $\mathrm{D}_3$. We furthermore include an
``exit'' state $\mathrm{E}$ (see Remark \ref{rem:commonexit}) in our description, which
indicates the transition to one of the dark states of the outer model. This is
the only possibility for the fluorophore to leave the bright state during
exposure.  The short-time matrix $\innermodel$ for $r = 3$ is
\begin{equation*}
  \innermodel = \begin{pmatrix}
    q_{00} & 0 & 0 & 0 \\
    q_{10} & 1 & 0 & 0 \\
    q_{20} & 0 & 1 & 0 \\
    q_{30} & 0 & 0 & 1 \\
  \end{pmatrix}.
\end{equation*}
Photons are emitted in this inner model if the fluorophore makes the transition
from the excited singlet state $\mathrm{S}_1$ to the ground state $\mathrm{S}_0$. 

Note that this model for Alexa 647 is not an exact representation of
the quantum mechanical state diagram of the fluorophore, which would require the
inclusion of a high number of states that each possess vibrational and
rotational substates. However, the full details of the quantum mechanical energy
landscape are unknown for most fluorophores, and, according to Remark
\ref{rem:compoundstates}, they also do not have to be known, since states that
live on similar timescales can be identified. In fact, it appears from
the empirical study in \cite{laitenberger_2018} that the description
provided above captures the essential features of the observed behavior of Alexa
647.
Furthermore, this Markov model -- or slight modifications thereof -- should be
appropriate to model other fluorophores, too.

\paragraph{Photon statistics.}
We now derive the photon statistics $p_{00}$ and $p_{10}$ for
a single exposure. Our derivation is not based on a rigorous
treatment of the inner Markov chain in Figure \ref{fig:model-alexa} but on
a reasonable approximation.
For convenience, we will use the symbol $\photons$ to refer to
a random variable that has distribution $\photons_t\,|\,\prestate_t = 0$ in
the following. This means that $Y$ will \emph{not} refer to the single
fluorophore process $(Y_t)_{t=1}^T$ for the duration of this section.

Let us call a maximal uninterrupted sequence of transitions between
$\mathrm{S}_0$ and $\mathrm{S}_1$ a ``burst''. A burst is ended by a transition
to $\mathrm{D}_3$. This leads to a geometric distribution: do the loop
$\mathrm{S}_0\to \mathrm{S}_1\to \mathrm{S}_0$ until failure $\mathrm{S}_1\to
\mathrm{D}_3$. Calling the probability of failure $\geop\in(0,1)$, it is clear that the
number of loops, and hence the number of photons in this burst, has a geometric
distribution with parameter $1 - \geop$. During each exposure interval, there
will be a number of $B$ bursts such that the total number of photons $Y$ is
a sum of $B$ independent geometrically distributed random variables. This leads
to a negative binomial distribution with parameters $B$ and $1 - p$,
\begin{align}\label{eq:negbin}
  Y&\sim \mathrm{NegBin}(B,1 - \geop).
\end{align}
The number of bursts $B$ is a random quantity. To determine its distribution,
first consider the case that occurs when the bright state can never be left.
Then the distribution of $B$ would be approximately Poissonian: dividing the
exposure interval into many small intervals, each much longer than a typical
burst but much smaller than the exposure time\footnote{
  This is possible due to the large difference in
  transition rates between $\mathrm{S}_0\to \mathrm{S}_1$ and $\mathrm{D}_3\to
  \mathrm{S}_0$.}, 
there is a small probability for a burst in each interval and a large number of
intervals such that one is in the Poisson limit of the binomial distribution.
When taking into account the transition to the exit state $\mathrm{E}$, there is
a nonzero
probability for a ``failure'' (i.e., exiting) before each burst. Hence, every
burst can be viewed as one successful trial, and the bursts continue until
either the exit state or the end of the exposure time is reached. Therefore,
the number of bursts is given by the minimum of a Poissonian and a geometric
random variable that are independent of each other,
\begin{align}
  B &=    \min(Z, Q), \nonumber\\
  Z &\sim \mathrm{Poisson}(\poi), \label{eq:microrandomvariables}\\
  Q &\sim \mathrm{Geom}(1 - \geob), \nonumber
\end{align}
with parameters $\poi > 0$ and $0 < \geob < 1$. The transition to the
exit state happens if and only if a failure happened before the end of the
exposure time. This is the case when $Z > Q$. The following result provides
a connection between the parameters $\geob$ and $\poi$ of the photon
distribution and the parameter $q_{00}$ of the transition matrix $\innermodel$. 
\begin{lemma}{}{microparams}
  The Alexa 647 fluorophore stays in the bright state during exposure with probability
  \begin{equation}\label{eq:q00_short_time}
    q_{00} = \PP(Z\le Q) = e^{-(1-\geob)\poi} .
  \end{equation}
\end{lemma}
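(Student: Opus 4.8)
The plan is to first translate the event ``the fluorophore stays bright throughout the exposure'' into an event about the auxiliary variables $Z$ and $G$, and then to evaluate the resulting probability by a one-line computation. By construction the number of bursts in a frame is $B = \min(Z, G)$, where $G \sim \mathrm{Geom}(1-\geob)$ counts how many bursts would occur before the fluorophore jumps to the exit state $\mathrm{E}$, and $Z \sim \mathrm{Poisson}(\poi)$ caps the number of bursts that fit into the exposure window; the two are independent. As recorded in the text preceding the lemma, the transition to $\mathrm{E}$ takes place precisely when the exit ``clock'' expires strictly before the exposure ends, i.e.\ when $Z > G$. Hence the fluorophore remains bright for the whole frame exactly on the complementary event, so that
\begin{equation*}
  q_{00} = \PP(Z \le G).
\end{equation*}
The first step of the proof is simply to make this identification explicit and to fix the convention $\PP(G = k) = (1-\geob)\,\geob^{\,k}$ for $k \in \NN$ (number of successful bursts before the first failure, with failure probability $1-\geob$), which is the one consistent with the description ``$G \sim \mathrm{Geom}(1-\geob)$''.

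The second step is the evaluation. I would condition on $Z$ and invoke independence,
\begin{equation*}
  \PP(Z \le G) = \sum_{k=0}^{\infty} \PP(Z = k)\,\PP(G \ge k),
\end{equation*}
use the geometric survival function $\PP(G \ge k) = \sum_{j \ge k}(1-\geob)\,\geob^{\,j} = \geob^{\,k}$ together with $\PP(Z = k) = e^{-\poi}\poi^{k}/k!$, and sum the exponential series:
\begin{equation*}
  \PP(Z \le G) = \sum_{k=0}^{\infty} e^{-\poi}\,\frac{\poi^{k}}{k!}\,\geob^{\,k}
  = e^{-\poi}\,e^{\poi\geob} = e^{-(1-\geob)\poi}.
\end{equation*}
This is exactly the claimed value of $q_{00}$, and one can sanity-check the two extremes: as $\geob \to 1$ the fluorophore essentially never exits and $q_{00} \to 1$, while as $\geob \to 0$ one has $G = 0$ almost surely, so $q_{00} = \PP(Z = 0) = e^{-\poi}$, again matching the formula.

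There is no real analytic difficulty here; the computation is a routine Poisson--geometric convolution. The only point I would state carefully rather than compute is the bookkeeping in the first step, namely that ``staying bright'' corresponds to $\{Z \le G\}$ and not to $\{Z < G\}$ — equivalently, that the boundary configuration in which the exit trial and the exposure window run out together is assigned to the bright outcome. This is precisely the interpretation adopted before the lemma (``the transition to the exit state happens if and only if $\ldots Z > G$''), so the identity $q_{00} = \PP(Z \le G)$ holds and the closed form follows immediately.
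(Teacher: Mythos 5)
Your proposal is correct and follows essentially the same route as the paper: both condition on the Poisson variable $Z$ and use the geometric tail $\PP(G\ge k)=\geob^{\,k}$ (the paper computes the complementary event $\PP(Z>G)$ and subtracts from one, which is the same calculation). Your explicit remark about the boundary convention $\{Z\le G\}$ versus $\{Z<G\}$ is a sensible clarification but does not change the argument.
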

\begin{proof}
  Direct calculation yields
\begin{align*}
\PP(Z>Q) 
  &= \sum_{z=0}^\infty \PP(Z=z) \PP(Q<z)
  = \sum_{z=0}^\infty e^{-\poi}\frac{\poi^z}{z!}\sum_{k=0}^{z-1} 
    \geob^k (1-\geob)\nonumber\\
  &= \sum_{z=0}^\infty e^{-\poi}\frac{\poi^z}{z!} (1-\geob^z) 
     = 1- \sum_{z=0}^\infty e^{-\poi}\frac{(\geob\poi)^z}{z!}\nonumber\\
  &= 1-e^{-(1-\geob)\poi} = 1 - q_{00},
\end{align*}
which shows the claim.
\end{proof}

We next derive the moment generating function $G$ of $Y$. It can be expressed
via the moment generating function $G^B$ of the number $B$ of bursts.

\begin{lemma}{}{moment_Y}
  The moment generating function of $Y$ is given by
  \begin{align}
    G(\xi) &= G^B\!\left(\log\!\left(\frac{1-\geop}{1-\geop e^\xi}\right)\right),
    \label{eq:moment_generating_function_Y}
    \shortintertext{where}
    G^B(\xi) 
    &= 
      e^{-(1-q)\poi} e^{-(1-e^\xi)q\poi}  +
      \big(1-e^{(1-q)\poi}\big) \frac{1-q}{1-qe^\xi}
      \frac{1-e^{-(1-qe^\xi)\poi}}{1 - e^{(1-q)\poi}} \nonumber \\[0.15cm]
    &= q_{00}\, G_{00}^B(\xi) + (1 - q_{00})\, G_{10}^B(\xi).
    \label{eq:moment_generating_function_B}
  \end{align}
  The functions $G^B_{00}$ and $G^B_{10}$ denote the generating functions of $B$
  conditioned on $Z \le Q$ and $Z > Q$, respectively.
\end{lemma}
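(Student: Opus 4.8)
The plan is to establish the two displayed identities in turn, in both cases by elementary generating-function manipulations. For \eqref{eq:moment_generating_function_Y}, I would use that, by the construction described in this section, $\photons = \sum_{k=1}^{B} W_k$ is a random sum of i.i.d.\ copies $W_k$ of the per-burst photon count, with the number of bursts $B$ independent of the $W_k$. Since a single burst produces a geometric number of photons with parameter $1-\geop$, i.e.\ $\PP(W_1 = j) = (1-\geop)\,\geop^{\,j}$ for $j\in\NN_0$, its moment generating function is $\EE[e^{\xi W_1}] = (1-\geop)/(1-\geop e^{\xi})$ whenever $\geop e^{\xi} < 1$. Conditioning on $B$ and using the independence would then give
\begin{equation*}
  G(\xi) \;=\; \EE\big[\,\EE[e^{\xi W_1}]^{\,B}\,\big] \;=\; \EE\Big[e^{B\log\!\big((1-\geop)/(1-\geop e^\xi)\big)}\Big] \;=\; G^{B}\!\Big(\log\tfrac{1-\geop}{1-\geop e^\xi}\Big),
\end{equation*}
which is \eqref{eq:moment_generating_function_Y}.

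For \eqref{eq:moment_generating_function_B}, the idea is to split according to whether the exit state $\mathrm{E}$ is reached during the exposure, that is, on the events $\{Z\le G\}$ and $\{Z>G\}$, which have probabilities $q_{00}$ and $1-q_{00}$ by Lemma \ref{lem:microparams}. On the first event $B=\min(Z,G)=Z$ and on the second $B=G$, so
\begin{equation*}
  G^{B}(\xi) \;=\; \EE\big[e^{\xi Z}\,\mathbbm{1}_{\{Z\le G\}}\big] \;+\; \EE\big[e^{\xi G}\,\mathbbm{1}_{\{Z>G\}}\big],
\end{equation*}
and I would evaluate the two terms by direct summation, using the independence of $Z$ and $G$ together with the tail $\PP(G\ge z)=\geob^{\,z}$. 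The first term telescopes into a single exponential series, $\EE[e^{\xi Z}\mathbbm{1}_{\{Z\le G\}}]=e^{-\poi}\sum_{z\ge0}(\geob e^{\xi}\poi)^{z}/z!=e^{-\poi(1-\geob e^\xi)}$; rewriting the exponent as $1-\geob e^\xi=(1-\geob)+(1-e^\xi)\geob$ and invoking $q_{00}=e^{-(1-\geob)\poi}$ exhibits it as $q_{00}\,G^{B}_{00}(\xi)$. For the second term I would insert $\PP(Z>g)=1-\sum_{j=0}^{g}e^{-\poi}\poi^{j}/j!$ and exchange the order of summation (summing over $g\ge j$ first and using the geometric series), which collapses the resulting double sum to $\tfrac{1-\geob}{1-\geob e^\xi}\,e^{-\poi(1-\geob e^\xi)}$ and yields
\begin{equation*}
  \EE\big[e^{\xi G}\,\mathbbm{1}_{\{Z>G\}}\big] \;=\; \frac{1-\geob}{1-\geob e^\xi}\Big(1-e^{-\poi(1-\geob e^\xi)}\Big) \;=\; (1-q_{00})\,G^{B}_{10}(\xi).
\end{equation*}
Adding the two contributions gives \eqref{eq:moment_generating_function_B}, and the coefficients of $q_{00}$ and $1-q_{00}$ are then precisely the claimed conditional generating functions $G^{B}_{00}$ and $G^{B}_{10}$.

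I expect the difficulty here to be purely computational bookkeeping rather than conceptual. The main pitfalls are keeping the geometric-distribution conventions consistent (support on $\NN_0$, and which ``parameter'' names which success probability) so that the moment generating functions and the tails $\PP(G\ge z)$, $\PP(Z>g)$ come out right, and carrying out the interchange of summation in the $\{Z>G\}$ term cleanly while recognising the emerging series as exponentials. Throughout, one should also keep track of the fact that these manipulations are valid only for $\xi$ in a neighbourhood of $0$ small enough that both $\geop e^\xi<1$ and $\geob e^\xi<1$, consistently with the finiteness of the generating functions assumed earlier in this section.
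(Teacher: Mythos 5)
Your proposal is correct and follows essentially the same route as the paper: the first identity is obtained in both cases by the tower property over $B$ together with the geometric per-burst moment generating function $(1-\geop)/(1-\geop e^{\xi})$, and the second by separating the cases where the exit state is or is not reached. The only cosmetic difference is that you partition the expectation directly on the events $\{Z\le G\}$ and $\{Z>G\}$ (where $B=Z$ and $B=G$, respectively), whereas the paper first conditions on $Z$ and splits the sum over the values of $G$; the two orderings produce the same double sums and the same two terms $q_{00}\,G^{B}_{00}+(1-q_{00})\,G^{B}_{10}$.
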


\begin{proof}
Based on the distributions of the auxiliary random variables $B$, $Z$, and $Q$,
we calculate
\begin{align*}
  G^B(\xi) 
  &= \EE\big[e^{\xi B}\big] 
  = \EE\Big[\EE\big[e^{\xi B}|\, Z\big]\Big] \\
  &= \EE\left[  \sum_{k=0}^{Z-1} e^{\xi k} \geob^k(1-\geob) 
     + \sum_{k=Z}^\infty e^{\xi Z} \geob^k(1-\geob)  \right] \\
  &= \EE\left[ (1-\geob)\frac{1-(\geob e^\xi)^Z}{1-\geob e^\xi} 
     + (qe^\xi)^Z \right] \\
  &= \big(1-e^{(1-q)\poi}\big) \frac{1-q}{1-qe^\xi}
  \frac{1-e^{-(1-qe^\xi)\poi}}{1 - e^{(1-q)\poi}} 
     + e^{-(1-q)\poi} e^{-(1-e^\xi)q\poi}.
\end{align*}
The decomposition of $G^B$ that is implied in
\eqref{eq:moment_generating_function_B} states that $B$ given the event $Z \le
Q$ is Poisson distributed with the reduced parameter $\poi\geob$. 
Using Lemma \ref{lem:microparams}, this can be established as follows:
\begin{align*}
  \PP(Z = z \,|\, Z \le Q) &= \frac{\PP(Z = z) \,\PP(z \le Q)}{\PP(Z\le Q)} 
    = \frac{\frac{\poi^z}{z!} e^{-\poi} \sum_{k=z}^\infty \geob^k (1-\geob)}{e^{-(1-\geob)\poi}}
    = \frac{(\poi\geob)^z}{z!} e^{\poi\geob}.
\end{align*}
Finally, we have 
\begin{equation*}
  G(\xi) = \EE\big[e^{\xi \photons}\big] 
    = \EE\Big[\EE\big[e^{\xi \photons}|\,B\big]\Big]
    = \EE\left[\left(\frac{1-\geop}{1-\geop e^\xi}\right)^{\!\!B}\right]
    = G^B\!\left(\log\!\left(\frac{1-\geop}{1-\geop e^\xi}\right)\right)
\end{equation*}
for the moment generating function of $\photons$.
\end{proof}

Similar results hold for the conditional generating functions $G_{00}$ and
$G_{10}$, which are expressible by $G^B_{00}$ and $G^B_{10}$ in a likewise
fashion. Therefore, the photon statistics $p_{00}$ and $p_{10}$ are
completely specified in terms of the inner parameters $\geop$, $\geob$, and
$\poi$ by equations \eqref{eq:moment_generating_function_Y} and
\eqref{eq:moment_generating_function_B}. 

\paragraph{Moments.} In order to understand the effects of our inner
model for the second-order description of the fluorophore, we look at the
parameters $\theta_1$, $\theta_2$, and $\theta_3$ as introduced in
\eqref{eq:theta}. 
\begin{lemma}{}{alexamomentparameters}
  In the Alexa 647 inner model, the moment parameters are given by:
  \begin{subequations}
  \begin{align}
    \theta_1 &= \frac{p}{1-p}\frac{q}{1-q}(1 - q_{00}), \label{eq:theta1alexa}\\
    \theta_2 &= -\frac{q_{00}\log{q_{00}}}{1 - q_{00}}, \label{eq:theta2alexa}\\
    \theta_3 &= \frac{2}{1-q_{00}}\left(\frac{1-\geob}{\geob} - \theta_2 + 1 \right) - 1. \label{eq:theta3alexa}
  \end{align}
  \end{subequations}
\end{lemma}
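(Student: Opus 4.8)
The strategy is to reduce all three moment parameters to moments of the burst count $B=\min(Z,G)$ and then eliminate $\poi$ via Lemma~\ref{lem:microparams}. Write $Y$ for $\photons_t\,|\,\prestate_t=0$, so that $Y\,|\,B\sim\mathrm{NegBin}(B,1-\geop)$ with the burst lengths independent of $(Z,G)$; hence $\EE[Y\,|\,B]=\tfrac{\geop}{1-\geop}B$ and $\Var[Y\,|\,B]=\tfrac{\geop}{(1-\geop)^2}B$. Two facts are used repeatedly: (i) the event $\{\prestate_t=\state_t=0\}$ — the fluorophore is bright both before and after the exposure, i.e.\ never exits — equals $\{Z\le G\}$, on which $B$ is $\mathrm{Poisson}(\geob\poi)$ (see the proof of the preceding lemma), so $\EE[B\,|\,Z\le G]=\geob\poi$; and (ii) $q_{00}=e^{-(1-\geob)\poi}$, i.e.\ $(1-\geob)\poi=-\log q_{00}$, by Lemma~\ref{lem:microparams}.

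For $\theta_1$ and $\theta_2$ I would first condition on $B$: from \eqref{eq:theta_1}, $\theta_1=\EE[Y]=\tfrac{\geop}{1-\geop}\EE[B]$, and since the burst lengths are independent of $(Z,G)$, $\EE[Y\,|\,\prestate_t=\state_t=0]=\tfrac{\geop}{1-\geop}\EE[B\,|\,Z\le G]=\tfrac{\geop}{1-\geop}\geob\poi$, so by \eqref{eq:theta_2}, $\theta_2=q_{00}\,\geob\poi/\EE[B]$. It remains to compute $\EE[B]$, which I would obtain either by differentiating $G^B$ in \eqref{eq:moment_generating_function_B} at $\xi=0$ — noting the cancellation of the $1-e^{(1-\geob)\poi}$ factors, which leaves $G^B(\xi)=q_{00}e^{-(1-e^\xi)\geob\poi}+\tfrac{1-\geob}{1-\geob e^\xi}\bigl(1-e^{-(1-\geob e^\xi)\poi}\bigr)$ — or from the tail-sum identity $\EE[B]=\sum_{k\ge1}\PP(Z\ge k)\PP(G\ge k)=\sum_{k\ge1}\geob^{k}\,\PP(Z\ge k)$ together with the Poisson generating function. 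Either route gives $\EE[B]=\tfrac{\geob}{1-\geob}(1-q_{00})$. Plugging in yields \eqref{eq:theta1alexa} at once, and $\theta_2=q_{00}\geob\poi\,\tfrac{1-\geob}{\geob(1-q_{00})}=\tfrac{q_{00}(1-\geob)\poi}{1-q_{00}}=-\tfrac{q_{00}\log q_{00}}{1-q_{00}}$ by fact (ii), which is \eqref{eq:theta2alexa}.

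For $\theta_3$ the key observation is that $\geop$ drops out: by the law of total variance $\Var[Y]=\tfrac{\geop}{(1-\geop)^2}\EE[B]+\tfrac{\geop^2}{(1-\geop)^2}\Var[B]$, and substituting this and $\theta_1=\tfrac{\geop}{1-\geop}\EE[B]$ into \eqref{eq:theta_3} collapses to
\begin{equation*}
  \theta_3=\frac{\Var[B]}{\EE[B]^2}+\frac{1}{\EE[B]}.
\end{equation*}
Hence the last task is $\Var[B]$: I would compute $\EE[B^2]$ by a second differentiation of $G^B$ at $\xi=0$ (or via $\EE[B(B-1)]=2\sum_{k\ge1}(k-1)\geob^{k}\PP(Z\ge k)$), which after simplification gives $\EE[B^2]=\tfrac{\geob}{(1-\geob)^2}\bigl[(1+\geob)(1-q_{00})-2\geob q_{00}(1-\geob)\poi\bigr]$, so $\Var[B]=\tfrac{\geob}{(1-\geob)^2}\bigl[(1+\geob)(1-q_{00})-2\geob q_{00}(1-\geob)\poi-\geob(1-q_{00})^2\bigr]$. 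Dividing by $\EE[B]^2=\tfrac{\geob^2(1-q_{00})^2}{(1-\geob)^2}$, adding $1/\EE[B]=\tfrac{1-\geob}{\geob(1-q_{00})}$ and simplifying leads to $\theta_3=\tfrac{2}{\geob(1-q_{00})}-\tfrac{2q_{00}(1-\geob)\poi}{(1-q_{00})^2}-1$. Writing $\tfrac{2}{\geob(1-q_{00})}=\tfrac{2}{1-q_{00}}\bigl(\tfrac{1-\geob}{\geob}+1\bigr)$ and recognizing $\tfrac{q_{00}(1-\geob)\poi}{1-q_{00}}=\tfrac{-q_{00}\log q_{00}}{1-q_{00}}=\theta_2$ (again by fact (ii) and \eqref{eq:theta2alexa}) turns this into exactly \eqref{eq:theta3alexa}.

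I expect the only real obstacle to be the bookkeeping in the second-moment computation for $B$ — carrying the $q_{00}$, $\geob$, $\poi$ terms through the two differentiations (or tail sums) and the subsequent cancellations — since the conceptual content is just the two reductions to moments of $B$ together with Lemma~\ref{lem:microparams}.
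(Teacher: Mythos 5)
Your proposal is correct and follows essentially the same route as the paper: both compute $\theta_1,\theta_2,\theta_3$ from the first two derivatives of the generating functions $G$ and $G_{00}$ at zero (equivalently, from the compound structure $Y=\sum_{i\le B}(\text{geometric})$ together with the first two moments of $B=\min(Z,G)$), and both eliminate $\poi$ via $q_{00}=e^{-(1-\geob)\poi}$ from Lemma \ref{lem:microparams}. The paper's proof merely states the results of these calculations, whereas you carry them out; your intermediate identity $\theta_3=\Var[B]/\EE[B]^2+1/\EE[B]$ and the moment formulas for $B$ check out and correctly reproduce (\ref{eq:theta1alexa})--(\ref{eq:theta3alexa}).
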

\begin{proof}
  All parameters can be expressed via derivatives of the generating functions
  $G$ and $G_{00}$ at zero. We find
\begin{equation*}
  \theta_1 = \EE[Y] = G'(0) = \frac{p}{1-p}\frac{q}{1-q}(1 - q_{00})
\end{equation*}
for the unconditioned expectation. Similarly, one can calculate
\begin{equation*}
  \theta_2 
    = \frac{q_{00}\,\EE[Y\,|\,Z\le Q]}{\theta_1} 
    = \frac{q_{00}}{\theta_1} \,G'_{00}(0) 
    = -\frac{q_{00}\log{q_{00}}}{1 - q_{00}}.
\end{equation*}
Finally, the parameter $\theta_3$ can be computed to read
\begin{equation*}
  \theta_3 = \frac{\Var \,Y}{\theta_1^2} - \frac{1}{\theta_1}
    = \frac{G''(0)}{\theta_1^2} - \frac{1}{\theta_1} - 1
    = \frac{2}{1-q_{00}}\left(\frac{1-\geob}{\geob} - \theta_2 + 1 \right) - 1,
\end{equation*}
which captures the variance of $\photons$ relative to $\theta_1$.
\end{proof}

Surprisingly, according to equations \eqref{eq:theta2alexa} and
\eqref{eq:q00_short_time}, the parameter $\theta_2$ only depends on the internal
parameters via $q_{00} = e^{-(1-\geob)\poi}$.
For reference, we note that this relation can be inverted by the formula
\begin{equation*}
  q_{00} = -\frac{\theta_2}{W_{-1}(-\theta_2 \,e^{-\theta_2})},
\end{equation*}
where $W_{-1}$ is the branch of the Lambert-$W$ function whose range
contains the interval $(-\infty,-1/e)$ of the real line (see, e.g.,
\cite{corless_lambertw_1996} for a definition and a review of some
properties of this function). 
This consideration shows that even though the inner model in Figure 
\ref{fig:model-alexa} depends on three independent parameters ($\geop$,
$\geob$, and $\poi$), only two free parameters, namely $\theta_1$ and
$\theta_3$, remain in the second-order description. This must be taken
into account when formulating and conducting the optimization routine for
the pseudo log-likelihood \eqref{eq:pseudologlikelihood} used to estimate
the model parameters.

\paragraph{Invariance under thinning.} 
There is another remarkable feature of this inner model choice
that deserves to be highlighted. In Section \ref{sec:thinning}, we derived
how thinning -- the independent loss of photons with a certain
probability -- affects the parameters $\theta = (\theta_1, \theta_2,
\theta_3)$, and we concluded that $\theta_1$ is transformed to $\theta_1'
= p_\mathrm{d}\, \theta_1$ while $\theta_2$ and $\theta_3$ are left
untouched. Interestingly, we can make a much stronger statement for our
model of Alexa 647. 

\begin{lemma}{}{alexainvariance}
  The parametric family of the distribution $Y$ for Alexa 647 is left invariant
  by thinning. More precisely, the thinned process $Y' \sim
    \mathrm{Bin}(Y, p_\mathrm{d})$ obeys equations \eqref{eq:negbin} and
    \eqref{eq:microrandomvariables} defining $Y$ if the parameter $p$ is
    replaced by 
  \begin{equation*}
    p' = \frac{p\,p_\mathrm{d}}{1 - p + p\,p_\mathrm{d}},
  \end{equation*}
  while $q$ and $\mu$ remain the same.
\end{lemma}

\begin{proof}
Let $D_1, D_2, \ldots \sim\mathrm{Ber}(p_\mathrm{d})$ be
independently Bernoulli distributed, and let $G^D(\xi)$ be the moment
generating function of $D_1$, 
\begin{equation}\label{eq:moment_generating_D}
  G^D(\xi) = \EE\big(e^{\xi D_1}) = 1 + p_\mathrm{d}\,(e^\xi - 1).
\end{equation}
The thinned photon count is given through $\thinned \sim \mathrm{Bin}(Y,
p_\mathrm{d})$, which can also be stated as
\begin{equation*}
  \thinned = \sum_{i = 1}^Y D_i.
\end{equation*}
The moment generating function of $\thinned$ is
\begin{equation*}
  G^{\thinned}\!(\xi) 
    = \EE\Big[\EE\big[e^{\xi Y'}\,|\,Y\big]\Big]
    = \EE\big[e^{\log(G^D(\xi))\photons}\big] = G\big(\log G^D(\xi)\big),
\end{equation*}
where $G$ denotes the generating function of $Y$. When we define the
transformed probability 
\begin{equation}\label{eq:probtransform}
  p' = \frac{p\,p_\mathrm{d}}{1 - p + p\,p_\mathrm{d}},
\end{equation} 
we can invoke expression \eqref{eq:moment_generating_function_Y} for $G$ and
equation \eqref{eq:moment_generating_D} for $G^D$ to find
\begin{equation*}
  G^{Y'}\!(\xi) 
    = G^B\!\left(\log\!\left(\frac{1-p}{1-p
      \big(1+p_\mathrm{d}(e^\xi - 1)\big)}\right)\right)
    = G^B\!\left(\log\!\left(\frac{1-p'}{1-p'e^\xi}\right)\right).
\end{equation*}
Comparison to equation \eqref{eq:moment_generating_function_Y} reveals that the
distribution of detected photons has the same parametric form as without
thinning, and the only effect is the monotone transformation
\eqref{eq:probtransform} of the geometric probability $p$. The same statement
holds for the generating functions conditioned on $Z > Q$ or $Z \le Q$, so the
parametric family of the photon statistics for our model of Alexa 647 is indeed
left invariant by thinning.
\end{proof}

\section{Outlook}
\label{sec:outlook}

Optical nanoscopy has evolved into a scientific junction point that drives
cutting-edge research in disciplines as diverse as optics, biochemistry, and
statistics. The work we presented in this article contributes to this
development by providing a new way to statistically model the temporal activity
pattern of fluorophores, which form the basis for fluorescence super-resolution
microscopy.  More than that, however, our work is meant to help expose the wide
spectrum of worthwhile statistical, computational, and mathematical questions
that are raised by current developments in this area.

One point of immediate interest is a better understanding of the photon emission
process and its approximation by a Gaussian process. Besides heuristic hints --
like the practical success for the purpose of estimating the molecule number --
we have not yet found analytical guarantees of how well the Gaussian
approximation captures the essential properties of the original HTMM
investigated in this article, or if other methods of estimation could offer
improvements.
Furthermore, we lack statistical results for the estimation process via maximum
likelihood estimation of the pseudo log-likelihood. In this light, the
derivation of (asymptotic) statements and confidence bounds for Gaussian
processes that are constrained like in our case pose an interesting challenge.
This kind of insights would also help address another prevalent question: to
what extent is it possible to infer the HTMM model parameters in situations
where not all fluorophores are bright in the beginning of the experiment,
meaning $\nu_0 < 1$? Conclusive results in this regard are still pending.
Reliable inference in this scenario, however, could be an important step towards
more flexibility in the design of quantitative super-resolution experiments.
A different aspect that is emphasized by our observations in Section
\ref{sec:estimation}, but that we only touched upon in passing, is the
beneficial effect of prior knowledge for the estimation performance. This
underscores the crucial role that Bayesian estimation methods routinely play for
the statistical modeling of biological system. For a simplified, binomial model
to count fluorophores, the advantage of Bayesian approaches was recently
demonstrated in \cite{schneider2018}.

Another set of problems that our work calls attention to, especially
\cite{laitenberger_2018}, is the modeling of dependencies in biomolecular
systems. Even though the description of complex molecules via Markov chains has
been proven to be highly successful, the appropriate statistical description of
interactions between them, and the resulting influence on measured data in
experiments, is largely unresolved. Future research may focus on this
question and investigate dependency mechanisms for hidden Markov models, like
the proposed HTMM, which are simple enough for analytical investigation but are
still able to capture dependency structures suggested by experimental
observations.

\begin{appendix}

\section{Lumped Markov Chains}
\label{sec:lumped}

Like for most biochemical compounds, the precise quantum physical state space of
fluorophores is likely to be much more involved than our relatively simple four
(outer) state model for Alexa 647 (see Section \ref{sec:alexa}).
Still, modeling fluorophores and other biochemical molecules as Markov chains
with a smaller number of states than actual quantum states works often very
well. In the following, we briefly outline that states with the same transition
rates can be combined into single states without losing the Markov property. Our
exposition follows \cite{kemeny_1960}.

Let $\states$ denote the finite state space of a stationary Markov chain
$X = (X_t)_{t\in\mathbb{N}}$ with transition Matrix $M$. Let 
$P = \{\mathcal{S}_1,\ldots, \mathcal{S}_k\}$ be a partition of $\states$, and
let $\pi\colon\states
\to P$ be the respective projection map, meaning $\pi(x) = \mathcal{S}_i$
iff $x\in\mathcal{S}_i$ for some $i = 1, \ldots, k$. The stochastic process
defined by $\pi\circ X$ is called the \emph{lumped process}, where only
states can be observed which are merged according to the partition $P$.
Furthermore, the chain $X$ is called \emph{lumpable} with respect to $P$ if 
$\pi\circ X$ is again a Markov chain for any initial distribution on $X$.
Similarly, $X$ is called \emph{weakly lumpable} if the lumped process is
Markov for some initial distribution on $X$.

\begin{theorem*}{Kemeny and Snell, 1960}
  A necessary and sufficient condition for a Markov chain to be lumpable with
  respect to a partition $P$ is that for every pair of sets $\states_i$ and
  $\states_j$ in $P$, the probabilities
  \begin{equation*}
    p_{\states_jx} = \sum_{z\in\states_j} M_{zx}
  \end{equation*}
  to transition to $\states_j$ from $x$ have the same value for every
  $x\in\states_i$. These common values form the transition matrix for the
  lumped chain $\pi\circ X$.
\end{theorem*}
\begin{proof}
  The proof can be found in \cite{kemeny_1960}, Theorem 6.3.2. Note that Kemeny
  and Snell consider row-stochastic transition matrices while we use
  column-stochastic ones.
\end{proof}

In the case of interest to us, the partition $P$ of $\mathcal{S}$ 
corresponds to groupings of physical fluorophore states that have (approximately)
the same transition rates to states of other groups. It is easy to check that
the conditions for the above theorem are satisfied in this setting and that the
lumped chain is thus Markov.

\begin{corollary*}{}
  Let $P$ be a partitioning $\{\mathcal{S}_1, \ldots, \mathcal{S}_k\}$ of states
  $\states$ such that for all $i= 1, \ldots, k$
  \begin{equation}\label{eq:sametransitions}
    M_{zx} = M_{zx'} \qquad \text{for all}\qquad
    \text{$x,x'\in\states_i$ and $z\in\states\setminus \states_i$}. 
  \end{equation}
  Then the Markov chain $X$ is lumpable with respect to $P$.
\end{corollary*}
\begin{proof}
For $i \neq j$, it immediately follows from \eqref{eq:sametransitions} that
\begin{equation*}
  p_{\states_jx} = \sum_{z\in\states_j} M_{zx} = 
  \sum_{z\in\states_j} M_{zx'} = p_{\states_jx'}
\end{equation*}
for all $x, x'$ in $\mathcal{S}_i$. If $i = j$, we similarly find
\begin{equation*}  
  p_{\states_i x}
    = 1 - \sum_{z\in\states\setminus\states_i} M_{zx}
    = 1 - \sum_{z\in\states\setminus\states_i} M_{zx'}
    = p_{\states_i x'}.
\end{equation*}
Therefore, $p_{\mathcal{S}_jx}$ does not
depend on $x\in\mathcal{S}_i$ for all $j = 1,\ldots,k$, and the theorem of
Kemeny and Snell can be applied.
\end{proof}

\section{Diagonalizability of Stochastic Matrices}
\label{sec:diag}

Let $\mathcal{M}$ be the set of stochastic $n\times n$ matrices with a fixed $n
\ge 2$. A (column) stochastic matrix $M \in \mathcal{M}$ has only non-negative entries
and its columns sum to 1. Therefore, the set $\mathcal{M}$ forms
an $n(n-1)$-dimensional submanifold of the unit cube $[0,1]^{n\times n}$. In the
following, we want to show that it is reasonable to assume that a randomly
picked matrix in $\mathcal{M}$ is diagonalizable with probability one. 

To this end, we consider the following projection map $\pi$ from $[0,1]^{n\times
n}$ to $\mathcal{M}$. We represent values in $[0,1]^{n\times n}$ by $v = (v_1,
\ldots, v_n)$, where each $v_i$ is a column vector in $[0,1]^n$. Let
$\tilde{v}_i$ denote the normed vector
\begin{equation*}
  \tilde{v}_i = \frac{v_i}{\sum_{j=1}^n v_{ij}}.
\end{equation*}
Then, for almost all $v\in [0, 1]^{n\times n}$ with respect to the Lebesgue
measure $\lambda$ on $[0, 1]^{n\times n}$, we can define
\begin{equation*}
  \pi(v) = (\tilde{v}_1, \ldots, \tilde{v}_n) \in \mathcal{M}.
\end{equation*}
Note that $\lambda$ is a probability measure on $[0, 1]^{n\times n}$.
Since $\pi$ is measurable (assuming the subspace Borel $\sigma$-field on
$\mathcal{M}$), we obtain the induced probability measure $\lambda^\pi$ on
$\mathcal{M}$, where 
\begin{equation*}
  \lambda^\pi (V) = \lambda\big(\pi^{-1}(V)\big)
\end{equation*}
for any measurable $V \subset \mathcal{M}$.

\begin{lemma}{}{diag}
  Let $N \subset \mathcal{M}$ be the set of stochastic matrices that are not
  diagonalizable. Then 
  \begin{equation*}
    \lambda^\pi (N) = 0,
  \end{equation*}
  i.e., any matrix in $\mathcal{M}$ is diagonalizable with probability one.
\end{lemma}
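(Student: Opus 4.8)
The plan is to reduce the statement to the classical fact that the zero set of a non-vanishing real polynomial has Lebesgue measure zero. The key observation is that a matrix with $n$ pairwise distinct eigenvalues is necessarily diagonalizable, so the non-diagonalizable set $N$ is contained in the locus
\[
  Z \;=\; \{\, M \in \mathcal{M} : \Delta(M) = 0 \,\},
\]
where $\Delta(M)$ is the discriminant of the characteristic polynomial $\chi_M(t) = \det(tI - M)$. Since the coefficients of $\chi_M$ are polynomials in the $n^2$ entries of $M$, and the discriminant of a monic polynomial is a polynomial in its coefficients, $\Delta$ extends to a polynomial function on $[0,1]^{n\times n}$; in particular $Z$ is closed, hence Borel. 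It therefore suffices to show $\lambda\big(\pi^{-1}(Z)\big) = 0$.

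First I would pull $\Delta$ back through the normalisation map $\pi$. Recall that $\pi$ is defined exactly on the open set $U = \{\, v : s_j(v) \neq 0 \text{ for all } j\,\}$, where $s_j(v) = \sum_{i=1}^n v_{ij}$ is the $j$-th column sum; the complement of $U$ is a finite union of coordinate hyperplanes and thus $\lambda$-null. On $U$, each entry of $\pi(v)$ equals the rational function $v_{ij}/s_j(v)$, so clearing denominators gives
\[
  \Delta\big(\pi(v)\big) \;=\; \frac{P(v)}{\prod_{j=1}^n s_j(v)^{d}}
\]
for a polynomial $P$ and a suitable exponent $d$. Hence $\pi^{-1}(Z) = \{\, v \in U : P(v) = 0 \,\} \subseteq \{P = 0\}$, and a fortiori $\pi^{-1}(N) \subseteq \{P = 0\}$. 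As soon as we know $P \not\equiv 0$, the set $\{P = 0\}$ is $\lambda$-null, so $\pi^{-1}(N)$ is contained in a null set; by completeness of Lebesgue measure it is itself measurable with $\lambda(\pi^{-1}(N)) = 0$, which is the claim $\lambda^\pi(N) = 0$.

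The one substantive check is that $P$ is not the zero polynomial, equivalently that some stochastic matrix possesses $n$ distinct eigenvalues. This is elementary: take the column-stochastic, upper-triangular matrix $B$ given by $B_{11} = 1$ and, for $2 \le j \le n$, $B_{jj} = \beta_j$, $B_{1j} = 1 - \beta_j$, all remaining entries zero, where $\beta_2, \ldots, \beta_n \in (0,1)$ are chosen pairwise distinct and all different from $1$. Its eigenvalues are $1, \beta_2, \ldots, \beta_n$, which are pairwise distinct, so $\Delta(B) \neq 0$. Since every column of $B$ already sums to $1$, we have $\pi(B) = B$ and $s_j(B) = 1$ for all $j$, whence $P(B) = \Delta(B) \neq 0$.

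I do not anticipate a real obstacle: the argument is routine once it is organised around the discriminant. The only points requiring attention are the bookkeeping for the rational map $\pi$ (choosing a common denominator, discarding the null set on which a column sum vanishes) and the appeal to completeness of $\lambda$, which lets the conclusion for the Borel set $Z$ pass to the possibly-non-Borel subset $N$ it contains.
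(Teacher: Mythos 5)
Your proof is correct, and it is organised around the same central object as the paper's: the discriminant $\Delta$ of the characteristic polynomial, whose non-vanishing forces $n$ distinct eigenvalues and hence diagonalizability, so that $N$ sits inside the zero locus of $\Delta\circ\pi$. The difference lies in how that zero locus is shown to be null. The paper keeps $\Delta\circ\pi$ as a composition of real-analytic maps on $(0,1)^{n\times n}$ and invokes the (cited) fact that a non-constant real-analytic function on a connected open domain has a Lebesgue-null zero set; you instead clear the column-sum denominators to exhibit $\Delta\circ\pi$ as $P/\prod_j s_j^d$ for an honest polynomial $P$, reducing everything to the elementary statement that a non-zero polynomial vanishes on a null set. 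Your route is more self-contained and avoids the real-analytic machinery; it also supplies an explicit witness (the upper-triangular stochastic matrix with distinct diagonal entries) for the non-vanishing of $P$, a point the paper dismisses as ``easy to check.'' You are additionally more careful about measurability, passing from the Borel set $\{\Delta=0\}$ to its subset $N$ via completeness of Lebesgue measure, whereas the paper implicitly assumes $N$ is measurable. Both arguments discard the same null boundary set (vanishing column sums in your case, the complement of $(0,1)^{n\times n}$ in the paper's), and both are sound.
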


\begin{proof}
  In the following, we only consider matrices in the open domain $(0,1)^{n\times
  n}$, which have strictly positive entries. This is sufficient, since the
  boundary $[0,1]^{n\times n} \setminus (0,1)^{n\times n}$ has 
  Lebesgue measure
  zero. We also define the stochastic matrices $\mathcal{M}^+ = \pi\big((0,
  1)^{n\times n}\big)$ with strictly positive entries.
  
  Let $\Delta: \RR^{n\times n} \to \RR$ be the function that maps $A \in \RR^{n\times
  n}$ to the discriminant of the characteristic polynomial of $A$. The
  characteristic polynomial of $A$ is given by 
  \begin{equation*}
    \lambda \mapsto \mathrm{det}\big(A - \lambda\mathbbm{1}_n\big),
  \end{equation*}
  where $\mathbbm{1}_n$ is the identity matrix in $n$ dimensions, and the
  discriminant of a polynomial with roots $\lambda_1, \lambda_2, ..., \lambda_n
  \in\CC$ is proportional to
  \begin{equation*}
    \prod_{i<j} \big(\lambda_i - \lambda_j\big)^2.
  \end{equation*}
  Since the coefficients of the characteristic polynomial are polynomials in the
  entries of $A$, and since the discriminant is a real polynomial of the
  coefficients (see, e.g., \cite{basu_2007}), $\Delta$ itself is a real
  polynomial in the entries of $A$. It is easy to check that $\Delta$ is not
  constant $0$ on $\mathcal{M}^+$.

  The map $\pi$ is real-analytical on $(0, 1)^{n\times n}$, and the map $\Delta$ is
  real-analytical as a function restricted to (the real-analytical manifold)
  $\mathcal{M}^+$. Consequently, the composition 
  \begin{equation*}
    \Delta\circ\pi: (0, 1)^{n\times n} \to \RR
  \end{equation*}
  is a non-constant real-analytic function. Due to the properties of the
  discriminant, we have that
  \begin{equation*}
    \Delta(M) \neq 0 \quad \Longleftrightarrow 
      \quad \text{$M$ has $n$ distinct eigenvalues},
  \end{equation*}
  where the latter property implies diagonalizability. Thus, if $N^+
  \subset \mathcal{M}^+$ denotes the set of non-diagonalizable positive
  stochastic matrices, we find
  \begin{equation*}
    \lambda^\pi(N^+) \le \lambda^\pi\big( \Delta^{-1}(0) \cap \mathcal{M}^+ \big) 
      = \lambda\big( (\Delta\circ\pi)^{-1}(0) \big) = 0.
  \end{equation*}
  The last equality follows from the fact that the set of roots of
  non-zero real-analytic functions on any connected open domain of $\RR^d$,
  $d\in\mathbb{N}$, has $\lambda^d$-Lebesgue measure zero \cite{mityagin_2015}.
  Since the difference between $N$ and $N^+$ is only a null set, the statement
  of the lemma follows.
\end{proof}

\section{Real and Positive Eigenvalues}
\label{sec:eig}

In Appendix \ref{sec:diag}, we showed that essentially every transition matrix $M$
in our fluorophore model $\singlemodels$ is diagonalizable.
Here, we argue that it is even plausible to assume real positive eigenvalues, i.e.,
that the spectrum $\sigma(M)$ of $M$ is contained in $[0,1]$
if the number $r$ of dark states is lower than or equal to $3$. The
corresponding matrices for $r = 1, 2, 3$ look like
\begin{equation}\label{eq:M123}
  M_1 = 
  \begin{pmatrix}
    a_1 & 0 \\
    a_2 & 1 
  \end{pmatrix},
  \qquad
  M_2 = 
  \begin{pmatrix}
    a_1 & a_2 & 0 \\
    a_3 & a_4 & 0 \\
    a_5 & a_6 & 1
  \end{pmatrix},
  \qquad
  M_3 = \begin{pmatrix}
    a_1    & a_2    & a_3    & 0 \\
    a_4    & a_5    & a_6    & 0 \\
    a_7    & a_8    & a_9    & 0 \\
    a_{10} & a_{11} & a_{12} & 1
  \end{pmatrix},
\end{equation}
where $a_i \in [0,1]$ for all $i = 1, ..., r(r+1)$, and where all columns sum up
to one.

The case $r = 1$ is trivial: the eigenvalues are $a_1$ and $1$.  For $r = 2$ and
$3$, we have to make further assumptions in order to conclude
$\sigma(M)\subset [0,1]$. If $r = 2$, we need that the diagonal
values of $M$ are large enough (see Lemma \ref{lem:eig2} below), and for $r = 3$
we additionally require that the diagonal values are sufficiently distinct, as is
made precise in Lemma \ref{lem:eig3}.  These two assumptions -- large and
distinct diagonal values -- are natural for our setting: the former means that
the outer states are usually stable over more than one frame, while the latter
holds if states with similar dwell times are merged for the Markovian description
of the fluorophore (see Remark \ref{rem:compoundstates} and Appendix
\ref{sec:lumped} for more details).

\begin{lemma}{}{eig2}
  The eigenvalues of the matrix $M_2$ given by \eqref{eq:M123} are real with
  absolute value $\le 1$. If all diagonal entries are $\ge 1/2$, they are
  additionally non-negative, such that
  \begin{equation*}
    \sigma(M_2) \subset [0,1].
  \end{equation*}
\end{lemma}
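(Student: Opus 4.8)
The plan is to use the block-triangular structure of $M_2$. Writing $A = \bigl(\begin{smallmatrix} a_1 & a_2 \\ a_3 & a_4 \end{smallmatrix}\bigr)$ for the top-left $2\times 2$ block (the non-bleached states), the matrix $M_2$ from \eqref{eq:M123} is block lower-triangular, so $\det(M_2 - \lambda\mathbbm{1}_3) = (1-\lambda)\det(A - \lambda\mathbbm{1}_2)$. Hence $\lambda = 1$ is always an eigenvalue (with eigenvector $(0,0,1)^\mathrm{T}$, reflecting the absorbing bleached state), and the other two eigenvalues are the roots $\lambda_\pm = \tfrac12\bigl((a_1+a_4) \pm \sqrt{(a_1-a_4)^2 + 4a_2a_3}\bigr)$ of $\lambda^2 - (a_1+a_4)\lambda + (a_1a_4 - a_2a_3)$. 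The radicand $(a_1-a_4)^2 + 4a_2a_3$ is precisely the discriminant of this quadratic, and it is $\ge 0$ because $a_2, a_3 \ge 0$; thus $\lambda_\pm$ are real.

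Next I bound $\lambda_\pm$ using the column-sum constraints rather than the crude bounds $a_i \le 1$. Since the columns of $M_2$ sum to $1$, we have $a_3 = 1 - a_1 - a_5 \le 1 - a_1$ and $a_2 = 1 - a_4 - a_6 \le 1 - a_4$, whence $a_2a_3 \le (1-a_1)(1-a_4)$. The key algebraic observation is the identity $(a_1+a_4)^2 - (a_1-a_4)^2 = 4a_1a_4$, which linearizes the square root after squaring. For $\lambda_+ \le 1$: since $2 - (a_1+a_4) \ge 0$, squaring $\sqrt{(a_1-a_4)^2 + 4a_2a_3} \le 2 - (a_1+a_4)$ reduces, via that identity, exactly to $a_2a_3 \le (1-a_1)(1-a_4)$, which we just verified. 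For $\lambda_- \ge -1$: squaring $\sqrt{(a_1-a_4)^2 + 4a_2a_3} \le 2 + (a_1+a_4)$ reduces to $a_2a_3 \le 1 + (a_1+a_4) + a_1a_4$, which is immediate from $a_2a_3 \le 1$. Together with the eigenvalue $\lambda = 1$, this establishes the first assertion.

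For the refinement, assume $a_1, a_4 \ge 1/2$; it remains to show $\lambda_- \ge 0$. Since $a_1 + a_4 > 0$, squaring $a_1 + a_4 \ge \sqrt{(a_1-a_4)^2 + 4a_2a_3}$ reduces (same identity) to $a_1a_4 \ge a_2a_3$. By the bound $a_2a_3 \le (1-a_1)(1-a_4)$ it suffices that $a_1a_4 \ge (1-a_1)(1-a_4)$, which on expanding is equivalent to $a_1 + a_4 \ge 1$ — exactly the hypothesis. Hence $\sigma(M_2) \subset [0,1]$.

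I do not foresee a genuine obstacle; the only points requiring care are to use the sharper inequalities $a_3 \le 1-a_1$ and $a_2 \le 1-a_4$ coming from the column sums together with $a_5, a_6 \ge 0$ (the weaker $a_i \le 1$ is not enough), and to track signs when squaring, which is harmless here since each right-hand side is manifestly nonnegative.
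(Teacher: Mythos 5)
Your proof is correct. The core of it — factoring the characteristic polynomial as $(1-\lambda)$ times the quadratic $\lambda^2-(a_1+a_4)\lambda+(a_1a_4-a_2a_3)$ and observing that its discriminant $(a_1-a_4)^2+4a_2a_3$ is nonnegative — is exactly the paper's argument for realness. Where you diverge is in the two bounding steps. For $|\lambda|\le 1$ the paper simply invokes the general fact that a stochastic matrix has spectral radius at most $1$, whereas you re-derive this by hand from the quadratic formula together with the column-sum inequalities $a_3\le 1-a_1$ and $a_2\le 1-a_4$; your route is more self-contained but longer. For non-negativity the paper evaluates the sign of the quadratic at $\lambda<0$ (each term is then nonnegative, using $a_1a_4\ge 1/4\ge a_2a_3$), while you bound $\lambda_-$ explicitly and reduce everything to $a_1a_4\ge(1-a_1)(1-a_4)$, i.e.\ $a_1+a_4\ge 1$. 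A small bonus of your version is that it proves non-negativity under this weaker hypothesis on the trace of the upper-left block, rather than requiring each diagonal entry to be at least $1/2$ separately. All squaring steps are legitimate since you check the right-hand sides are nonnegative before squaring.
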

\begin{proof}
  Since $M_2$ is a stochastic matrix, the absolute value of all eigenvalues of
  $M_2$ is bounded by 1. The characteristic polynomial $Q$ of $M_2$ is
  \begin{equation*}
    Q(\lambda) = (1 - \lambda) \,[(a_1 - \lambda)(a_4 - \lambda) - a_2a_3] = (1-\lambda)
    \,[\lambda^2 - (a_1 + a_4)\,\lambda + (a_1a_4 - a_2a_3)].
  \end{equation*}
  Clearly, $\lambda = 1$ is an eigenvalue.
  In order to ensure that all other eigenvalues are also real, we investigate 
  the discriminant $\Delta$ of the second factor above (in square brackets). It
  is given by
  \begin{equation*}
    \Delta = (a_1+a_4)^2 - 4\,(a_1a_4 - a_2a_3) = (a_1 - a_4)^2 + 4 \,a_2a_3,
  \end{equation*}
  which satisfies $\Delta > 0$, since all entries are non-negative.
  Consequently, all eigenvalues are real-valued. If additionally $a_1, a_4 \ge
  1/2$ (and thus $a_2,a_3\le 1/2$), we find
  \begin{equation*}
    Q(\lambda) > a_1a_4 - a_2a_3 \ge 0
  \end{equation*}
  for $\lambda < 0$. Therefore, all eigenvalues must be non-negative.
\end{proof}

The most important application of our theory is the Alexa 647 model described in
Section \ref{sec:alexa}. Here, $r = 3$, which leads to a transition matrix of the
form $M_3$. Unlike for $r = 1, 2$, these matrices do not always have real
eigenvalues. 
The following lemma provides an analytical criterion for all eigenvalues to be real.

\begin{lemma}{}{eig3criterion}
  Assume that $M_3$ as given in \eqref{eq:M123} is diagonalizable and that its upper
  left $3\times 3$ submatrix is irreducible.
  Then, besides the value $1$, $M_3$ has a second real eigenvalue
  $\lambda_0\in(0, 1]$ that is larger than or equal to the diagonal entries,
  \begin{equation*}
    \max\{a_1, a_5, a_9\} \le \lambda_0.
  \end{equation*}
  The remaining two eigenvalues of $M$ are real if and only if
  \begin{equation}\label{eq:realcondition}
    \big(\bar{a}_1 + \bar{a}_5 + 
         \bar{a}_9\big)^2 + 4\,\big(a_6a_8 + a_2 a_4 + a_3 a_7 - 
         \bar{a}_1\bar{a}_5 - \bar{a}_1\bar{a}_9 -
         \bar{a}_5\bar{a}_9\big) \ge 0,
  \end{equation}
  where $\bar{a}_i = a_i - \lambda_0$.
\end{lemma}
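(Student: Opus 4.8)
The plan is to reduce the eigenvalue question for the $4\times 4$ matrix $M_3$ to its upper-left $3\times 3$ block and then invoke Perron--Frobenius theory together with a single diagonal shift. First I would note that, because the fourth column of $M_3$ equals $(0,0,0,1)^{\mathrm T}$, the matrix is block lower-triangular, so $\det(\lambda\mathbbm{1}_4 - M_3) = (\lambda - 1)\,\det(\lambda\mathbbm{1}_3 - A)$, where $A$ is the submatrix with entries $a_1,\dots,a_9$. Hence, apart from the eigenvalue $1$, the spectrum of $M_3$ is the spectrum of $A$. Now $A$ is entrywise non-negative, irreducible by hypothesis, and its column sums are $1-a_{10},\,1-a_{11},\,1-a_{12}\le 1$, so $A$ is column-substochastic. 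By the Perron--Frobenius theorem its spectral radius $\lambda_0 = \rho(A)$ is a simple real eigenvalue, and $\lambda_0 > 0$ because an irreducible non-negative matrix is not nilpotent; bounding $\rho(A)$ by the largest column sum gives $\lambda_0 \le 1$, so $\lambda_0 \in (0,1]$. Since $A \ge \mathrm{diag}(a_1,a_5,a_9)$ entrywise and the spectral radius is monotone on non-negative matrices, $\lambda_0 \ge \rho\big(\mathrm{diag}(a_1,a_5,a_9)\big) = \max\{a_1,a_5,a_9\}$. This settles the first assertion.

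For the two remaining eigenvalues I would pass to the shifted matrix $\tilde A = A - \lambda_0\mathbbm{1}_3$, whose off-diagonal entries coincide with those of $A$ (namely $a_2, a_4, a_3, a_7, a_6, a_8$) while the diagonal entries become $\bar a_i = a_i - \lambda_0$. Since $\lambda_0$ is an eigenvalue of $A$, we have $\det\tilde A = 0$, so the characteristic polynomial of $\tilde A$ has vanishing constant term and factors as $\mu^3 - \operatorname{tr}(\tilde A)\mu^2 + E_2(\tilde A)\mu = \mu\bigl(\mu^2 - \operatorname{tr}(\tilde A)\mu + E_2(\tilde A)\bigr)$, where $E_2(\tilde A)$ is the sum of the three $2\times 2$ principal minors of $\tilde A$. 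The two nonzero roots of this cubic are precisely the shifts by $\lambda_0$ of the two eigenvalues of $M_3$ other than $1$ and $\lambda_0$, so those eigenvalues are real if and only if the quadratic factor has non-negative discriminant $\operatorname{tr}(\tilde A)^2 - 4 E_2(\tilde A)$. A direct computation gives $\operatorname{tr}(\tilde A) = \bar a_1 + \bar a_5 + \bar a_9$ and $E_2(\tilde A) = \bar a_1\bar a_5 + \bar a_1\bar a_9 + \bar a_5\bar a_9 - a_2a_4 - a_3a_7 - a_6a_8$, and inserting these expressions into $\operatorname{tr}(\tilde A)^2 - 4E_2(\tilde A) \ge 0$ reproduces condition \eqref{eq:realcondition} verbatim.

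The conceptual content is essentially just Perron--Frobenius, which is where irreducibility enters (simplicity and positivity of $\lambda_0$, plus the monotonicity bound for the diagonal entries); the diagonalizability hypothesis is inherited from the standing assumptions and is not needed for this particular argument. The only step requiring care is the last one: one must recognize that subtracting $\lambda_0\mathbbm{1}_3$ is exactly what turns the characteristic cubic into $\mu$ times the quadratic whose discriminant is the displayed expression, and then check that the off-diagonal products $a_6a_8$, $a_2a_4$, $a_3a_7$ enter with the signs recorded in \eqref{eq:realcondition}. That verification is routine bookkeeping rather than a genuine obstacle.
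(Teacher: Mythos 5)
Your proposal is correct and follows essentially the same route as the paper: reduce to the upper-left $3\times 3$ block via the block-triangular structure, apply Perron--Frobenius to get $\lambda_0\in(0,1]$, shift by $\lambda_0$ so the characteristic cubic loses its constant term, and read off condition \eqref{eq:realcondition} as the discriminant of the remaining quadratic factor. The one place you genuinely diverge is the bound $\max\{a_1,a_5,a_9\}\le\lambda_0$, which you obtain from monotonicity of the spectral radius on non-negative matrices rather than the paper's operator-norm argument; your version is the more robust one, since it does not rely on identifying the spectral radius of a diagonalizable matrix with its operator norm, and it correctly shows that diagonalizability is not needed for this lemma.
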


\begin{proof}
  Let $M'$ be the upper left $3\times 3$ submatrix of $M_3$. The characteristic
  polynomial of $M_3$ is given by
  \begin{equation}\label{eq:charpoly3}
      \mathrm{det}(M_3 - \lambda \mathbbm{1}_{4}) 
      = -\,(1-\lambda) \cdot \mathrm{det}(M' - \lambda \mathbbm{1}_{3}),
  \end{equation}
  where $\mathbbm{1}_{3}$ and $\mathbbm{1}_{4}$ are the identity matrices in
  three and four dimensions, respectively. We can therefore restrict our study
  to the eigenvalues of $M'$, which is an irreducible matrix with non-negative
  entries by assumption. This allows us to apply Perron-Frobenius theory
  \cite{meyer_2000}. In particular, $M'$ has a real eigenvalue $0 < \lambda_0 \le 1$,
  such that all other eigenvalues of $M'$ are smaller in absolute value. 
  Since $M'$ is diagonalizable, this largest eigenvalue $\lambda_0$ is equal to the
  operator norm of $M'$. In particular, $\lambda_0$ is thus larger than all diagonal
  entries of $M'$,
  \begin{equation}\label{eq:l0greater}
    \max\{a_1, a_5, a_9\} \le \lambda_0,
  \end{equation}
  because $a_1 = e_1^\mathrm{T}M'e^{}_1 \le \lambda_0$, where $e_1 = (1, 0,
  0)^\mathrm{T}$, and analog relations hold for $a_5$ and $a_9$.
  We denote the characteristic polynomial of $M'$ by $Q$, and we also define the
  shifted polynomial
  \begin{equation*}
    H(\mu) = Q(\mu + \lambda_0).
  \end{equation*}
  Then $H(0) = Q(\lambda_0) = 0$, since $\lambda_0$ is an eigenvalue of $M'$. Setting
  $\bar{a}_i = a_i - \lambda_0$, Sarrus' rule for the determinant of
  $3\times 3$ matrices yields
  \begin{align*}
    H(\mu) 
      &= \mathrm{det}(M' - (\mu + \lambda_0)\,\mathbbm{1}) \nonumber\\
      &= (\bar{a}_1 - \mu) (\bar{a}_5 - \mu) (\bar{a}_9 - \mu) + a_2 a_6 a_7 + a_3 a_4 a_8
        - a_6 a_8 (\bar{a}_1-\mu) - a_3 a_7 (\bar{a}_5 - \mu) - a_2 a_4 (\bar{a}_9 - \mu) \nonumber\\
      &= -\mu^3 + (\bar{a}_1 + \bar{a}_5 + \bar{a}_9) \mu^2 + (a_6a_8 + a_2 a_4 + a_3 a_7 - \bar{a}_1\bar{a}_5 - \bar{a}_1\bar{a}_9 - \bar{a}_5\bar{a}_9)\mu,
  \end{align*}
  where we harnessed in the last step that the constant part of the polynomial
  $H$ vanishes due to $H(0) = 0$. The polynomial $H$ (and thus $Q$) has
  exclusively real roots if and only if the discriminant of $H$ is
  non-negative. The discriminant of a polynomial $ax^3 + bx^2 + cx + d$ of order
  three is given by
  \begin{equation*}
    \Delta = b^2c^2 - 4ac^3 -4b^3d - 27a^2d^2 + 18abcd.
  \end{equation*}
  In our case, where $d = 0$ and $a = -1$, we find
  \begin{equation*}
    \Delta/c^2 = b^2 - 4ac = \big(\bar{a}_1 + \bar{a}_5 + \bar{a}_9\big)^2 + 4\,\big(a_6a_8 + a_2 a_4 + a_3 a_7 - \bar{a}_1\bar{a}_5 - \bar{a}_1\bar{a}_9 - \bar{a}_5\bar{a}_9\big)
  \end{equation*}
  for the relevant part of the discriminant of $H$, since $c^2 > 0$ can be
  assumed (if $c = 0$, $\Delta = 0$ follows and all eigenvalues are real).
\end{proof}

It is easy to find examples, where condition \eqref{eq:realcondition} is
violated, and eigenvalues are complex. For instance, the matrix
\begin{equation*}
  M_3 =
  \begin{pmatrix}
    0.8 & 0   & 0.1 & 0 \\
    0.1 & 0.8 & 0   & 0 \\
    0.1 & 0.1 & 0.8 & 0 \\
    0   & 0.1 & 0.1 & 1 
  \end{pmatrix}
\end{equation*}
has the eigenvalues $\lambda \approx 1, 0.93, 0.73\pm0.06\,i$. The upper left
$3\times 3$ block in this matrix is irreducible, but the value in condition
\eqref{eq:realcondition} is about $-0.013$.
Therefore, we need another restriction for $M_3$ in order to establish real
eigenvalues.

Fortunately, a gap condition for the diagonal of $M_3$ does the trick.
Assume that the diagonal values $a_1, a_5, a_9$ of $M_3$ are distinct. We order
these values by magnitude and denote them by $d_1 > d_2 > d_3$.
Let 
\begin{equation}\label{eq:mudef}
  \mu_1 = \frac{\lambda_0 - d_1}{\lambda_0 - d_2},
    \qquad
  \mu_2 = \frac{\lambda_0 - d_2}{\lambda_0 - d_3},
\end{equation}
where $\lambda_0\in(0, 1]$ as in Lemma \ref{lem:eig3criterion}. Due to
\eqref{eq:l0greater}, we find that $\mu_1, \mu_2 \in [0, 1)$. The 
condition we need for a real spectrum is
\begin{equation}\label{eq:diversity}
  \mu_2^2 (1 - \mu_1)^2 \ge 2\mu_2(1 + \mu_1) - 1.
\end{equation}

\begin{lemma}{}{eig3}
  If condition \eqref{eq:diversity} is satisfied and Lemma
  \ref{lem:eig3criterion} can be applied, the eigenvalues of the matrix $M_3$
  given by \eqref{eq:M123} are all real, with absolute values bounded by one.
  If all diagonal entries of $M_3$ are $\ge 2/3$, it additionally holds that the
  spectrum is non-negative,
  \begin{equation*}
    \sigma(M_3) \subset [0, 1].
  \end{equation*}
\end{lemma}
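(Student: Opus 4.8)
The plan is to lean entirely on Lemma \ref{lem:eig3criterion}. That lemma already identifies the spectrum of $M_3$ as $1$, the Perron eigenvalue $\lambda_0\in(0,1]$ of the upper-left block $M'$, and two further eigenvalues $\lambda_0+\mu_\pm$, where $\mu_\pm$ are the roots of the quadratic factor of the shifted characteristic polynomial $H$; moreover it states that these last two eigenvalues are real exactly when the expression in \eqref{eq:realcondition} is non-negative. So the whole content of the first assertion is the implication \eqref{eq:diversity} $\Rightarrow$ \eqref{eq:realcondition}, and the rest is routine.

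For that implication I would first use the sign information $\bar a_i = a_i - \lambda_0 \le 0$, which holds by \eqref{eq:l0greater}. The products $a_6a_8$, $a_2a_4$, $a_3a_7$ are non-negative and occur in \eqref{eq:realcondition} with a positive sign, so it suffices to prove the stronger diagonal-only inequality $(\bar a_1+\bar a_5+\bar a_9)^2 \ge 4(\bar a_1\bar a_5+\bar a_1\bar a_9+\bar a_5\bar a_9)$, equivalently $\bar a_1^2+\bar a_5^2+\bar a_9^2 \ge 2(\bar a_1\bar a_5+\bar a_1\bar a_9+\bar a_5\bar a_9)$. Since this last expression is symmetric in the three quantities and $\bar a_i^2=(\lambda_0-a_i)^2$, I may list the numbers $\lambda_0-a_1,\lambda_0-a_5,\lambda_0-a_9$ in increasing order as $\lambda_0-d_1<\lambda_0-d_2<\lambda_0-d_3$, all positive since $\lambda_0\ge d_1>d_2>d_3$, and divide through by $(\lambda_0-d_3)^2$. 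Writing $\mu_1=(\lambda_0-d_1)/(\lambda_0-d_2)$, $\mu_2=(\lambda_0-d_2)/(\lambda_0-d_3)$ and noting $(\lambda_0-d_1)/(\lambda_0-d_3)=\mu_1\mu_2$, a short expansion collapses the inequality to $\mu_2^2(1-\mu_1)^2 \ge 2\mu_2(1+\mu_1)-1$, which is precisely \eqref{eq:diversity}. I expect this rewriting to be the only genuine calculation; it is elementary, but one must be careful that the denominators are positive and that the symmetric expression does not care how the $a_i$ are matched to the ordered $d_j$.

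Once reality is in hand, the bound $|\lambda|\le 1$ for every eigenvalue is immediate, since $M_3$ is stochastic and hence has spectral radius $1$. For the final claim, assume $a_1,a_5,a_9\ge 2/3$. Each column of $M_3$ sums to one, so the off-diagonal mass in column $j$ of $M'$ is at most $1-a_{jj}\le 1/3$; applying Gershgorin's theorem to the transpose of $M'$ (so that column sums are the relevant quantity) puts every eigenvalue of $M'$ in a disc centred at a diagonal entry $\ge 2/3$ of radius $\le 1/3$, hence with real part $\ge 1/3>0$. As those eigenvalues have already been shown to be real, they lie in $[1/3,1]$, and together with the eigenvalue $1$ of $M_3$ this yields $\sigma(M_3)\subset[0,1]$. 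The main obstacle is just the bookkeeping in matching the discriminant condition to \eqref{eq:diversity}; everything else is standard stochastic-matrix and Gershgorin-disc reasoning.
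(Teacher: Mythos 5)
Your argument for the reality of the spectrum is essentially identical to the paper's: both drop the non-negative products $a_6a_8+a_2a_4+a_3a_7$ from \eqref{eq:realcondition}, reduce to the diagonal-only inequality $(\bar a_1+\bar a_5+\bar a_9)^2\ge 4(\bar a_1\bar a_5+\bar a_1\bar a_9+\bar a_5\bar a_9)$, and then normalize by $(\lambda_0-d_3)^2$ to recover \eqref{eq:diversity}; your algebra checks out (and the only denominator that could vanish, $\lambda_0-d_1$, never appears as a denominator, so the edge case $\lambda_0=d_1$ is harmless). Where you genuinely diverge is the non-negativity step. The paper argues by contradiction with a quadratic form: for a real normalized eigenvector $v$ of $M'$ to a negative eigenvalue it bounds $v^{\mathrm T}M'v\ge \frac23\sum_x v_x^2-\frac13\sum_{x\ne z}|v_x||v_z|\ge 0$ using Jensen's inequality, contradicting $v^{\mathrm T}M'v=\lambda<0$. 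You instead apply Gershgorin to the transpose of $M'$, using that the off-diagonal column mass is at most $1-a_{jj}\le 1/3$, to conclude every eigenvalue has real part at least $2/3-1/3=1/3$. Both are correct; your route is shorter, does not need the eigenvector/Jensen manipulation, yields the quantitatively stronger bound $\sigma(M')\subset[1/3,1]$, and gives positivity of the real part even before reality is invoked, whereas the paper's quadratic-form argument relies on the reality established in the first half to have a real eigenvector at all.
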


\begin{remark}{gap condition}{gap}
  Inequality \eqref{eq:diversity} really poses a gap condition on the
  diagonal values, and therefore on the lifetimes of the outer states of the
  fluorophore. If, for example, $\mu_2 \le 1/4$, or if both $\mu_1$ and $\mu_2$
  are smaller than $3/8$, condition \eqref{eq:diversity} will hold. Values of
  $\mu_1$ and $\mu_2$ close to $1$, on the other hand, violate the inequality.
  In this context, further note that condition \eqref{eq:diversity} is
  sufficient for Lemma \ref{lem:eig3} even if we replace $\lambda_0$ by $1$ in
  definition \eqref{eq:mudef} of $\mu_1$ and $\mu_2$. We can therefore conclude
  that $\sigma(M_3) \subset [0, 1]$ is guaranteed as long as the probabilities
  $1 - d_i$ to leave an outer state in a time step are sufficiently diverse,
  i.e., the respective dwell times must be on different timescales. For
  instance, diagonal values
  \begin{equation*}
    d_1 = 0.975, \qquad d_2 = 0.95, \qquad d_3 = 0.8,
  \end{equation*}
  which correspond to lifetimes of about 40, 20, and 5 microscopy frames on
  average, yield $\mu_1 = 1/2$ and $\mu_2 = 1/4$ for $\lambda_0 = 1$. 
  Thus, condition \eqref{eq:diversity} is satisfied and any transition
  matrix $M_3$ with these diagonal values meets the gap requirement for
  Lemma \ref{lem:eig3}.
\end{remark}

\begin{proof}[Proof of Lemma \ref{lem:eig3}]
  In order to show that all eigenvalues are real, we consult Lemma
  \ref{lem:eig3criterion}. Condition \eqref{eq:realcondition} is certainly
  satisfied if
  \begin{equation*}
    \big(\bar{a}_1 + \bar{a}_5 + \bar{a}_9\big)^2 - 
    4\,\big(\bar{a}_1\bar{a}_5 + \bar{a}_1\bar{a}_9 + \bar{a}_5 \bar{a}_9\big) \ge 0.
  \end{equation*}
  Using the definition of $d_i$ as ordered diagonal values of $M_3$ for
  $i=1,\ldots,3$, this is equivalent to
  \begin{equation*}
    (1 + \mu_2 + \mu_1\mu_2)^2 - 4\,(\mu_2 + \mu_1\mu_2 + \mu_1\mu_2^2) \ge 0,
  \end{equation*}
  where $\mu_1$ and $\mu_2$ are defined as in \eqref{eq:mudef}. This inequality
  is equivalent to \eqref{eq:diversity}, which can be shown by straightforward
  computation. 
  
  It remains to be shown that all eigenvalues are non-negative if each diagonal
  entry of $M$ is $\ge 2/3$. For this, assume
  that there would be an eigenvalue $\lambda < 0$ of $M_3$. Due to the form
  \eqref{eq:charpoly3} of the characteristic polynomial of $M_3$, $\lambda$ must
  also be an eigenvalue of $M'$, the upper left $3\times 3$ submatrix of $M_3$.
  Let $v\in\RR^3$ denote a normalized eigenvector of $M'$ to $\lambda$. Then
  \begin{equation}\label{eq:neglambda}
    v^T M' v = \lambda < 0
  \end{equation}
  must hold. On the other hand, if we denote the non-bleached states by $\states
  = \{0,1,2\}$, we have
  \begin{equation*}
    \left(\sum_{x\in\states} |v_x| \right)^2 \le 3 \sum_{x\in\states} v_x^2 = 3
  \end{equation*}
  by Jensen's inequality, and consequently
  \begin{equation}\label{eq:sumineq}
    \sum_{x,z\in\states, x\neq z} |v_x||v_z|
      = \left(\sum_{x\in\states} |v_x| \right)^2 - \sum_{x\in\states} v_x^2
      \le 2.
  \end{equation}
  We can thus establish
  \begin{align*}
    v^T M' v 
      &= \sum_{x,z\in\states} v_x M'_{xz} v_z \\
      &= \sum_{x\in\states}   v_x M'_{xx} v_x 
         + \sum_{x,z\in\states, x\neq z} v_x M'_{xz} v_z \\
      &\ge \frac{2}{3} \sum_{x\in\states} v_x^2 
         - \frac{1}{3} \sum_{x,z\in\states, x\neq z} |v_x| |v_z| \\
      &= 0,
  \end{align*}
  where we used that $M'_{xx} \ge 2/3$ and hence $M'_{xz} \le 1/3$ for $x\neq z$,
  and applied inequality \eqref{eq:sumineq}.
  This contradicts \eqref{eq:neglambda}, which is why all eigenvalues must be
  positive.
\end{proof}

\end{appendix}

\section*{Acknowledgments}
This work was in part supported by the German Science Foundation (DFG) through
grant CRC 755 “Nanoscale Photonic Imaging”, projects A4 and A6, RTG 2088
“Discovering Structure in Complex Data”, and Germany's Excellence Strategy EXC
2067/1-390729940. We are furthermore grateful to Mira Jürgens for computational
assistance and proofreading.

\printbibliography

\end{document}